\newcommand{\dk}[1]{\textcolor{red}{#1}}
\renewcommand{\dk}[1]{#1}
\newcommand{\parhead}[1]{{\textbf{#1.}\xspace}}
\newcommand{\DN}{{Dynamic Network}\xspace}
\newcommand{\DNs}{{Dynamic Networks}\xspace}
\newcommand{\ADNs}{{Anonymous Dynamic Networks}\xspace}
\newcommand{\MC}{\textsc{Methodical Counting}\xspace}
\newcommand{\MMC}{\textsc{Methodical Multi-Counting}\xspace}
\newcommand{\nameD}{MMC\xspace}
\newcommand{\ldr}{{supervisor}\xspace}
\newcommand{\ldrs}{{{\ldr}s}\xspace}
\newcommand{\notldr}{{supervised}\xspace}
\newcommand{\notldrs}{{\notldr nodes}\xspace}
\newcommand{\cT}{\mathcal{T}}
\newcommand{\probname}{\textsc{All-to-all Communication}\xspace}
\newcommand{\algname}{\textsc{RMC}\xspace}
\newcommand{\algfullname}{\textsc{Restricted Methodical Counting}\xspace}
\newcommand{\allallfull}{\textsc{All-to-All Communication}\xspace}
\newcommand{\mult}{\textsc{Multiplicity}\xspace}
\newcommand{\cc}{Congested Clique\xspace}
\newcommand{\adcs}{Anonymous Dynamic Congested Systems\xspace}
\newcommand{\ADCS}{ADCS\xspace}
\newcommand{\WSM}{Weak Sensor Model\xspace}
\newcommand{\iso}{{isoperimetric~number}\xspace}
\newcommand{\rumor}{{input\ message}\xspace}
\newcommand{\rumors}{{input\ messages}\xspace}
\renewcommand{\vec}[1]{\mathbf{#1}}
\newtheorem{theorem}{Theorem}
\newtheorem{lemma}{Lemma}
\newtheorem{corollary}{Corollary}
\newtheorem{observation}{Observation}
\newtheorem{claim}{Claim}
\newcommand{\remove}[1]{}
\begin{document}

\title{Efficient Distributed Computations in \adcs with Opportunistic Connectivity~\thanks{This work was partially supported by Pace University SRC grant and Kenan fund.}}

\author{
Dariusz R. Kowalski\\
Augusta University,\\ Augusta, GA, USA,\\ 
\href{mailto:dkowalski@augusta.edu}{dkowalski@augusta.edu}
\and
Miguel A. Mosteiro\\
Pace University,\\ New York, NY, USA.\\
\href{mailto:mmosteiro@pace.edu}{mmosteiro@pace.edu}
}

\date{}

\maketitle

\begin{abstract}
In this work we address the question of efficiency of distributed computing in anonymous, congested and highly dynamic and not-always-connected networks/systems. 
More precisely, the system consists of an unknown number of anonymous nodes with congestion on links and local computation. Links can change arbitrarily from round to round, with only limitation that the union of 
any $\cT$ consecutive networks must form a temporarily
connected (multi-)graph on all nodes (knowledge of $\cT$ is the only information the nodes require, otherwise the communication would not be feasible).
Nodes do not have any IDs, only some number $\ell$ of them 
have a bit distinguishing them from nodes without such a bit. In each round a node can send and receive messages from its current neighbors. 
Links and nodes are congested, in the sense that the length of messages and local cache memory for local computation is (asymptotically) logarithmic.

All-to-all communication is a fundamental principle in distributed computing -- it assumes that each node has an \rumor to be delivered to all other nodes. Without loss of generality, the size of each \rumor is logarithmic to fit in the link and node congestion assumption; otherwise, they could be split in logarithmic batches and considered one-by-one. Because of anonymity, each node needs to receive only a set of all \rumors, each accompanied by a number of initiating nodes (message multiplicity). We prove that this task can be done in time polynomial in the (initally unknown) number of nodes $n$ and in the lower bound on the isoperimetric numbers of dynamically evolving graphs.
This allows to efficiently emulate a popular Congested Clique model on top of \adcs (\ADCS) with Opportunistic Connectivity, even if the number of nodes may arbitrarily change in the beginning of emulation.

\end{abstract}


\section{Introduction}
\label{sec:intro}

The \cc~\cite{lotker2005minimum} is a standard synchronous message passing model of distributed computation. In such model there are $n$ labeled nodes that synchronously, in \emph{rounds}, communicate among them. In each round, each node may send a message of $O(\log n)$ bits to each of the other nodes and perform some local computations. Performance is measured in rounds of communication. 
But can we efficiently run \cc algorithms
if the nodes are anonymous (e.g., for the sake of privacy), of restricted capacity (e.g., logarithmic local memory) and their number is unknown?
Even more, what if links may be available only occasionally, depending on an adversarial schedule, and the only necessary 
requirement
is that 
the union of
any $\cT$ consecutive
networks,
for some known parameter $\cT$, forms a temporally connected multi-graph? Could deterministic distributed computing on such Congested Clique be made efficient? In this work we address and answer this question in the affirmative by presenting an \probname algorithm to emulate one round of \cc under a harsh \DN model that includes all those restrictions.

Distributed computations in the \cc model have attracted a lot of attention recently~\cite{becker2015brief,censor2019algebraic,censor2020derandomizing,dolev2012tri,drucker2014power,ghaffari2016improved,ghaffari2016mst,hegeman2015toward,hegeman2015lessons,hegeman2014near,henzinger2019deterministic,korhonen2016deterministic,lenzen2013optimal,nanongkai2014distributed,patt2011round,ghaffari2017distributed}, the reason being that all-to-all communication is becoming a frequent feature of modern distributed systems. However, algorithms developed for such environments are not compatible with many systems where not all nodes are directly connected. Thus, in order to apply the wealth of \cc research to those systems, protocols to emulate a single round of all-to-all communication in congested multi-hop networks are needed.

System restrictions may go far beyond multi-hop communication. Indeed, in some environments network connectivity may be highly dynamic due to mobility or unreliability (e.g. \DNs~\cite{kuhn2010distributed,gilbert2010r,kuhn2011dynamic}). Also, node identifiers may not be feasible in massive low-cost platforms, or one may not want to reveal the identifiers due to privacy concerns (e.g. \ADNs~\cite{spirakis,KowalskiMicalp18journal,KowalskiMicalp19journal,conscious,LunaB15}). 
Additionally, in a system where message size is limited (as in the \cc) it is natural to apply the same limitation to memory access, specially when nodes are expected to be low-cost devices (e.g. \WSM~\cite{FCFM:hoptimalityJournal,FMT:detselection}).

Following up on the CONGEST model~\cite{peleg2000distributed}, in the \cc nodes initially know their neighbors. Being a clique implies that they know the total number of nodes. 
However, when implementing the Congested Clique in a multi-hop topology, nodes may not be connected to all other nodes. 
Moreover, in face of dynamicity and anonymity, each node does not even know the number of its neighbors before receiving messages from them. 
Thus, in this work we assume that neither the exact number, nor even an upper bound on the total number of nodes, is initially known. 
Even more, the network does not need to be connected at all, as long as a union of $\cT$ consecutive networks is temporally connected, for some given parameter $\cT$.

To the best of our knowledge, our \adcs model described above is the most challenging for distributed computing among the existing models in the \DNs literature.

\parhead{Our Contributions and Approach}
In this work, we present a deterministic protocol to emulate a single round of the \cc on \adcs (\ADCS) with Opportunistic Connectivity. That is, on systems where nodes lack identifiers, message size and memory access are limited to $O(\log n)$ bits, where $n$ is the number of nodes (initially unknown), and the communication network is multi-hop and adversarially dynamic, allowing even disconnection with some limitations. The overhead introduced by our emulator is a polynomial function of $n$ and, if known, a lower bound $i_{min}$ on the isoperimetric numbers of dynamic networks,\footnote{%
Formally, it is a lower bound on an isoperimetric number of the product of $\cT$ consecutive networks, as defined formally in Section~\ref{sec:convtime}.
}
making our bounds tighter for networks with good expansion -- more precisely, the $i_{min}$ in the denominator of the time complexity formula could be as large as $\Theta(n)$ for networks with good expansion. See Table~\ref{tab:results} for details and comparison with most closely related work.
We also prove that the knowledge of the connectivity parameter $\cT$ and the number $\ell$ of distinguished nodes (they have additional distinguishing bit, and are called \emph{\ldrs} throughout the paper) is necessary for \probname 
in \ADCS with Opportunistic Connectivity. Similarly, using $o(\log n)$ local memory bits are not enough for \probname in the model.

\begin{table}[htbp]
    \centering
    \begin{tabular}{c|c|c|c|c|c|c}
        Ref & \begin{tabular}{c}Known\\$n$\end{tabular} & Connected & \begin{tabular}{c}Message\\size\end{tabular} & \begin{tabular}{c}Memory\\access\end{tabular} & 
        IDs
        & Time complexity\\
        \hline
        \begin{tabular}{c}this\\ work\end{tabular} & no & no & $O(\log n)$ & $O(\log n)$ & no & $O\left(\frac{n^{1+2\cT(1+\epsilon)} }{\ell i_{\min}^2} \log n \log\left(\frac{n}{\ell}\right)+\frac{ n'}{ \ln(1+i_{\min})} \log^2 n\right)$ \\
        \hline
        \cite{KuhnLO2010} & no & yes & $O(\log n)$ & no limit & yes & \begin{tabular}{c}$O(n^2)$\\(1-Interval connected)\end{tabular} \\
        \cline{2-7}
        & no & yes & $O(\log n)$ & no limit & yes & \begin{tabular}{c}$\Omega(n\log n)$\\(1-Interval connected,\\centralized algorithm)\end{tabular} \\
        \hline
        \cite{haeupler2011faster} & \begin{tabular}{c}upper\\bound\end{tabular} & yes & $b\geq\log n$ & no limit & yes & \begin{tabular}{c}$O(dn^2/b^2)$\\($b\geq d\geq \log n$)\end{tabular}\\
    \end{tabular}\ 
    \caption{
    Comparison of most relevant deterministic \allallfull results in \DNs. Message size and memory access are in bits, and time complexity is in rounds of communication. $n$ denotes the number of nodes (initially unknown), $n'$ -- the number of different \rumors, $\ell$ -- the number of nodes with distinguishing bit (supervisors), $\cT$ -- the connectivity parameter, $\epsilon>0$ -- any chosen constant, $i_{min}$ -- the lower bound on the isoperimetric numbers of the \ADCS as defined in Section~\ref{sec:convtime} 
    (if unknown, the formula holds after substituting 
    $2/n$ for $i_{\min}$).
    $d$ is the size of the \rumor in~\cite{haeupler2011faster}.
    }
    \label{tab:results}
\end{table}

Our algorithmic approach is not a simple classic gossip-based
technique, which is well-known in Distributed Computing~\cite{KowalskiMicalp18journal, FMT:aggJournal, mihail, KDGgossip, spirakis,giakkoupis2011tight}. 
In such algorithms initially nodes hold some values to be shared with neighboring nodes repeatedly until some stopping rule is met.
In fact, the ADCS model yields such techniques incorrect, as multiplicities of the same \rumor may not be counted correctly due to anonymity and dynamic behavior of the underlying networks. Therefore, we combine a spreading-with-stopping technique with coding messages and distribution of potential,~c.f.,~some~of their applications in less-demanding systems~\cite{KowalskiMicalp18journal, FMT:aggJournal, mihail}.
Combining them in a carefully selected way results in a complex algorithm, which occurs surprisingly efficient in very demanding systems such as ADCS with Opportunistic~Connectivity.

One of the critical components of our \allallfull algorithm is based on distribution of potential. We use it when counting all active nodes or nodes with an identified \rumor. The main challenge is to show that the potential distribution process stabilizes (with negligible deviations at nodes) at a desired value -- unlike a typical mass-distribution process, c.f.,~\cite{SJ:count, michal, mihail,sauerwald}, our process has to accommodate (1) lack of connectivity (only a union of $\cT$ consecutive networks needs to be connected), (2) truncated potentials (we need to truncate the numbers due to congested links/nodes), and (3) lack of any knowledge of $n$ (in the beginning).

\parhead{Roadmap} 
In Section~\ref{sec:relwork} we discuss the related work. Section~\ref{sec:model} presents the model and useful definitions, while Section~\ref{sec:prel} states preliminary results on imposibility of all-to-all communication and properties of dynamic networks.
The main \allallfull algorithm is presented and analyzed in Section~\ref{sec:all-to-all}, while its major components, \mult and \algfullname, are given and analyzed in Sections~\ref{sec:mult-algorithm} and~\ref{sec:counting}, respectively.

\section{Related Work}
\label{sec:relwork}

To the best of our knowledge, there is no previous study of distributed computations under a \DNs model that include all the restrictions of \adcs. That is, worst-case dynamicity with connectivity that is only opportunistic, communication limited to $O(\log n)$ bits per round, local memory access also limited to $O(\log n)$ bits per round, and nodes without ID's.
We overview in this section the \DNs extant work on related models.  

Two closely related works on \allallfull in \DNs are~\cite{KuhnLO2010,haeupler2011faster}. In both, the model includes node ID's, continuous connectivity, and unbounded memory access. Nevertheless, lower bounds under weaker conditions apply to the \ADCS stricter model. For deterministic algorithms, a lower bound of $\Omega(n\log n)$ rounds was proved in~\cite{KuhnLO2010}, even for centralized algorithms.  

The most frequent model of worst-case dynamicity still assumes continuous connectivity. That is, even though the set of links may change arbitrarily from round to round, it is guaranteed that in every round of communication there is a path between every pair of nodes. For instance, in the population protocol model in~\cite{AngluinADFP2006} and the continuous connectivity model in~\cite{OW:diss}. Later on, in~\cite{KuhnLO2010}, Kuhn, Lynch and Oshman parameterized the continuous connectivity assumption in the $T$-interval connectivity model where the topology may change, but changes are restricted to maintain an underlying connected static graph over each sequence of $T$ rounds. Instantiating $T=1$ the model is equivalent to continuous connectivity. 

In none of the above models disconnection is allowed, not even temporarily.  
In a more recent work~\cite{michail2014causality}, Michail, Chatzigiannakis, and Spirakis do consider disconnections, with limitations that yield opportunistic connectivity. Specifically, in at most $k$ rounds every node \emph{influences} at least one other node that has not been influenced yet (meaning for example that the status of a node is learned by some other node in at most $k$ rounds). Since the limitation holds for every node, the model is more restrictive than simply parameterizing the overhead on dissemination of one message to a factor of $k\geq 1$. In fact the authors show that for $k=1$ it is not possible to influence only one node in every round. 
In the same paper, the authors study a second model of opportunistic connectivity where communication between neighboring nodes must be allowed within some time window, but the underlying topology is fixed.
The \probname problem is not studied in that work.

Communication congestion has been thoroughly studied in the \cc model and others~\cite{becker2015brief,censor2019algebraic,censor2020derandomizing,dolev2012tri,drucker2014power,ghaffari2016improved,ghaffari2016mst,hegeman2015toward,hegeman2015lessons,hegeman2014near,henzinger2019deterministic,korhonen2016deterministic,lenzen2013optimal,nanongkai2014distributed,patt2011round,ghaffari2017distributed}, but to the best of our knowledge ours is the first model to combine communication and memory access congestion. 

Finally, with respect to \DNs with anonymous nodes, the \ADNs (ADN) model has attracted also a lot of attention recently~\cite{spirakis,KowalskiMicalp18journal,KowalskiMicalp19journal,conscious,oracle,experimentalConscious,LunaB15,opodisCounting, netysCounting, spaa21consensus}. A comprehensive overview of work related to ADNs can be found in a survey by Casteigts, Flocchini, Quattrociocchi, and Santoro~\cite{arnaudSurvey}. In all these works continuous connectivity is assumed. 

Other studies also dealing with the time complexity of information gathering exist~\cite{
chen2013role,
banerjee2014epidemic,
sanghavi2007gossiping,
boyd2006randomized,
mosk2008fast,
sarma2015distributed}, 
but include in their model additional assumptions, such as the network having the same topology frequently enough or node identifiers.

\section{Problem and Model}
\label{sec:model}

To emulate a single round of communication in the \cc, we study the \probname problem defined as follows. Initially all nodes hold an \rumor, and to solve the problem all nodes must receive the \rumor of all other nodes. 
For each \rumor initially held by  more than one node, nodes must receive the number of copies of that \rumor as well.

We study \probname in \adcs (\ADCS) formed by $n$ processing \emph{nodes}. 
Nodes lack identifiers and the set of communication links among nodes is adversarially dynamic, with some limits on disconnection as specified below. 
When two nodes are able to communicate (that is, they are the endpoints of a communication link), we say that they are \emph{neighbors}. 
We assume that $n$ is initially unknown.  
Moreover, due to dynamicity, each \ADCS node does not know even the number of its neighbors before receiving messages from them. 
That is, a node knows the number of its neighbors in any one round after receiving messages, but it does not who they are. Nonetheless, although we refer to the set of neighbors of a node and we label nodes in the presentation of algorithms and the analysis, we do it only for the sake of clarity -- the nodes have only access to the messages sent by neighbors.
Throughout the paper, references to algorithms lines are given as $\langle algorithm\#\rangle.\langle line\#\rangle$ for succinctness.

Time is discretized in communication \emph{rounds}. In each round of communication, each node may send a \emph{message} to all its neighbors (the same to all\footnote{Dynamicity and anonimity prevent the nodes from sending destination-oriented messages. 
}),
receive messages from all its neighbors, access local memory, and perform local computations. 
We evaluate time complexity in rounds of communication, given that in comparison local memory access and local computations take negligible time. 

Each message is limited to $O(\log n)$ bits, as in the \cc~\cite{lotker2005minimum} and CONGEST~\cite{peleg2000distributed} models. 
Additionally, each memory access is limited to $O(\log n)$ bits as well -- this is the reason we call nodes {\em congested} as well. Specifically, in each side of a communication link there is an input buffer and an output buffer, both limited to $O(\log n)$ bits. 
The additional internal memory for local computation also holds $O(\log n)$ bits.
It implies, in particular, that a local computation algorithm may process a constant number of received messages (in their buffers) at a time, aggregating them somehow in local memory, and continuing with other buffers. (In other words, in a single round the algorithm may check all messages, but cannot upload them from the buffers all at once but instead needs to process them in constant batches). At the end of the round, the algorithm may store some $O(\log n)$-bit information in external memory, e.g., the learned \rumor together with its multiplicity.
We show below that this restriction is tight. That is, if memory access is restricted to $o(\log n)$ bits, some instances of \probname cannot be solved. 

We assume the presence of $\ell$ distinguished nodes, called $\ldrs$, where $0<\ell<n$. That is, the set of nodes is partitioned in two classes: \ldrs and \notldrs. The number of \ldrs $\ell$ is known to all nodes. As we show later, these assumptions are necessary to solve \probname deterministically without knowing $n$.
Within each class, nodes are indistinguishable.

The communication network topology model is a $\cT$-connected time-evolving graph, defined as follows\footnote{There are many formalisms in the literature to specify graphs that change with time. Names include temporal graphs, dynamic graphs, evolving graphs, time-varying graphs, and others. We adopt the notation that provides more clarity to our analysis.}.  

Given a fixed set $V$ of $n$ nodes, let a \emph{time-evolving graph} (or \emph{evolving graph} for short) be an infinite sequence of graphs $\mathcal{G}=\{G^{(t)}\}_{t\in \mathbb{N}}$
 such that $G^{(t)}=\left(V,E^{(t)}\right)$, where each $E^{(t)}$ is the (possibly different) set of links of the graph $G^{(t)}$. We call each $G^{(t)}$ a \emph{constituent graph}.

For any pair of nodes $u,v \in V$ in an evolving graph $\mathcal{G}$, let an \emph{opportunistic path} of length $k>0$ from $u$ to $v$ be a sequence of links $(x_1,x_2),(x_2,x_3),\dots,(x_{k-1},x_k)$ where $x_i\in V$ for every $i\in[k]$, $x_1=u$, $x_k=v$, and for every consecutive pair of links $(x_{i-1},x_{i}),(x_{i},x_{i+1})$ in the sequence, such that $1 < i < k$, if $(x_{i-1},x_{i})\in E^{(t)}$ and $(x_{i},x_{i+1})\in E^{(s)}$, then it is $t\leq s$. In words, an opportunistic path from $u$ to $v$ is a path that may not exist in its entirety at any round, but it can be traversed from $u$ to $v$ throughout time (sometimes called a \emph{journey} in the literature). 

We say that an evolving graph $\mathcal{G}$ is \emph{$\cT$-connected} if, for each $t\geq 0$, for each pair of nodes $u,v \in V$, and for each consecutive sequence of $\cT$ constituent graphs $G^{(t+1)},\dots,G^{(t+\cT)}$, there is an opportunistic path from $u$ to $v$.  
Note that each $G^{(t)}$ may be disconnected, and that we do not restrict the length of the opportunistic path.
As we show later, \probname is not possible in \ADCS without knowledge of $\cT$. Thus, we assume that nodes know $\cT$.

With respect to the expansion properties of the \ADCS  network, we consider three scenarios: no knowledge, knowledge of minimum 
conductance, and knowledge of minimum \iso. The definitions of these expansion characteristics are given in Section~\ref{sec:convtime}.

\section{Preliminaries}
\label{sec:prel}

\subsection{Impossibility Results}

We start this section establishing some impossibility facts that validate the assumptions of our model. 

\begin{observation}
For each \probname deterministic algorithm $\mathcal{A}$, there exists an \ADCS with 
$\cT$-connected evolving graph topology such that, if $\mathcal{A}$ does not use $\cT$, the problem cannot be solved. \end{observation}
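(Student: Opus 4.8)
The plan is to construct an adversarial $\cT$-connected evolving graph on which any algorithm $\mathcal{A}$ that ignores $\cT$ must fail, using a standard indistinguishability argument. First I would fix a deterministic algorithm $\mathcal{A}$ and observe that, since $\mathcal{A}$ does not use $\cT$, it must halt after some number of rounds $R_{\mathcal{A}}$ that depends only on the inputs it has seen so far (the $\rumor$s, the distinguishing bits, $\ell$, $n$ if known), but not on $\cT$. The idea is then to pick $\cT$ large enough — concretely $\cT > R_{\mathcal{A}}$, or more carefully $\cT$ larger than the number of rounds before $\mathcal{A}$ commits to an output at some node — so that within the first $\cT$ rounds the $\cT$-connectivity guarantee imposes \emph{no} constraint whatsoever on the constituent graphs $G^{(1)},\dots,G^{(\cT)}$: the opportunistic path from any $u$ to any $v$ is only required to appear across the window $G^{(t+1)},\dots,G^{(t+\cT)}$, so as long as $\cT$ exceeds $\mathcal{A}$'s running time we are free to make $G^{(1)}=\dots=G^{(\cT)}$ be the empty graph (no links) for the entire duration of the execution, and only afterwards add edges to satisfy $\cT$-connectivity.

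Next I would exhibit two executions that $\mathcal{A}$ cannot distinguish. In both, take the empty constituent graphs for the first $\cT$ rounds as above. In execution $E_1$, let the node set be partitioned so that some node $v$ has $\rumor$ value, say, that appears with multiplicity one globally; in execution $E_2$, keep $v$'s local input identical but change the rest of the system (e.g. add nodes carrying the same $\rumor$, adjusting which nodes are $\ldrs$ so that $\ell$ is consistent, or change a remote node's $\rumor$) so that the correct output at $v$ differs — for instance the required multiplicity of $v$'s $\rumor$ changes, or $v$ must eventually report a $\rumor$ it never hears about. Since $v$ receives no messages during the first $\cT$ rounds in either execution (the graph is empty), and $\mathcal{A}$ halts at $v$ before round $\cT$, node $v$'s transcript is identical in $E_1$ and $E_2$, hence it produces the same output in both — contradicting correctness in at least one of them. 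The remaining bookkeeping is to verify that both $E_1$ and $E_2$ can be completed to genuinely $\cT$-connected evolving graphs after round $\cT$ (e.g. by making every $G^{(t)}$ for $t>\cT$ a fixed connected graph, say a clique or a path, on the respective node sets), which is immediate.

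The one subtlety — which I'd treat as the main obstacle — is pinning down what "$\mathcal{A}$ does not use $\cT$" formally buys us, and in particular that $\mathcal{A}$'s halting time at a node is bounded independently of $\cT$: one must rule out an algorithm that never halts and is therefore vacuously "correct." The clean way is to stipulate (as is standard for such problems) that a correct \probname algorithm must terminate with an output at every node, and that its termination round is a function of the information the node has access to; since $\cT$ is explicitly excluded from that information, and in the empty-graph prefix a node's view is just its own input, the halting round of $v$ is a fixed finite number independent of $\cT$, so choosing $\cT$ past it is legitimate. With that in hand the indistinguishability step is routine, and the construction above yields the claimed \ADCS and completes the proof.
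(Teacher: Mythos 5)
Your overall approach is the right one and matches the paper's in spirit: fix a deterministic $\mathcal{A}$, isolate a node for longer than $\mathcal{A}$ can afford to wait, then pick $\cT$ just large enough to make that execution legitimate. However, there is a concrete bug in your construction. You say to make $G^{(1)}=\dots=G^{(\cT)}$ all empty and ``only afterwards add edges to satisfy $\cT$-connectivity.'' That is not $\cT$-connected: the very first window $G^{(1)},\dots,G^{(\cT)}$ must already contain an opportunistic path between every pair of nodes, so you cannot defer all edges to round $\cT+1$ or later. You must put the connectivity inside the window — e.g.\ make $G^{(1)},\dots,G^{(\cT-1)}$ empty and $G^{(\cT)}$ a clique, or equivalently set $\cT$ to one more than the number of rounds you kept the graph silent.

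The paper's construction sidesteps this neatly and is also cleaner on the point you flag as ``the main obstacle.'' Rather than isolating everybody, the paper keeps $n$ nodes in a clique for $T$ rounds and isolates a single extra node $v$, then connects $v$ at round $T+1$, giving a $\cT$-connected topology with $\cT=T+1$. The clique nodes' view during those $T$ rounds is \emph{identical} to a run on a genuine $1$-connected $n$-node system, for which the worst-case running time $T$ is already a well-defined quantity — so they must halt by round $T$ without ever hearing from $v$, and you get the contradiction immediately. This avoids the separate argument you need about when the isolated node $v$ halts on an empty transcript; the halting bound is supplied by a reference execution the algorithm is assumed to handle correctly. Your two-execution indistinguishability at $v$ can be made to work, but the paper's version is tighter because the running-time bound comes for free from the $n$-node instance rather than from an additional stipulation about the halting behavior of an isolated node.
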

\begin{proof}
For the sake of contradiction, assume there exists such algorithm $\mathcal{A}$.
Let $T$ be the worst-case running time of $\mathcal{A}$ on a $1$-connected \ADCS of $n$ nodes. 
Consider an \ADCS with $n+1$ nodes that during the first $T$ rounds of execution of $\mathcal{A}$ is formed by a clique of $n$ nodes and an isolated node $v$, which connects to the clique for round $T+1$.
The nodes in the clique are not able to communicate with $v$ during the $T$ execution steps of $\mathcal{A}$. Hence they stop, but then the \probname problem was not solved  (as nodes in the clique do not know the message of the other node), and the described execution is feasible for $\cT$-connected \ADCS, where $\cT=T+1$ is unknown to the algorithm. 
\end{proof}

\begin{observation}
There is no deterministic algorithm to solve \probname in an \ADCS without at least one distinguished node, and without knowledge of the total number of nodes $n$ and the number of distinguished nodes $\ell$.
\end{observation}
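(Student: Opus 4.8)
The plan is to use a standard indistinguishability argument: if no distinguished node exists and neither $n$ nor $\ell$ is known, then a deterministic algorithm cannot tell apart two systems with different numbers of nodes, and will produce the wrong multiplicity on at least one of them. First I would fix a putative deterministic algorithm $\mathcal{A}$ that allegedly solves \probname in this setting. Consider a system $S_1$ consisting of a single node $u$ holding some \rumor $r$; since the network is trivially $\cT$-connected (the single node reaches itself), $\mathcal{A}$ must terminate on $S_1$, say after $T_1$ rounds, with $u$ outputting the set $\{r\}$ with multiplicity $1$. Now I would build a second system $S_2$ with two nodes $u_1, u_2$, both holding the same \rumor $r$, but where the adversary keeps $u_1$ and $u_2$ disconnected for the first $T_1$ rounds and only connects them at round $T_1+1$. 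This evolving graph is $\cT$-connected for $\cT = T_1+1$, which is legitimate because the algorithm does not use $n$ and we are free to choose the instance.

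The key step is the coupling: during rounds $1$ through $T_1$, each of $u_1$ and $u_2$ in $S_2$ sees exactly the same sequence of received messages as $u$ did in $S_1$ — namely, only its own messages reflected back (or, more precisely, nothing from any other node) — because both are anonymous, start in the identical state (same \rumor $r$, no distinguishing bit), and have no neighbors. Hence by induction on rounds the local state of $u_1$ (and of $u_2$) at every round $t \le T_1$ equals the local state of $u$ at round $t$ in $S_1$. In particular $u_1$ and $u_2$ both halt at round $T_1$ and both output $\{r\}$ with multiplicity $1$. But the correct answer in $S_2$ is $\{r\}$ with multiplicity $2$, so $\mathcal{A}$ fails on $S_2$ — a contradiction. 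One should also rule out the trivial escape that $\mathcal{A}$ simply never halts: if $\mathcal{A}$ never halts on $S_1$, it already fails to solve \probname there, so we may assume a finite $T_1$.

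The main obstacle — really the only subtle point — is making precise what ``indistinguishable'' means given the congested-memory and anonymity assumptions, i.e., arguing that an isolated anonymous node's entire observable history (empty input buffers every round, plus its own initial \rumor) is identical in $S_1$ and $S_2$, so that a deterministic transition function must produce identical behavior, including the halting decision and the final stored output. Once that invariant is stated cleanly, the induction is routine. I would also remark that the same construction shows knowledge of $\ell$ alone does not suffice without a genuine distinguished node: take $S_1$ and $S_2$ both with $\ell = 0$ as above; and symmetrically, knowledge that $\ell \ge 1$ is exploited in the positive algorithm precisely to break this symmetry, so the impossibility is tight with respect to the model's assumptions.
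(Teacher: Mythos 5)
Your proof takes a different route from the paper's: the paper gives a short reduction (an \probname solver yields a node-counting protocol by giving every node the same \rumor, and counting was already shown impossible without a distinguished node in~\cite{spirakis} and without knowledge of $\ell$ in~\cite{KowalskiMicalp19journal}), whereas you attempt a direct, self-contained indistinguishability argument. Unfortunately your construction has a genuine gap that the paper's reduction does not.

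The gap is in the interaction between your deferred-connection trick and the $\cT$-connectivity that the model grants the algorithm. This observation should be read as holding even when the algorithm knows $\cT$ (the paper establishes, separately, in the preceding observation, that $\cT$ must be known; from then on $\cT$ is an available input). Knowing $\cT$, any correct algorithm running on the singleton system $S_1$ will not halt before round $\cT$: if the single node $u$ halted at some $T_1 < \cT$ and declared multiplicity $1$, the very same prefix of behavior would occur in a legitimate $\cT$-connected two-node system where the connecting edge appears at round $T_1+1 \leq \cT$, and the algorithm would already be wrong there \emph{without} needing your coupling. Conversely, a correct algorithm will set $T_1 \geq \cT$ (e.g., wait exactly $\cT$ rounds with no neighbors and then correctly conclude $n=1$). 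But then your $S_2$, in which $u_1$ and $u_2$ are disconnected throughout rounds $1,\dots,T_1$, violates $\cT$-connectivity: the union of $G^{(1)},\dots,G^{(\cT)}$ is disconnected, so $S_2$ is not a valid instance and yields no contradiction. In short, the adversary you describe is only admissible against algorithms that halt faster than $\cT$ permits, and those algorithms are already incorrect for an unrelated reason; against a sensible algorithm that uses $\cT$, the construction simply cannot be built.

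To make a direct indistinguishability argument work here you need hard instances that are \emph{always} connected (so that $\cT$-connectivity gives the algorithm nothing extra), the classic choice being two anonymous rings of different sizes with identical \rumors and identical local views at every round, so the coupling holds for all time rather than only until some deferred connection. This is essentially what underlies the impossibility results the paper invokes by reduction; the paper's proof outsources exactly the step that your construction gets wrong.
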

\begin{proof}
For the sake of contradiction, assume there exists such algorithm, call it $\mathcal{A}$. Then, we can use $\mathcal{A}$ to compute $n$ by simply assigning a message $1$ to each node and counting how many $1$'s are received. However, this is a contradiction because, even
for \ADNs without node and edge congestion
it has been shown in~\cite{spirakis} that counting the number of nodes deterministically is not possible without some distinguished node. It was also shown in~\cite{KowalskiMicalp19journal} that the number of distinguished nodes needs to be known. 
\end{proof}

\begin{observation}
There exist applications of \probname where, if memory access is limited to $o(\log n)$ bits, the problem cannot be solved.
\end{observation}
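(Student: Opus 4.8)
The plan is to show that some correct answers to \probname necessarily contain a \rumor multiplicity of magnitude $\Theta(n)$, and that a node with only $o(\log n)$ bits of usable memory cannot represent, compute, or even decide such a value, hence cannot report it. The moral picture is the application in which all $n$ nodes hold the same \rumor $m$: the unique correct answer is ``\rumor $m$ with multiplicity $n$'', and writing the number $n$ already requires $\lceil\log n\rceil$ bits, more than an $o(\log n)$-bit memory cell, buffer, or stored chunk can hold. To turn this into a rigorous impossibility I would fix, for each sufficiently large $n$, the family of applications $\{\mathcal I_j\}_{j=1}^{n-1}$ on the static clique over $n$ nodes (trivially $\cT$-connected for every $\cT$, so dynamicity plays no role here), where in $\mathcal I_j$ a fixed set of $j$ nodes holds \rumor $a$ and the other $n-j$ hold \rumor $b$; the only correct answer in $\mathcal I_j$ is the set $\{(a,j),(b,n-j)\}$, and these $n-1$ answers are pairwise distinct in the multiplicity coordinate.

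Next I would run a counting/pigeonhole argument against an arbitrary deterministic algorithm $\mathcal A$ whose memory access -- and hence whose internal computation memory together with its input and output buffers -- is bounded by $s(n)=o(\log n)$ bits. Fixing one node $v$, I would observe that at every point in its execution the content of $v$'s bounded memory is one of at most $2^{s(n)}=n^{o(1)}$ values, and each chunk that $v$ ever commits to external storage as a ``\rumor/multiplicity'' record is likewise one of at most $n^{o(1)}$ values. The specification of \probname forces the answer committed by $v$ to identify, for \rumor $a$, the correct multiplicity $j$ out of $n-1$ possibilities; since $n-1>n^{o(1)}$ for all large $n$, two indices $j\ne j'$ are necessarily mapped to the same committed record, so on at least one of $\mathcal I_j,\mathcal I_{j'}$ node $v$ outputs a wrong multiplicity and $\mathcal A$ fails.

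The point I expect to need the most care is closing the obvious escape: that $v$ never holds the multiplicity as a single number but ``streams'' it, e.g.\ emits $j$ tiny append-only chunks. The plan is to rule this out twofold. First, the semantics of \probname ask for \emph{the} number of copies of each \rumor -- a single value per \rumor -- so a legal answer is a set of \rumor/multiplicity pairs rather than an aggregate accumulated over unboundedly many records; hence the finite object encoding $v$'s reported multiplicity is itself subject to the $n^{o(1)}$ bound used above. Second, and independently, any procedure that emits an enumeration of length $j=\Theta(n)$ must maintain a loop counter ranging over $\{1,\dots,\Theta(n)\}$ in $v$'s internal computation memory, which requires $\Omega(\log n)$ bits and is therefore impossible under $s(n)=o(\log n)$. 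With either observation the pigeonhole bound bites and the impossibility follows.
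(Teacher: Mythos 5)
Your argument is correct and rests on the same state-counting/pigeonhole core as the paper's proof: $o(\log n)$ memory bits admit only $2^{o(\log n)} = o(n)$ distinct states, too few to encode a multiplicity ranging over $\Theta(n)$ values. The paper uses the simpler instance in which every node holds the same \rumor, so the required output is the multiplicity $n$, and for large $n$ no state of a node with $o(\log n)$ bits can output $n$. You instead fix $n$ and vary the input across the two-\rumor family $\{\mathcal I_j\}_{j=1}^{n-1}$, pigeonholing over the $n-1$ required distinct answers; this has the cosmetic benefit that the memory budget $s(n)$ stays fixed while the input varies, so the pigeonhole is cleanly parametrized, whereas in the paper's phrasing the budget $\mu$ itself moves with $n$. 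Both versions tacitly assume the reported multiplicity is committed as a single $O(\log n)$-bit object rather than streamed; you are right to flag that as the delicate step, and the paper does not discuss it. Of your two rebuttals to the streaming escape, the first — that \probname by specification produces a single \rumor/multiplicity pair, so the encoding of that multiplicity is itself a bounded record subject to the $2^{s(n)}$ count — is the one the paper implicitly leans on and the one I would keep. The second — that any procedure emitting $\Theta(n)$ chunks must keep an $\Omega(\log n)$-bit loop counter — is not airtight, since a node could terminate its stream on a received signal rather than a local counter, so it should not be relied on as an independent argument.
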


\begin{proof}
Consider an \probname algorithm used to compute the number of nodes $n$ in an \ADCS, where $n$ is unknown 
(i.e., each node has initially the same message $1$). For each $n$ there should be at least one final state of a node in which the node stops and outputs $n$. On the other hand, nodes with internal memory $\mu=o(\log n)$ may result in only $2^\mu = o(n)$ states; hence, for some $n$ there will be no terminating state outputting $n$ and the \probname fails.
\end{proof}

\subsection{Expansion in Evolving Graphs with Opportunistic Connectivity}
\label{sec:convtime}

The relevant property that $\cT$-connectivity provides to our analysis is the following. 
\begin{observation}
\label{obs:sc}
Given a $\cT$-connected evolving graph $\mathcal{G}$, for any sequence of constituent graphs of $G^{(t+1)},G^{(t+2)},\dots,G^{(t+\cT)}$, $t\geq0$, the union graph $G_{\cup_t}=\left(V,\cup_{i=1}^{\cT} E^{(t\cT+i)}\right)$ is connected.  \footnote{Notice that the one or more paths between each pair of nodes $u,v\in V$ in $G_{\cup_t}$ are derived from the opportunistic path from $u$ to $v$ and the opportunistic path from $v$ to $u$, which exist due to $\cT$-connectivity.} 
\end{observation}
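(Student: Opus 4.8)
The plan is to reduce connectivity of the union graph $G_{\cup_t}$ to the existence of the journeys guaranteed by $\cT$-connectivity. Concretely, I would show that for every pair of nodes $u,v\in V$ there is a walk from $u$ to $v$ all of whose edges lie in $\cup_{i=1}^{\cT}E^{(t\cT+i)}$, which is exactly the edge set of $G_{\cup_t}$. A walk between two vertices witnesses that they lie in the same connected component, and since $u,v$ are arbitrary this yields connectivity of $G_{\cup_t}$.

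First I would fix an arbitrary admissible starting round and an arbitrary ordered pair $u,v\in V$. Applying the definition of $\cT$-connectivity to the window of $\cT$ consecutive constituent graphs named in the statement, there is an opportunistic path $(x_1,x_2),(x_2,x_3),\dots,(x_{k-1},x_k)$ from $u$ to $v$; by definition $x_1=u$, $x_k=v$, each $x_j\in V$, and each link $(x_j,x_{j+1})$ belongs to $E^{(s)}$ for some round index $s$ lying inside that window.

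Second, I would simply discard the temporal-ordering condition: since every such index $s$ ranges over the $\cT$ indices of the window, each link $(x_j,x_{j+1})$ of the opportunistic path belongs to $\cup_{i=1}^{\cT}E^{(t\cT+i)}$, the edge set of $G_{\cup_t}$. Hence $x_1,x_2,\dots,x_k$ is a walk in the static graph $G_{\cup_t}$ connecting $u$ and $v$, so $u$ and $v$ lie in the same connected component of $G_{\cup_t}$. As the pair was arbitrary, $G_{\cup_t}$ is connected.

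I do not expect any genuine obstacle here: the statement is an unfolding of the definitions and involves no estimates. The single point worth a sentence is the one flagged in the footnote: if one reads the constituent graphs with directed communication semantics, then to obtain an undirected path in $G_{\cup_t}$ one invokes $\cT$-connectivity twice, once for the journey from $u$ to $v$ and once for the journey from $v$ to $u$, and concatenates the two vertex sequences; in the undirected reading a single journey already suffices, and the conclusion is the same either way.
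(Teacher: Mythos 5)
Your proof is correct and coincides with the paper's own (footnote-level) justification: take the opportunistic path guaranteed by $\cT$-connectivity, observe that every link in it lies in the union edge set, and read the vertex sequence as a walk in $G_{\cup_t}$; your remark about invoking the journey in both directions under a directed reading is exactly the point the paper's footnote is making. One cosmetic note: the observation's indices $E^{(t\cT+i)}$ do not literally match the window $G^{(t+1)},\dots,G^{(t+\cT)}$ named just before (almost certainly a typo in the source), but your argument is agnostic to which of the two is intended, so nothing is affected.
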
 

Consider a $\cT$-connected evolving graph $\mathcal{G}$ conceptually divided in subsequences of $\cT$ consecutive constituent graphs. That is, 
$\mathcal{G} = \{\mathcal{G}_i\}_{i=0}^{\infty}$,  where 
$\mathcal{G}_i = \{G_{i\cT+t}\}_{t=1}^{\cT}$.
By Observation~\ref{obs:sc}, for each $i\geq 0$, the union graph $G_{\cup_i}$ defined on $\mathcal{G}_i$ is connected, whereas each constituent graph of $\mathcal{G}$ may be not connected.
Thus, rather than analyzing the potential distribution process as a Markov chain on the evolving graph $\mathcal{G}$, we study the process on the evolving graph $\mathcal{G^*}=\{G_{\cup_i}\}_{i=0}^{\infty}$, where each constituent union graph $G_{\cup_i}$  is connected. We do so taking into account that, in fact, each union graph corresponds to a sequence of graphs from $\mathcal{G}$.

The potential distribution process can be seen as a multiplication of a vector of values, one component for each node, by a matrix of shares, where component $(u,v)$ corresponds to the fraction of potential shared by node $u$ with neighboring node $v$. 
Let $P^{(t)}$ be the matrix of shares used by a gossip-based algorithm in round $t$ (corresponding to a constituent graph $G^{(t)}$). Then, for each $i\geq 0$, the matrix of shares corresponding to the potential distribution on the evolving graph $\mathcal{G}_i$ is $P_i = \prod_{t=1}^{\cT} P^{(i\cT+t)}$.
Each $P^{(t)}$ is doubly stochastic, hence, $P_i$ is also doubly-stochastic. 

For each $i\geq 0$, consider a time-homogeneous Markov chain $\mathbf{X}_i$ with state space $V$ and transition matrix $P_i$.
$\mathbf{X}_i$ is finite, irreducible and aperiodic, and given that each $G_{\cup_i}$  is connected, it is ergodic.
Thus, the stationary distribution of $\mathbf{X}_i$ is unique~\cite{book:motwani}. The uniform distribution, that is, $\pi_v=1/n$ for all $v\in V$, is a solution of $\vec\pi P_i  = \vec\pi$ because $P_i$ is doubly stochastic. Thus, the stationary distribution of $\mathbf{X_i}$ is uniform. 

Classic bounds on mixing time of Markov chains require the transition matrix to be symmetric~\cite{SJ:count}, but the $P_i$'s above may not be. Thus, we apply instead a bound by Mihail~\cite{mihail} that is applicable to arbitrary irreducible Markov chains, as long as they are strongly aperiodic (as in our case where nodes keep at least half of the potential), and it is a function of transition matrix conductance. 

The \emph{conductance} of a transition matrix $P=\big(p_{uv}\big)$ of a Markov chain over state space $V$ with stationary distribution $\vec{\pi}$ is defined as follows: 
\begin{align*}
\phi(P) 
=
\min_{S\subset V: \sum_{v\in S}\vec{\pi}_v\leq 1/2} \phi_{P}(S)
\ , \ \ \ \ \ \ \ \ where \ \ \ \ \ \ \ \ \ 
\phi_{P}(S) 
=
\frac{\sum_{u_j\in S}\sum_{u_h\in V\setminus S}w_{u_j u_h}}{\sum_{v\in S}\vec{\pi}_v}
\ \ \ \ \ \ \ and \ \ \  \ \ \ \ 
w_{u_ju_h} 
=
\pi_{u_j} p_{u_ju_h} 
\ .
\end{align*}
We instantiate this definition on $P_i=\big(p_{uv}\big)$ and $\vec{\pi}=\frac{\vec{I}}{n}$ as follows:
\begin{align*}
\phi(P_i) 
=
\min_{S\subset V: |S|\leq n/2} \phi_{P_i}(S)
\ , \ \ \ \ \ \ \ \ \ \ \ where \ \ \ \ \ \ \ \ \ \ \ \ \ 
\phi_{P_i}(S) 
=
\frac{1}{|S|} \sum_{u_j\in S}\sum_{u_h\in V\setminus S} p_{u_ju_h}
\ .
\end{align*}

\vspace*{-3ex}
\begin{theorem}
\label{thm:mihail}
(derived from Theorem 3.1 in~\cite{mihail}.)
For an irreducible strongly aperiodic Markov chain $\mathbf{X}_i$ with state space~$V$, such that $|V|=n$, with a unique uniform stationary distribution and transition matrix $P_i$, it is
\begin{align*}
\left|\left|\vec{\Pi}_{t+1} - \frac{\vec{I}}{n}\right|\right|_2^2 &\leq \left(1-\phi(P_i)^2\right) \left|\left|\vec{\Pi}_t - \frac{\vec{I}}{n}\right|\right|_2^2 \ .
\end{align*}
Where $\vec{\Pi}_t$ is the distribution after $t\geq0$ steps of $\mathbf{X}_i$.
\end{theorem}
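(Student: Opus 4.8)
The plan is to reduce the statement to a single one-step contraction for the deviation-from-uniform vector, and then obtain that contraction as an instantiation of Mihail's combinatorial conductance bound (Theorem~3.1 in~\cite{mihail}), after checking that the chain $\mathbf{X}_i$ satisfies its hypotheses.

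First I would pass to the deviation vector $\vec{f}_t := \vec{\Pi}_t - \vec{I}/n$. Since $\vec{\Pi}_t$ and $\vec{I}/n$ are both probability vectors, $\vec{f}_t$ is orthogonal to the all-ones vector, i.e. $\vec{f}_t\cdot\vec{I}=\sum_{v\in V} f_t(v)=0$. Because $P_i$ is doubly stochastic, the uniform vector is a fixed point, so $\vec{f}_{t+1} = \vec{\Pi}_{t+1} - \vec{I}/n = \vec{\Pi}_t P_i - (\vec{I}/n)P_i = \vec{f}_t P_i$. Hence the claimed inequality is equivalent to the operator bound
$$\|\vec{f} P_i\|_2^2 \;\le\; \bigl(1-\phi(P_i)^2\bigr)\,\|\vec{f}\|_2^2 \qquad\text{for every }\vec{f}\text{ with }\vec{f}\cdot\vec{I}=0 .$$

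Next I would verify that $\mathbf{X}_i$ meets the hypotheses under which this bound is available. Finiteness ($|V|=n$), irreducibility, and uniqueness of the uniform stationary distribution were already established in Section~\ref{sec:convtime}. Strong aperiodicity is the one condition requiring care in the non-reversible setting: it holds because each single-round share matrix $P^{(t)}$ is constructed so that every node retains at least half of its potential, i.e. $p^{(t)}_{vv}\ge 1/2$; this self-loop mass survives the $\cT$-fold product, so the diagonal of $P_i=\prod_{t=1}^{\cT}P^{(i\cT+t)}$ is bounded below and $\mathbf{X}_i$ is strongly aperiodic. Moreover the conductance in the bound is exactly $\phi(P_i)$ as instantiated just above the theorem, with weights $w_{uv}=\pi_u p_{uv}$ and $\pi$ uniform. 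With these checks in place the contraction follows directly from Mihail's Theorem~3.1. For completeness one can also reproduce its core: since $\pi$ is uniform, $\|\vec{f}\|_2^2-\|\vec{f}P_i\|_2^2=\langle\vec{f},(I-P_iP_i^{\top})\vec{f}\rangle$, and $P_iP_i^{\top}$ is symmetric, doubly stochastic, positive semidefinite and lazy; a Cheeger-type estimate then lower-bounds its Dirichlet form in terms of its conductance, and a two-step (``flow'') argument relates the conductance of $P_iP_i^{\top}$ to that of $P_i$, the square accounting for the $\phi(P_i)^2$ factor.

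The main obstacle is precisely the non-reversibility of $P_i$: neither the per-round matrices $P^{(t)}$ nor their $\cT$-fold product need be symmetric, so the textbook $\ell_2$ mixing/Cheeger arguments do not apply out of the box. The resolution is Mihail's combinatorial treatment, which trades reversibility for strong aperiodicity --- available to us only because the potential-distribution process is deliberately lazy --- and pays for it with a squared conductance. A secondary, routine point is confirming that this laziness (the $1/2$ diagonal mass) genuinely propagates through the product defining $P_i$, so that the hypothesis of Theorem~3.1 is met by $\mathbf{X}_i$ and not merely by the per-round chains.
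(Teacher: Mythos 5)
The paper gives no proof of this theorem: it is stated as ``derived from Theorem~3.1 in~\cite{mihail}'' and left at that, so your approach — reduce to a one-step contraction for the deviation vector $\vec f_t = \vec\Pi_t - \vec I/n$, check irreducibility, uniqueness of the uniform stationary distribution, and strong aperiodicity, then invoke Mihail — is exactly the paper's (implicit) approach, with the added benefit of your sketch of the internals of Mihail's argument.

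One caveat worth flagging, because it affects every downstream application of the theorem in the paper: your claim that the $1/2$ self-loop mass ``genuinely propagates through the $\cT$-fold product'' is optimistic. If each per-round matrix $P^{(t)}$ satisfies $p^{(t)}_{vv}\ge 1/2$, the product only guarantees $(P_i)_{vv}\ge (1/2)^{\cT}$, which for $\cT>1$ can drop below $1/2$; a product of two lazy matrices need not be lazy at level $1/2$. Mihail's Theorem~3.1 is typically stated with a $1/2$-holding hypothesis on the transition matrix to which the contraction is applied (here $P_i$, not the factors $P^{(t)}$), so strictly speaking one needs either a version of her bound parameterized by the actual holding probability, or to make each $P^{(t)}$ lazier so that the $\cT$-fold product retains $\ge 1/2$ on the diagonal. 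The paper elides this point too (``as in our case where nodes keep at least half of the potential'' refers to the single-round matrices), so this is not a gap relative to the paper's proof, but it is a hypothesis you should not treat as automatic.
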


Notice that the above analysis applies to each $i\geq 0$. That is, each $P_i$ may be different for each $\mathcal{G}_i$, but all of the $\mathbf{X_i}$ converge to a uniform stationary distribution and the bounds in Theorem~\ref{thm:mihail} apply to each $\mathbf{X_i}$. Thus, by application of these bounds to each sequence of $\cT$ rounds, the convergence time for the evolving graph $\mathcal{G}$ can be obtained.

We define the minimum conductance $\phi_{\min}$ corresponding to the transition matrices $P_i = \prod_{t=1}^{\cT} P^{(i\cT+t)}$ used by our algorithms on the evolving graph topology $\mathcal{G}$ as follows: 
\[
\phi_{\min} = \min_{i=0,1,2,\dots} \phi(P_i) 
\ .
\]
Conductance is a useful expansion characteristic to provide tighter time bounds, but being a function of the probabilities of transition it is specific for each algorithm. 
To obtain bounds that depend on network characteristics only, we will use the \iso.\footnote{The \iso of a graph (a.k.a. graph Cheeger constant) is the discrete analogue of the Cheeger isoperimetric constant,
c.f.,~\cite{buser1984bipartition}.}
The \iso of a static graph $G=(V,E)$ is defined as follows:
\begin{align*}
i(G) = \min_{X:|X|\leq|V|/2} \frac{|\partial X|}{|X|}
\ ,
\end{align*}
where $\partial X$ denotes the set of links of $G$ that have one end in $X$ and the other end in $V\setminus X$.
We apply this definition to each constituent union graph $G_{\cup_j}$ of the evolving graph
$\mathcal{G^*}=\{G_{\cup_j}\}_{j=0}^{\infty}$, and obtain:
\begin{align*}
i(G_{\cup_j}) = \min_{X:|X|\leq|V|/2} \frac{|\partial X|}{|X|} \text{\ \  for each } j=0,1,\dots 
\ \ \ \ \ \ \ \ \ \ \  \ \text{ and then } \ \ \ \ \ \ \ \ \ \ \ \ \  
i_{\min} = \min_{j=0,1,2,\dots} i(G_{\cup_j})
\ .
\end{align*}

The following relation between conductance and \iso will be used. 
For each $P^{(t)}$, let each non-zero entry be at least $1/d$ for some $d>0$. Given that $P_j = \prod_{t=1}^{\cT} P^{(j\cT+t)}$, we  have that each non-zero entry of $P_j$ is at least $1/d^{\cT}$. 
Thus, for each $j=0,1,\dots$ 
we have
\begin{align}
\phi(P_j) 
&\geq \min_{X\subset V: |X|\leq |V|/2}\frac{1}{|X|} \sum_{u_i\in X}\sum_{u_j\in V\setminus X}\frac{1}{d^{\cT}}
= \frac{1}{d^{\cT}} \min_{X\subset V: |X|\leq |V|/2}\frac{|\partial X|}{|X|}
= \frac{i(G_{\cup_j})}{d^{\cT}}
\ . \label{condvsisop}
\end{align}

\section{All-to-all Communication}
\label{sec:all-to-all}

Assume each node $v\in V$ has an \rumor to send to every other node consisting of at most $\ell_v\le \lceil \log n \rceil$ bits
(recall that $n$ is unknown to nodes, but this is only an upper bound on the \rumor length). This length could be scaled by any constant $c>0$ w.l.o.g.
The set of neighbors $N$ and nodes are labeled only for presentation, but nodes do not have and do not use such knowledge; 
instead, they can only send and receive messages to/from them, see the pseudo-codes.

Our \allallfull Algorithm~\ref{allallAlg} starts with counting the number of participants $n$ using the \algfullname algorithm (refer to Section~\ref{sec:counting}), and then proceeds in subsequent epochs. Each epoch is dedicated to finding a new \rumor
(see the Discovery Part) and counting how many copies of this \rumor are at nodes (see the Processing Part).
The Discovery Part proceeds in phases. Each phase parameterized by $index$ corresponds to the discovery of the $index$ bit of some \rumor, whose first $index-1$ bits  have been discovered in preceding phases (but 
of an \rumor not been discovered and counted in previous epochs). 
This discovery is done by spreading and updating variables $match1$ at each node, 
and if no node is discovered, then spreading and updating variables $match0$. 
The phases' parameter $index$ is numbered from $1$ to $\lceil\log n\rceil$ (the latter being an upper bound on the \rumor's length).
In each phase, a sufficiently long broadcast is run to update information on whether the $index$ bit being matched is 1 or 0. 
The length of the broadcast $r' = \lceil\cT\ln n / \ln(1+i_{\min})\rceil$ 
is a bound on temporal diameter of the evolving graph $\mathcal{G^*}$ based on 
the initially computed
$n$, the connectivity parameter $\cT$, and, 
if known, a bound $i_{\min}$ on the \iso, which otherwise can be lower bounded by a function of $n$  
(refer to the first line in Algorithm~\ref{allallAlg} and Section~\ref{sec:convtime}).
Nodes whose \rumors have been discovered in previous epochs, or whose \rumor does not match the bits already discovered in the current epoch (which is encoded by setting the variable $match$ to $false$) only forward the received information in the main algorithm and the procedure \mult (see Algorithm~\ref{multAlg} in Section~\ref{sec:mult-algorithm}) run in the Processing Part. 
All nodes whose \rumor has been discovered (and so, also delivered to all nodes) in the Discovery Part of the current epoch (which is encoded by setting the variable $delivered$ to $true$) are counted using the procedure \mult by all nodes. The output -- the discovered \rumor and its multiplicity -- is stored in external storage at the end of the epoch. The algorithm finishes when no new \rumor is discovered.

\begin{algorithm}[htbp]
\caption{\allallfull algorithm for each node. $\rumor$ is the \rumor initially held by this node. $N$ is the set of neighbors of this node in the current round -- the node does not know them, but can send and receive short messages to/from them.
$r' = \lceil\cT\ln n / \ln(1+i_{\min})\rceil$ as in Lemma~\ref{lemma:broadcast}, where
$i_{\min}$ is the minimum \iso of the evolving graph topology as defined in Section~\ref{sec:convtime}. 
}
\label{allallAlg}
\DontPrintSemicolon
	$n \gets $ \algfullname() \tcp*{Counting n}
    $delivered \gets false$ \tcp*{\dk{Own \rumor not delivered yet}}
	\While(\tcp*[f]{\dk{Iterating Epochs -- as long as there are undelivered \rumors}}){true \label{while}}
	{
        $match \gets \lnot delivered$ \tcp*{\dk{$match$ will indicate if discovered bits match own \rumor}}
        \tcp{Discovery \dk{Part}}
        $new\ \rumor\gets $ empty string of bits\;
        \For(\tcp*[f]{\dk{Iterating Phases -- from most to least significant bit}}){$index \gets 1$ to $\lceil\log n\rceil$\label{outerloop}}{
            $bit \gets $ extract bit $index$ from $\rumor$\;
            $match0 \gets match \land (bit = 0)$
             \tcp*{Indicates matching so far, with 0 in \dk{$index$ bit}}
            $match1 \gets match \land (bit = 1)$
             \tcp*{Indicates matching so far, with 1 in \dk{$index$ bit}}
            \For(\tcp*[f]{\dk{Broadcasting $match1$}}){$round \gets 1$ to $r'$}{
                Send $\langle match1\rangle$ 
				and Receive $\langle match1_v\rangle , \forall v\in N$ \label{send1} \tcp*{To/from neighbors in $N$}
				$match1 \gets \bigvee_{v\in N} match1_v \lor match1$
				\tcp*{\dk{Incorporating neighbors' knowledge about $match1$}} 
            }
			\eIf(\tcp*[f]{\dk{Discovered $index$ bit 1}}){$match1 = true$\label{match1}}{
			    append $'1'$ to $new\ \rumor$\;
			    \lIf{$bit=0$}{$match \gets false$}
			}{
                \For(\tcp*[f]{\dk{Broadcasting $match0$}}){$round \gets 1$ to $r'$}{
                    Send $\langle match0\rangle$ 
		    		and Receive $\langle match0_v\rangle , \forall v\in N$ \label{send0} \tcp*{To/from neighbors in $N$}
				    $match0 \gets \bigvee_{v\in N} match0_v \lor match0$
				    \tcp*{\dk{Incorporating~neighbors'~knowledge~on~$match0$}}
		    	}
			    \If(\tcp*[f]{\dk{Discovered $index$ bit 0}}){$match0 = true$\label{match0}}{
			        append $'0'$ to $new\ \rumor$\;
			        \lIf{$bit=1$}{$match \gets false$}
			    }
			    \lElse(\tcp*[f]{No more matches, all \rumors delivered}){
			        {\bf return}
			    }
			}
        }
        \tcp{Processing \dk{Part}}
        \lIf(\tcp*[f]{\dk{Own \rumor was delivered in this~epoch}}){$match=true$}{$delivered \gets true$} 
		$count\gets $ \mult($new\ \rumor$)\;
		{\bf output} $\langle new\ \rumor, count\rangle$ \tcp*{Move \dk{newly discovered info} to external storage}
	}
\end{algorithm}

\subsection{Analysis of \allallfull}

In this section we present the main theorem of this work, showing the correctness and running time of our \allallfull algorithm.
For the purpose of the analysis, we conceptually divide time into blocks of $\cT$ rounds.
The analysis of \algfullname(\ ) and \mult(\ ) is included in the sections that follow. 
The following lemma will be used. 

\begin{lemma}
\label{lemma:broadcast}
Consider an \ADCS 
with a $\cT$-connected evolving graph topology with $n$ nodes, each holding a true/false value. 
Then, if all nodes broadcast true values (initially held or received from others) for
at least $r' = \cT\ln n / \ln(1+i_{\min})$ rounds,
all nodes know whether there was initially some true value in the system or not.
\end{lemma}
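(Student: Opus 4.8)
The plan is to follow, block by block of $\cT$ consecutive rounds, the set of nodes that currently hold the value \emph{true}. The claim is vacuous if no node starts true (nothing is ever sent, so all nodes stay false and correctly report ``no''), so assume at least one node holds true initially; it then suffices to show that after $r'$ rounds every node holds true.

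First I would impose the block structure. Conceptually split the execution into blocks of $\cT$ consecutive constituent graphs each. By Observation~\ref{obs:sc}, the union graph $G_{\cup_j}$ of block $j$ is connected, and by definition of the \iso it satisfies $i(G_{\cup_j})\ge i_{\min}$. Let $S_j\subseteq V$ be the set of nodes holding true at the end of block $j$, so $S_0$ is the set of initially-true nodes with $|S_0|\ge 1$, and $S_0\subseteq S_1\subseteq\cdots$ since truth is monotone. The key local fact is that one block infects every $G_{\cup_j}$-neighbour of $S_j$: for $u\in S_j$ and a neighbour $w$ of $u$ in $G_{\cup_j}$, the link $\{u,w\}$ lies in $E^{(t)}$ for some round $t$ of block $j{+}1$, where $u$ already holds true, so $w$ receives true in round $t$ and holds it from round $t{+}1$ on; hence $S_{j+1}\supseteq S_j\cup N_j(S_j)$, where $N_j(X)$ denotes the set of $G_{\cup_j}$-neighbours of $X$. (A minor subtlety: if $t$ is the last round of the block then $w$ is only counted at the very end of the block, which is consistent with the definition of $S_{j+1}$.)

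Next I would extract geometric growth from the \iso. While $|S_j|\le n/2$ the \iso gives at least $i_{\min}|S_j|$ links of $G_{\cup_j}$ leaving $S_j$, hence --- after arguing that these reach that many \emph{distinct} outside nodes --- $|S_{j+1}|\ge(1+i_{\min})|S_j|$. Once $|S_j|>n/2$ the complement $\overline{S_j}$ has fewer than $n/2$ nodes, every node of $\overline{S_j}$ touched by a leaving link joins $S_{j+1}$, so $|\overline{S_{j+1}}|\le(1-i_{\min})|\overline{S_j}|$. Summing the two geometric phases, $|S_j|$ exceeds $n/2$ within $O(\log_{1+i_{\min}}n)$ blocks and then $\overline{S_j}$ empties within $O(\log_{1+i_{\min}}n)$ further blocks (a short calculation bounding $\ln(1\pm i_{\min})$; if $i_{\min}\ge 1$ the complement empties in one block), so $S_j=V$; since a block is $\cT$ rounds and $\cT\ln n/\ln(1+i_{\min})=\cT\log_{1+i_{\min}}n$, all nodes are true within $r'$ rounds.

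The hard part will be this geometric-growth step: passing from the \emph{edge}-boundary bound $|\partial X|\ge i_{\min}|X|$ to a bound on the number of \emph{newly infected vertices}, and then combining the below-$n/2$ and above-$n/2$ regimes with the rounding of $r'$ tightly enough to land within the stated bound. A secondary point worth stating explicitly is why the union graph $G_{\cup_j}$ and its \iso (rather than the individual constituent graphs) are the right objects: the broadcast moves at most one hop per round, so a single $\cT$-block guarantees only one ``layer'' of expansion, and that layer is precisely neighbourhood expansion in $G_{\cup_j}$.
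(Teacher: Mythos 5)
Your outline reproduces the paper's own argument: track the informed set $S_j$ block by block and use the \iso of $G_{\cup_j}$ to claim the geometric growth $|S_{j+1}|\ge(1+i_{\min})|S_j|$. You are in fact more careful than the paper, which silently ignores that $|\partial X|\ge i_{\min}|X|$ only holds for $|X|\le n/2$; you explicitly split into a below-$n/2$ phase and an above-$n/2$ complement phase. You also put your finger on the crucial weak point and call it ``the hard part'': passing from the \emph{edge}-boundary bound $|\partial S_j|\ge i_{\min}|S_j|$ to a bound on the number of newly infected \emph{vertices}.

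That step cannot be filled in, so what you flagged is not a wrinkle but the place where both your argument and the paper's break. Edge expansion gives no vertex expansion without a degree bound: the $i_{\min}|S_j|$ boundary edges can all concentrate on a handful of high-degree outside vertices, and the only generic consequence is $|S_{j+1}\setminus S_j|\ge|\partial S_j|/|S_j|\ge i_{\min}$, which is additive rather than multiplicative. In fact the lemma as stated appears to be false. Take the static evolving graph $G^{(t)}=G$ for all $t$ (so $\cT=1$ and $G_{\cup_j}=G$) built on layers $L_0,\dots,L_D$ with $|L_t|=t+1$ and a complete bipartite graph between each pair of consecutive layers; then $n=(D+1)(D+2)/2=\Theta(D^2)$, and one checks $i(G)=2$ (every ball cut $B_t=L_0\cup\dots\cup L_t$ has $|\partial B_t|/|B_t|=2$ exactly, and no cut does better). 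Already $|S_2|=6<(1+i_{\min})|S_1|=9$, so geometric growth fails at the second block; more to the point, flooding a true value from $L_0$ to $L_D$ takes $D=\Theta(\sqrt n)$ rounds while the claimed $r'=\ln n/\ln 3=\Theta(\log n)$. Repairing the statement seems to require either replacing $i_{\min}$ by a vertex isoperimetric number, or assuming a bound on the maximum degree of $G_{\cup_j}$ and charging each outside vertex at most that many boundary edges, and either fix changes $r'$ and would propagate into Theorem~\ref{thm:a2a} and Lemma~\ref{manyalarmsoon}, both of which invoke this lemma.
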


\begin{proof}
Let $\{S_i,V\setminus S_i\}$ be a partition of the set of nodes at the beginning of some block $i$ 
with transition matrix $P_i = \prod_{t=1}^{\cT} P^{(i\cT+t)}$. 
By definition of \iso (refer to Section~\ref{sec:convtime}) we know that
$i_{\min} \leq |\partial S_i|/|S_i|$.
That is, the number of links crossing the partition $\{S_i,V\setminus S_i\}$ is larger than $|S_i| i_{\min}$.
Hence, adding a 1-hop neighbourhood to any initial subset of nodes $|S_i|$, the new subset at the beginning of block $i+1$ is such that
$|S_{i+1}| \geq (1+i_{\min}) |S_i|$.

Consider any node that did not receive true yet, call it $x$, and let $S_i=\{x\}$. 
The question of how many blocks are needed for $x$ to receive a true (if there is any) is equivalent to ask what is the minimum $t$ such that the $t$-hop neighbourhood of $x$ (in the evolving graph) includes all nodes.
That is, we want to find what is the minimum~$r'$
such~that 
$(1+i_{\min})^{r'} \geq n$. 
Manipulating the latter equation, and taking into account that each block has $\cT$ rounds, the claimed 
$r' = \cT\ln n / \ln(1+i_{\min})$ follows.
\end{proof}


\begin{theorem}
\label{thm:a2a}
Consider an \ADCS 
with a $\cT$-connected evolving graph topology, $\cT\in O(1)$,
with minimum \iso $i_{\min}$,
formed by $\ell\geq 1$ \ldr  nodes and $n-\ell$ \notldr nodes, $n\geq 2$,
each holding an \rumor, 
running the \allallfull algorithm. 
Then, 
the \probname problem is solved in 
$$
O\left( 
\frac{n^{1+2\cT(1+\epsilon)} }
{\ell i_{\min}^2} \log n \log \left(\frac{n}{\ell}\right)
+
\frac{ n'}{ \ln(1+i_{\min})} \log^2 n
\right)\textrm{  rounds,}
$$
where $n'\leq n$ is the number of different \rumors.
\end{theorem}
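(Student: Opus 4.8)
The plan is to split the argument into a correctness part and a running-time part, in both cases reducing to the analyses of the two sub-procedures \algfullname$(\ )$ and \mult$(\ )$ established in Sections~\ref{sec:counting} and~\ref{sec:mult-algorithm}, and to the broadcast guarantee of Lemma~\ref{lemma:broadcast}. Throughout, without loss of generality I assume every \rumor has a fixed length $\lceil\log n\rceil$ (by a prefix-free or zero-padding encoding), so that ``extract bit $index$'' is well defined in every phase.

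For correctness I would first invoke the analysis of \algfullname$(\ )$ to conclude that after line~1 every node holds the correct value $n$, so that the parameters $r'$ and $\lceil\log n\rceil$ used afterwards are well defined and identical at all nodes. The heart of the correctness proof is an invariant for the Discovery Part of a fixed epoch, proved by induction on the phase counter $index$: at the start of phase $index$, the string $new\ \rumor$ equals the first $index-1$ bits of some \rumor held by a node not discovered in a previous epoch, and for every node $v$ the flag $match$ is $true$ exactly when $v$'s \rumor has $new\ \rumor$ as a prefix and $v$ was not discovered before. The inductive step applies Lemma~\ref{lemma:broadcast} to the boolean $match1$ (and, if needed, to $match0$): after the $r'$-round broadcast every node knows the OR over all nodes of $match1$, hence all nodes append the same bit and update $match$ consistently; if both ORs are $false$ then, by the invariant, no undiscovered \rumor is consistent with $new\ \rumor$, which -- since whenever an undiscovered \rumor exists $new\ \rumor$ is a prefix of one -- can only happen when no undiscovered \rumor remains, and the algorithm correctly returns. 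After the last phase $new\ \rumor$ is a complete, previously undiscovered \rumor; the nodes still having $match=true$ are precisely its holders, they set $delivered\gets true$, and \mult$(new\ \rumor)$ returns its exact multiplicity by Section~\ref{sec:mult-algorithm}. Consequently each of the first $n'$ iterations of the while loop outputs a distinct \rumor with its true multiplicity, the $(n'+1)$-st iteration returns in its first phase, and \probname is solved.

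For the running time I would bound three contributions separately: (i) the single \algfullname$(\ )$ call on line~1; (ii) the Discovery Parts over all epochs; and (iii) the Processing Parts, i.e.\ the \mult$(\ )$ calls. There are at most $n'+1$ epochs, and a Discovery Part performs at most $\lceil\log n\rceil$ phases of at most $2r'$ rounds each, with $r'=\lceil\cT\ln n/\ln(1+i_{\min})\rceil$, so (ii) contributes $O\!\left(n'\log n\cdot r'\right)=O\!\left(\cT\, n'\log^2 n/\ln(1+i_{\min})\right)$, which is the second term of the claimed bound once $\cT\in O(1)$. For (i) and (iii) I would plug in the round complexities proved in Sections~\ref{sec:counting} and~\ref{sec:mult-algorithm} for the potential-distribution counting machinery -- this is where equation~(\ref{condvsisop}), Theorem~\ref{thm:mihail}, and the truncation/doubling arguments of those sections enter; together with the bound $n'\le n$ on the number of \mult$(\ )$ calls they yield $O\!\left(n^{1+2\cT(1+\epsilon)}\log n\log(n/\ell)/(\ell i_{\min}^2)\right)$, the first term. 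Summing (i)--(iii) gives the stated complexity.

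I expect the main obstacle to be the Discovery invariant, and in particular the termination argument: one must show the return statement fires only when no undiscovered \rumor remains, never mid-discovery, which hinges on the prefix property of $new\ \rumor$ combined with the correctness of the boolean broadcasts from Lemma~\ref{lemma:broadcast}; the interplay of $match$, $match0/match1$, and $delivered$ across phases and epochs is where the bookkeeping is delicate. The quantitative part is comparatively routine given the sub-procedure bounds, the only subtle point being the use of $n'\le n$ to fold the $n'$ \mult$(\ )$ calls into the first term.
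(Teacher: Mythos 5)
Your proposal is correct and follows essentially the same route as the paper: correctness is delegated to Theorems~\ref{thm:many} and~\ref{thm:mult} for the two sub-procedures, the Discovery Part is justified by Lemma~\ref{lemma:broadcast}, and the running time is assembled from the same three contributions (one \algname call, $2r'\lceil\log n\rceil n'$ broadcast rounds, and $n'$ \mult calls), with Equation~\eqref{condvsisop} converting $\phi_{\min}$ to $i_{\min}$. The only cosmetic difference is that you phrase the Discovery correctness as a positive prefix-and-$match$ invariant by induction on $index$ (and make the zero-padding convention explicit), whereas the paper argues the two directions by contradiction; the substance is identical.
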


\begin{proof}
The correctness of \algfullname(\ ) and \mult(\ ) is proved in Theorems~\ref{thm:many} and~\ref{thm:mult}, respectively.
Thus, to complete the proof of correctness, it is enough to prove that all \rumors are discovered, that every discovered \rumor is an \rumor held by some node, and that the algorithm runs under the restrictions of the \ADCS model. Refer to Algorithm~\ref{allallAlg}.

Assume first for the sake of contradiction that there is some node $v$ with \rumor $m$ that is not discovered. 
Each bit of each newly discovered \rumor $m'$ is matched to the \rumors of all other nodes holding $m'$ in Lines~\ref{allallAlg}.\ref{match1} (for a 1 bit) or~\ref{allallAlg}.\ref{match0} (for a 0 bit).
Thus, $m$ will be discovered as long as the number of rounds $r'$ is enough to broadcast the match status of $v$ to all other nodes. 
Lemma~\ref{lemma:broadcast} proves that $r'$ is large enough for $v$ to disseminate its match. Therefore $m$ is discovered, which is a contradiction. 

Assume now, again for the sake of contradiction, that there is some \rumor $m$ that is discovered but it is not an \rumor of any node in the system. 
However, an \rumor is discovered when 
there is a $match$
with the \rumor of some node (or nodes). Therefore, if $m$ is not the \rumor of any node it cannot be discovered.

\algfullname(\ ) and \mult(\ ) are proved to run under the restrictions of the \ADCS model in Theorems~\ref{thm:many} and~\ref{thm:mult}, respectively, and the rest of the messages and calculations in the rest of the algorithm are on Boolean variables. Thus, \algfullname does not violate the $O(\log n)$ bits limit on message size and local computation
All the analysis of the various parts of the algorithm apply to $\cT$-connected evolving graph topologies. Therefore, the proof of correctness is complete.

Regarding the running time complexity, recall that performance is measured in rounds of communication. Theorems~\ref{thm:many} and~\ref{thm:mult} prove the time complexity of \algfullname(\ ) and \mult(\ ) respectively. The running time of the remaining parts can be obtained by simple inspection. Namely, communication is carried out in Lines~\ref{allallAlg}.\ref{send1} and~\ref{allallAlg}.\ref{send0}, each inside a loop of $r'$ iterations, each nested in another loop of $\lceil \log n \rceil$ iterations (Line~\ref{allallAlg}.\ref{outerloop}), and the latter nested in a loop of $n'$ iterations (Line~\ref{allallAlg}.\ref{while}), where $n'$ is the number of different \rumors. Thus, the total number of rounds excluding \algfullname(\ ) and \mult(\ ) is $$2 r' \lceil \log n \rceil n' = \frac{2 \cT\ln n \lceil \log n \rceil n' }{ \ln(1+i_{\min})}.$$

Combining the latter with the running times $O( 
\ln n / \phi_{\min}^2
)$
proved in 
Theorem~\ref{thm:mult} 
and 
$O( 
n\ln n \log (n/\ell) /
(\ell \min\left\{  {\phi_{\min}}^2,\ln(1+i_{\min})\right\})
)$
in
Corollary~\ref{cor:time} 
(to Theorem~\ref{thm:many}), 
the total time complexity is
$$
O\left( 
\frac{n\ln n \log (n/\ell)}
{\ell \min\left\{  {\phi_{\min}}^2,\ln(1+i_{\min})\right\}}
+
\frac{ n'\ln n \log n}{ \ln(1+i_{\min})}
\right)
\ .
$$

Using the bound in Equation~\ref{condvsisop} to replace $\phi_{\min}$, we obtain the claimed time complexity.
\end{proof}
\section{Counting the Number of Copies of an \rumor\label{sec:mult-algorithm}}

In the main \allallfull Algorithm~\ref{allallAlg}, at the end of each epoch we count the number of nodes that have a discovered \rumor as its own \rumor. We do counting using the following procedure \mult (refer to Algorithm~\ref{multAlg}). It tries to locally balance potential truncated to $c\log d$ bits for a sufficiently long time, and once it (almost) stabilizes, returns it after scaling by the number of all nodes $n$ (recall that in the main algorithm \allallfull all nodes compute $n$ in the very beginning).

\begin{algorithm}[htbp]
\caption{\mult algorithm for each node. $\rumor$ is the \rumor initially held by this node and $new\ \rumor$ is the \rumor whose multiplicity has to be counted. $N$ is the set of neighbors of this node in the current round -- the node does not know them, but can send and receive short messages to/from them. Parameters $d$, $r''$, and $c$ are as defined in Theorem~\ref{thm:mult}. 
}
\label{multAlg}
\DontPrintSemicolon
    \leIf(\tcp*[f]{Assign initial potential}){$\rumor=new\ \rumor$}{$\Phi\gets 1$}{$\Phi\gets 0$}
			\For{$round=1$ to $r''$}{ 
				Send $\langle\Phi\rangle$ 
				and Receive $\langle\Phi_v\rangle, \forall v\in N$ \tcp*{To/from neighbors \dk{in $N$}}
					$\Phi \gets \Phi + \sum_{v\in N}\lfloor d^{c-1}\Phi_v\rfloor/d^c - |N|\lfloor d^{c-1}\Phi\rfloor/d^c$ \tcp*{Share potential \dk{truncated to $c \log d$ bits}} 
			}\label{gossipends}
	{\bf return} \dk{$\Phi \cdot n$} rounded to the closest integer (up or down)\;\label{returndelta}
\end{algorithm}

\subsection{Analysis of \mult}
\label{sec:mult}

We analyze the evolution of potentials in the \mult algorithm as a Markov chain on the $\cT$-connected evolving graph 
$\mathcal{G} = \{\mathcal{G}_j\}_{j=0}^{\infty}$,  where 
$\mathcal{G}_j = \{G_{j\cT+t}\}_{t=1}^{\cT}$, 
as defined in Section~\ref{sec:convtime}.
We adjust the potentials distribution after each $\cT$ rounds by the error produced by the truncation of potentials. 
In different nodes such error may delay the convergence to the stationary distribution. To upper bound the total time, we upper bound the delay (that is, the additional error with respect to the stationary distribution due to truncation). As a worst case we assume that the convergence may be delayed at all~nodes.

For the purpose of the analysis, we conceptually divide time into blocks of $\cT$ rounds. 
Let $b=r''/\cT$ be the number of blocks in \mult.
We denote the vector of potentials at the beginning of round $i$ of block $j$ as $\vec{\Phi}_{i,j}$. For clarity, we will sometimes refer to round $i$ of block $j$ as round $\langle i,j\rangle$. For inductive arguments we may refer to the round that follows (resp. preceeds) as $\langle i+1,j\rangle$ (resp. $\langle i-1,j\rangle$), omitting the fact that such round may correspond to a different block (i.e., $\langle 1,j+1\rangle$ and $\langle \cT,j-1\rangle$, resp.) -- this is only for notation and will not influence technical arguments.
Also, we denote the potential right after the potential distribution is stopped (i.e. after the loop in Line~\ref{multAlg}.\ref{gossipends}) as $\vec{\Phi}_{1,b+1}$, meaning the potential after the last block of rounds $b$.

\begin{theorem}
\label{thm:mult}
Consider an \ADCS 
with a $\cT$-connected evolving graph topology
with $n\geq 2$ nodes 
running the \mult algorithm 
with parameters
$d\geq 2n$,
$c \geq 5\alpha + 2\cT+4$,
and $r'' = \cT b$,
where
$b\geq 4 \alpha \ln n / \phi_{\min}^2$,
$\alpha \geq \max\{\log_n (3\cT),3\}$. 
Let $1\leq\delta\leq n$ be the number of copies being counted. 
Then, all nodes return the correct count $\delta$. 
\end{theorem}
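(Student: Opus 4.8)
The plan is to track the potential vector $\vec\Phi$ as an approximate mass-distribution process on the evolving graph $\mathcal{G^*}=\{G_{\cup_j}\}$, and to show that after $b$ blocks every entry of $\vec\Phi$ is within $1/(2n)$ of the true average $\delta/n$, so that multiplying by $n$ and rounding returns $\delta$ exactly. The key decomposition is: at the end of each block $j$, write $\vec\Phi_{1,j+1} = P_j\,\vec\Phi_{1,j} + \vec e_j$, where $P_j=\prod_{t=1}^\cT P^{(j\cT+t)}$ is the doubly-stochastic product matrix of that block and $\vec e_j$ is the accumulated truncation error of that block. I would first argue that the exact process $P_j$ conserves total mass (doubly stochastic, so $\sum_v \Phi_v$ is preserved by the matrix step) and that each truncation step changes each coordinate by at most $O(|N|/d^c)$ in absolute value while still conserving the sum exactly — this last point is crucial and follows from the symmetric form of the update rule $\Phi\gets\Phi+\sum_v\lfloor d^{c-1}\Phi_v\rfloor/d^c-|N|\lfloor d^{c-1}\Phi\rfloor/d^c$, where the amount node $u$ "sends" to $v$, namely $\lfloor d^{c-1}\Phi_u\rfloor/d^c$, is exactly the amount $v$ "receives" from $u$. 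Hence $\sum_v\Phi_v=\delta$ is an invariant throughout, and the stationary target is the uniform vector $(\delta/n)\vec I$.

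Next I would control the distance to stationarity. Let $\vec y_j = \vec\Phi_{1,j} - (\delta/n)\vec I$. Then $\vec y_{j+1} = P_j \vec y_j + \vec e_j$ (note $P_j$ fixes the uniform vector). Applying Theorem~\ref{thm:mihail} blockwise gives $\|P_j\vec y_j\|_2 \le \sqrt{1-\phi_{\min}^2}\,\|\vec y_j\|_2 \le (1-\phi_{\min}^2/2)\|\vec y_j\|_2$ — here I use that each $P_j$ is irreducible, strongly aperiodic (each node keeps at least half its potential, since $|N|\lfloor d^{c-1}\Phi\rfloor/d^c \le |N|\Phi/d \le \Phi/2$ when $d\ge 2n \ge 2|N|$) and has uniform stationary distribution. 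For the error term, within one block of $\cT$ rounds the per-round truncation error at each node is at most $|N|/d^c \le n/d^c$, so $\|\vec e_j\|_\infty \le \cT n/d^c$ and $\|\vec e_j\|_2 \le \cT n^{3/2}/d^c$. Unrolling the recursion, $\|\vec y_{b+1}\|_2 \le (1-\phi_{\min}^2/2)^b\|\vec y_0\|_2 + \sum_{k=0}^{b-1}(1-\phi_{\min}^2/2)^k \|\vec e\|_2 \le (1-\phi_{\min}^2/2)^b\|\vec y_0\|_2 + \frac{2\cT n^{3/2}}{\phi_{\min}^2 d^c}$. Since $\|\vec y_0\|_2 \le \sqrt n$ (each initial potential is $0$ or $1$, so $\|\vec y_0\|_2^2 \le \delta \le n$, in fact $\le \sqrt{\delta}$), and $b \ge 4\alpha\ln n/\phi_{\min}^2$ with $\alpha\ge 3$ forces $(1-\phi_{\min}^2/2)^b \le e^{-2\alpha\ln n} = n^{-2\alpha} \le n^{-6}$, the first term is at most $n^{-6}\sqrt n \le n^{-5}$. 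For the second term, $c \ge 5\alpha + 2\cT + 4 \ge 2\cT+4$ and $d\ge 2n$ give $d^c \ge (2n)^{2\cT+4}$, and using $\alpha\ge\log_n(3\cT)$ so that $3\cT \le n^\alpha$, one checks $\frac{2\cT n^{3/2}}{\phi_{\min}^2 d^c}$ is comfortably below $1/(4n)$ (here one also needs $1/\phi_{\min}^2 \le \mathrm{poly}(n)$, which holds since $\phi_{\min} \ge i_{\min}/d^\cT \ge (2/n)/(2n)^\cT$ by Equation~\ref{condvsisop}; this is exactly where the extra $2\cT$ in the exponent $c$ and the slack in $\alpha$ are consumed). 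Therefore $\|\vec y_{b+1}\|_\infty \le \|\vec y_{b+1}\|_2 < 1/(2n)$, so for every node $|\Phi_v - \delta/n| < 1/(2n)$, i.e. $|n\Phi_v - \delta| < 1/2$, and rounding $n\Phi_v$ to the nearest integer yields $\delta$ at every node.

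Finally I would verify the model constraints: each $\Phi_v$ stays in $[0,\delta]\subseteq[0,n]$ and is a multiple of $1/d^c$ after the first truncation, so it is representable in $O(\log(n d^c)) = O(\cT\log n + \alpha\log n) = O(\log n)$ bits (using $\cT,\alpha = O(1)$ relative to the stated regime), and the update processes neighbor messages in constant-size batches, so message size and memory access remain $O(\log n)$; the running time is $r'' = \cT b = O(\cT^2 \alpha \ln n/\phi_{\min}^2) = O(\ln n/\phi_{\min}^2)$ rounds, matching the bound cited for Theorem~\ref{thm:mult} in the proof of Theorem~\ref{thm:a2a}. The main obstacle is the error-accumulation bookkeeping in the second paragraph: one must be careful that the truncation error does not merely stay bounded but is driven below $1/(2n)$, which requires the geometric contraction of the $P_j$ steps to beat the per-block error injection, and this balance is precisely what pins down the constants $c \ge 5\alpha+2\cT+4$ and $b\ge 4\alpha\ln n/\phi_{\min}^2$ — getting the polynomial-in-$n$ lower bound on $\phi_{\min}$ right (via Equation~\ref{condvsisop} and $d\ge 2n$) is the delicate point, since a sloppy bound there would blow up the required precision $c$.
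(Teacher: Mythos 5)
Your proof is correct and follows essentially the same route as the paper: conservation of total potential via the symmetric truncated update, blockwise contraction via Theorem~\ref{thm:mihail} applied to the product matrices $P_j$, a geometric accumulation bound on the truncation error, and the polynomial lower bound $\phi_{\min}\ge i_{\min}/d^{\cT}$ from Equation~\ref{condvsisop} to close the recursion before rounding. The only differences are cosmetic: you track the unnormalized deviation $\vec y_j=\vec\Phi_{1,j}-(\delta/n)\vec I$ in $L^2$ norm, whereas the paper tracks the normalized vector $\vec\Phi_{1,j}/\|\vec\Phi_{1,1}\|_1$ in squared $L^2$ norm and expands the cross terms $2\xi(v)|\Pi(v)-1/n|+\xi(v)^2$ rather than pulling the additive error outside the norm.
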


\begin{proof}
The algorithm runs under the \ADCS restrictions because the potentials are truncated to $O(\log n)$ bits.

Consider the initial distribution $\vec{\Pi}_1$ on the overall potential $||\vec{\Phi}_{1,1}||_1 = \delta$. 
Given that $d\geq 2n$, the transition matrix is strongly aperiodic. Then, using Theorem~\ref{thm:mihail}, we know that after after $\cT$ rounds the distribution $\vec{\Pi}_{\cT+1}$ at the beginning of round $\cT+1$ would be such that
\begin{align}
\left|\left|\vec{\Pi}_{\cT+1} - \frac{\vec{I}}{n}\right|\right|_2^2 &\leq \left(1-\phi(P_0)^2\right) \left|\left|\vec{\Pi}_1 - \frac{\vec{I}}{n}\right|\right|_2^2 \ .
\label{rweqall}
\end{align}
Where $\phi(P_0)$ is the conductance of $P_0 = \prod_{t=1}^{\cT} P^{(t)}$, 
and $P^{(t)}$ is the matrix of shares used by the algorithm in round $t$. 

On the other hand, due to truncation, the vector of potentials is such that 
\begin{align*}
\left|\left|\frac{\vec{\Phi}_{1,2}}{||\vec{\Phi}_{1,1}||_1} - \frac{\vec{I}}{n}\right|\right|_2^2 
&\leq \sum_{v\in V} \left(\left|\Pi_{\cT+1}(v) - \frac{1}{n}\right| + \xi(v)\right)^2, 
\end{align*}
where $\xi(v)$ is the error introduced by the truncation at node $v$ during the $\cT$ rounds. As a worst case we have assumed that $\xi(v)$ contributes to the deviation with respect to the stationary distribution. Then, we have that
\begin{align}
\left|\left|\frac{\vec{\Phi}_{1,2}}{||\vec{\Phi}_{1,1}||_1} - \frac{\vec{I}}{n}\right|\right|_2^2 
&\leq \sum_{v\in V} \left(\left(\Pi_{\cT+1}(v) - \frac{1}{n}\right)^2 +2\xi(v)\left|\Pi_{\cT+1}(v) - \frac{1}{n}\right| + \xi(v)^2\right)
\nonumber\\ 
&=  
\left|\left|\vec{\Pi}_{\cT+1} - \frac{\vec{I}}{n}\right|\right|_2^2 + \sum_{v\in V} \left(2\xi(v)\left|\Pi_{\cT+1}(v) - \frac{1}{n}\right| + \xi(v)^2\right)
\nonumber\\
&\leq  
\left|\left|\vec{\Pi}_{\cT+1} - \frac{\vec{I}}{n}\right|\right|_2^2 
+ \sum_{v\in V} \left(2\xi(v) + \xi(v)^2\right).\label{boundeqall}
\end{align}

The potential received from each neighboring node is truncated to $c \log d$ bits after dividing by $d$. 
Thus, the error introduced in the calculation of its new potential in each round is at most $(n-1)/d^c \leq 1/d^{c-1}$.
Therefore, we have that after $\cT$ rounds it is $\xi(v)\leq \cT /d^{c-1}$, which 
given that $c \geq 5\alpha + 2\cT+4 \geq \log_d \cT + 1$ it is $\xi(v) \leq 1$. Then,
\begin{align}
\sum_{v\in V} \left(2\xi(v) + \xi(v)^2\right)
&\leq\sum_{v\in V} 3\xi(v) 
\leq\sum_{v\in V} \frac{3\cT }{d^{c-1}} 
=  \frac{3\cT n}{d^{c-1}}
\leq \frac{3\cT}{d^{c-2}}
\ .
\label{erroreqall}
\end{align}

Replacing Eqs.~\ref{rweqall} and~\ref{erroreqall} in~\ref{boundeqall}, we have that
\begin{align*}
\left|\left|\frac{\vec{\Phi}_{1,2}}{||\vec{\Phi}_{1,1}||_1} - \frac{\vec{I}}{n}\right|\right|_2^2 
&\leq \left(1-\phi(P_0)^2\right) \left|\left|\vec{\Pi}_1 - \frac{\vec{I}}{n}\right|\right|_2^2 
+ \frac{3\cT}{d^{c-2}}
\ \le \ 
\left(1-\phi(P_0)^2\right) \left|\left|\frac{\vec{\Phi}_{1,1}}{||\vec{\Phi}_{1,1}||_1} - \frac{\vec{I}}{n}\right|\right|_2^2 
+ \frac{3\cT}{d^{c-2}}
\ .
\end{align*}

The distribution of normalized potential at the beginning of round $\cT+1$ (which is the same as round $\langle 1,2\rangle$) is also a probability distribution.
Thus, the above analysis applies inductively to every subsequent block of $\cT$ rounds.

Then, the vector  of  potentials after $b$ blocks is such that
\begin{align*}
\left|\left|\frac{\vec{\Phi}_{1,b+1}}{||\vec{\Phi}_{1,1}||_1} - \frac{\vec{I}}{n}\right|\right|_2^2 
&\leq \left(1-\phi_{\min}^2\right)^b \left|\left|\frac{\vec{\Phi}_{1,1}}{||\vec{\Phi}_{1,1}||_1} - \frac{\vec{I}}{n}\right|\right|_2^2
+ \frac{3\cT}{d^{c-2}} 
\sum_{i=1}^{b-1}\left(1-\phi_{\min}^2\right)^i\\
&\leq 
\left(1-\phi_{\min}^2\right)^b \left|\left|\frac{\vec{\Phi}_{1,1}}{||\vec{\Phi}_{1,1}||_1} - \frac{\vec{I}}{n}\right|\right|_2^2
+ \frac{3\cT}{d^{c-2}}\frac{1}{\phi_{\min}^2}
\ \le \ 
\exp\left(-b\phi_{\min}^2\right)
+ \frac{3\cT}{d^{c-2}\phi_{\min}^2}
\ .
\end{align*}

In potential distribution algorithms where each non-zero entry of $P^{(j\cT+i)}$ is at least $1/d$, for any $j>0$, $i\in[1,\cT]$, and some $d\geq 1$, as it is the case in the \mult algorithm, it is $\phi_{\min} \geq 2/(nd^{\cT}) \geq 2/d^{\cT+1}$. Then, it is
\begin{align}
\left|\left|\frac{\vec{\Phi}_{1,b+1}}{||\vec{\Phi}_{1,1}||_1} - \frac{\vec{I}}{n}\right|\right|_2^2 
&\leq \exp\left(-b\phi_{\min}^2\right)
+ \frac{3\cT}{4d^{c-2\cT-4}}
\ .\label{eqfrommult}
\end{align}

Given that $b\geq 4 \alpha \ln n / \phi_{\min}^2$, it is 
\begin{align*}
\left|\left|\frac{\vec{\Phi}_{1,b+1}}{||\vec{\Phi}_{1,1}||_1} - \frac{\vec{I}}{n}\right|\right|_2^2 
&\leq \frac{1}{n^{4\alpha}}
+ \frac{3\cT}{4d^{c-2\cT-4}}
\ .
\end{align*}

Replacing $\alpha \geq \log_n (3\cT)$ and $d\geq n$, it is 
\begin{align*}
\left|\left|\frac{\vec{\Phi}_{1,b+1}}{||\vec{\Phi}_{1,1}||_1} - \frac{\vec{I}}{n}\right|\right|_2^2 
&\leq \frac{1}{n^{4\alpha}}
+ \frac{1}{4n^{c-2\cT-4-\alpha}}
\ .
\end{align*}

And for $c\geq 5\alpha + 2\cT+4$ we have that
\begin{align*}
\left|\left|\frac{\vec{\Phi}_{1,b+1}}{||\vec{\Phi}_{1,1}||_1} - \frac{\vec{I}}{n}\right|\right|_2^2 
&\leq \frac{5}{4n^{4\alpha}}
\leq \frac{1}{n^{2\alpha}}, \textrm{ for $\alpha\geq1/2$ and $n \geq 5/4$}
\ .
\end{align*}

Given that $({\Phi}_{1,b+1}[v]/||\vec{\Phi}_{1,1}||_1 - 1/n)^2 \leq \left|\left| \vec{\Phi}_{1,b+1}/||\vec{\Phi}_{1,1}||_1 - \frac{\vec{I}}{n}\right|\right|_2^2$ for any node $v$, we have that 
$({\Phi}_{1,b+1}[v]/||\vec{\Phi}_{1,1}||_1-1/n)^2 
\leq 1/n^{2\alpha}$ and hence $1/n - 1/n^{\alpha} \leq {\Phi}_{1,b+1}[v] /  ||\vec{\Phi}_{1,1}||_1 \leq 1/n + 1/n^{\alpha}$.

Then, we have
$(\delta/n - 1/n^{\alpha-1}) \leq {\Phi}_{1,b+1}[v] \leq (\delta/n + 1/n^{\alpha-1})$
and so
$\delta - 1/n^{\alpha-2} \leq {\Phi}_{1,b+1}[v] n \leq \delta + 1/n^{\alpha-2}$.
Replacing $\alpha\geq 3$ and $n\geq 2$
we have that
$\delta - 1/2 \leq {\Phi}_{1,b+1}[v] n \leq \delta + 1/2$.
Thus, after rounding to the closest integer in Line~\ref{multAlg}.\ref{returndelta} the returned value is $\delta$ as claimed. 
\end{proof}


\section{Counting the Number of Nodes in the System}
\label{sec:counting}

The first step of our \allallfull algorithm is to compute the unknown number of nodes $n$.
Counting the number of nodes in \ADNs has attracted a lot of attention recently~\cite{spirakis,KowalskiMicalp18journal,KowalskiMicalp19journal,conscious,LunaB15}. Most notably, the \MC (MC) protocol~\cite{KowalskiMicalp18journal} was the first one to achieve polynomial time after a flurry of papers improving bounds that started with doubly-exponential time~\cite{spirakis}.
The ADN model is challenging due to lack of node identifiers and arbitrary topology changes, but still lags behind real-world constraints such as limited bandwidth, disconnection of the network, and/or limited computational resources. 

In this work, as an integral part of our \allallfull algorithm, but also of independent interest, we present the \algfullname (\algname) algorithm which computes the number of nodes under the harsh restrictions of \adcs with opportunistic $\cT$-connectivity. 
\algname is inspired on an extension of MC called \MMC (MMC)~\cite{KowalskiMicalp19journal}, but carefully adapted to cope with limited messages, memory accesses, and connectivity. 
Our analysis relies on a bound on mixing time for $\cT$-connected evolving graphs. 
The analysis and design of the algorithm handle the errors produced by truncation of the calculations, purposely introduced to stay within the limits on message size and memory access.
We describe \algname in the following section and analyze~its~properties~afterwards.
The pseudocode of \algname can be found in Algorithms~\ref{leaderAlg} and~\ref{otherAlg}.

\subsection{The \algfullname Algorithm}
\label{algorithm}

\algname is executed synchronously by all nodes. Given that no preliminary knowledge of the network size is available, nodes maintain a running estimate of $n$ that starts with the minimum possible, that is, one more than the number of \ldrs $\ell$. By means of various alarms, the estimate is detected to be either low, high, or correct. Starting with the initial value, the estimate is updated by exponential search until it is correct. If the estimate becomes high (that is, if the correct value is skipped) the correct estimate is found by binary search in the last range between low and~high~detection. 

The evaluation of each estimate is implemented as follows. In each round of communication nodes share some potential values with neighbors in a gossip-based fashion. The fraction of potential shared, as well as the truncation of messages and calculations to adapt to the limits in communication and memory access, are functions of the running estimate, given that $n$ is not known. 
This gossiping continues for a number of rounds after which \ldrs move their potential to a separate accumulator completing what is called conceptually a phase. The process repeats for a number of phases to complete the evaluation of the current estimate.
The number of rounds and phases are functions of the running estimate, the errors produced by truncation, and the disconnection characteristic $\cT$.

If the estimate is correct, the number of rounds and phases are such that \ldrs have moved almost all the initial potential to their accumulators. Then, \ldrs univocally decide that the estimate is correct comparing the value in their accumulators with   some carefully designed range, and they disseminate the news to all other nodes by simple broadcast. 

If on the other hand the estimate is incorrect, that is, if it is either low or high, the algorithm detects by one or more of the following alarms. If nodes receive messages from more neighbors than the running estimate, the estimate is obviously low. If early in the execution potentials are above some carefully calculated threshold, the estimate is low as our analysis shows. Finally, if the \ldrs accumulated potential is outside the abovementioned range, or a node receives a notification of wrong estimate from other nodes, the estimate is either low or high (whatever corresponds).

A crucial difference between \algname and \nameD is the way that potentials are shared and updated in each round to cope with limits in communication and memory access. Specifically, in \nameD the potential $\Phi$ of each node $u$ is updated in round $r$ of phase $p$ of an epoch with estimate $k$ as
$$\Phi_{p,r+1}[u] = \Phi_{p,r}[u] + \sum_{v\in N_{p,r}[u]}\frac{\Phi_{p,r}[v]}{d} - |N_{p,r}[u]|\frac{\Phi_{p,r}[u]}{d},$$ 
where $N_{p,r}[u]$ is the set of neighbors of $u$ in round $r$ of phase $p$, $d=2k^{1+\epsilon}$ and $\epsilon>0$ is an arbitrarily small constant~\footnote{Notice that nodes do not know $N_{p,r}[u]$, but they know $|N_{p,r}[u]|$ after receiving messages. Also, recall that nodes are labeled only for the presentation, but nodes do not have identifiers.}.

In words, each node shares a fraction $1/d$ with its neighbors. So, node $u$ adds to its current potential a $1/d$ fraction of the potential of each neighbor, and subtracts a $1/d$ fraction of its own potential for each neighbor. 
The potential $\Phi$ is transmitted as a whole rather than the fraction since all nodes know $d=2k^{1+\epsilon}$ because the algorithm is synchronous. However, after successive rounds of dividing by $d$, the potential of some nodes may require $\omega(\log d)$ bits of precision. That is, for $k\in\omega(\sqrt[1+\epsilon]{n})$, they would require $\omega(\log n)$ bits of precision, violating the limits on communication and memory access. 
To avoid it, in \algname shared potentials are truncated to the most significant $c\log d$ bits, for some constant $c$ defined later. More precisely, in \algname the potential $\Phi$ of each node $u$ is updated in round $r$ of phase $p$ of an epoch with estimate $k$ as
\begin{align*}
\Phi_{p,r+1}[u] 
&= \Phi_{p,r}[u] + \sum_{v\in N_{p,r}[u]}\frac{\lfloor d^{c-1}\Phi_{p,r}[v]\rfloor}{d^c} - |N_{p,r}[u]|
\frac{\lfloor d^{c-1}\Phi_{p,r}[u]\rfloor}{d^c}\ .
\end{align*}
Also, to attain strong aperiodicity as required by Theorem~\ref{thm:mihail}, we set $d=2k^{1+\epsilon}$.
Given that throughout the execution $k<2n$, \algname does not violate the limits on communication and memory access.


\begin{algorithm}[htbp]
\caption{\algfullname algorithm for each {\bf\emph{\ldr node}}. $N$ is the set of neighbors of this node in the current round -- the node does not know them, but can send and receive short messages to/from them. $\ell$ is the number of \ldr nodes. The parameters $d,p,r,\tau$ and $c$ are as defined in Theorem~\ref{thm:many}.}
\label{leaderAlg}
\DontPrintSemicolon
	$k \gets \ell+1, min\gets k, max\gets\infty$ \tcp*{initial size estimate and range}
	\Repeat(\tcp*[f]{iterating epochs}){$status=done$}{  \label{epochsleader}
		$status\gets probing$ \tcp*{status$=$probing$|$low$|$high$|$done}
		$\Phi\gets 0$ \tcp*{current potential}
		$\rho\gets 0$ \tcp*{potential accumulator}
		\For(\tcp*[f]{iterating phases}){$phase=1$ to $p$}{  \label{phasesleader}
			\For( \tcp*[f]{iterating rounds}){$round=1$ to $r$}{ \label{roundsleader}
				Send $\langle\Phi,status\rangle$ 
				and Receive $\langle\Phi_i,status_i\rangle, \forall i\in N$ \tcp*{To/from neighbors \dk{in $N$}}
				\If{$status=probing$ {\bf and} $|N|< d/2$ {\bf and} $\forall i\in N:status_i=probing$} {
                    $\Phi \gets \Phi + \sum_{v\in N}\lfloor d^{c-1}\Phi_v\rfloor/d^c - |N|\lfloor d^{c-1}\Phi\rfloor/d^c$ \label{potupdate} \tcp*{Share potential \dk{truncated to $c \log d$ bits}}
				}
				\Else(\tcp*[f]{$k<n$}){  \label{leadertoomany}
					$status\gets low$, \label{alarminsecondleader}
					$\Phi\gets \ell$\;
				}
			}
			\lIf(\tcp*[f]{$k<n$}){$phase=1$ {\bf and} $\Phi> \tau$}{ 
					$status\gets low$,
					$\Phi\gets \ell$  \label{leaderthreshold}
			}
			\If(\tcp*[f]{prepare for next phase}){$status=probing$}{ 
				$\rho \gets \rho + \Phi$ \label{consume} \label{rhoupdate} \tcp*{consume potential} 
				$\Phi \gets 0$ \; \label{phireset}
			}
		}
		\If{$status=probing$}{ \label{statuscheck}
			\If(\tcp*[f]{$k=n$}){$(k-\ell)(1-k^{-\gamma})\leq \rho \leq (k-\ell)(1+k^{-\gamma})$}{ 
				$status\gets done$} \label{range}
			\lIf(\tcp*[f]{$k>n$}){$\rho < (k-\ell)(1-k^{-\gamma})$}{ 
				$status\gets high$}  \label{toobig}
			\lIf(\tcp*[f]{$k<n$}){$\rho > (k-\ell)(1+k^{-\gamma})$}{ 
				$status\gets low$}  \label{toolow}
		}
		\For(\tcp*[f]{disseminate status}){$round=1$ to $d$}{  
			Broadcast $\langle status\rangle$
			and Receive $\langle status_i\rangle, \forall i\in N$ \label{leadernotification}\;
		} 
		\eIf(\tcp*[f]{prepare for next epoch}){$status=low$}{  \label{leaderupdate}
			$min\gets k+1$ \;
			\leIf{$max=\infty$}{$k\gets 2k$}{$k\gets\lfloor(min+max)/2\rfloor$}
		}{
			\If{$status=high$}{
		         	$max\gets k-1$ \;
				$k\gets\lfloor(min+max)/2\rfloor$ \;
			}
		}
	}
	\textbf{return} $k$ \;

\end{algorithm}

\begin{algorithm}[htbp]
\caption{\algfullname algorithm for each {\bf\emph{\notldr node}}. $N$ is the set of neighbors of this node in the current round -- the node does not know them, but can send and receive short messages to/from them. $\ell$ is the number of \ldr nodes. The parameters $d,p,r,\tau$ and $c$ are as defined in Theorem~\ref{thm:many}.}
\label{otherAlg}
\DontPrintSemicolon
	$k \gets \ell+1, min\gets k, max\gets\infty$ \tcp*{initial size estimate and range}
	\Repeat(\tcp*[f]{iterating epochs}){$status=done$}{  \label{epochsother}
		$status\gets probing$ \tcp*{status$=$probing$|$low$|$high$|$done}
		$\Phi\gets \ell$ \tcp*{current potential}
		\For(\tcp*[f]{iterating phases}){$phase=1$ to $p$}{ \label{phasesother}
			\For(\tcp*[f]{iterating rounds}){$round=1$ to $r$}{  \label{roundsother}
				Send $\langle\Phi,status\rangle$
				and Receive $\langle\Phi_i,status_i\rangle, \forall i\in N$ \tcp*{To/from neighbors \dk{in $N$}}
				\If{$status=probing$ {\bf and} $|N|< d/2$ {\bf and} $\forall i\in N:status_i=probing$}{ 
					$\Phi \gets \Phi + \sum_{v\in N}\lfloor d^{c-1}\Phi_v\rfloor/d^c - |N|\lfloor d^{c-1}\Phi\rfloor/d^c$ \label{newpot} \tcp*{Share potential \dk{truncated to $c \log d$ bits}}
				}	
				\Else(\tcp*[f]{$k<n$}){  \label{othertoomany}
					$status\gets low$, \label{alarminsecondother}
					$\Phi\gets \ell$ \;
				}
			} 
			\lIf(\tcp*[f]{$k<n$}){$phase=1$ {\bf and} $\Phi> \tau$}{  
					$status\gets low$,
					$\Phi\gets \ell$ \label{otherthreshold}
			}
		}
		\For(\tcp*[f]{disseminate status}){$round=1$ to $d$}{  
			Broadcast $\langle status\rangle$
			and Receive $\langle status_i\rangle, \forall i\in N$ \label{othernotification}\;
			\lIf{$\exists i\in N:status_i\neq probing$}{$status\gets status_i$}
		}
		\If(\tcp*[f]{prepare for next epoch} ){$status=low$}{ \label{otherupdate}
			$min\gets k+1$ \;
			\leIf{$max=\infty$}{$k\gets 2k$}{$k\gets\lfloor(min+max)/2\rfloor$}
		}	
		\Else{ 
			\If{$status=high$}{
		         	$max\gets k-1$\;
				$k\gets\lfloor(min+max)/2\rfloor$\;
			}
		}
	}
	\textbf{return} $k$ \;

\end{algorithm}


\subsection{Analysis of \algfullname} 

We analyze the evolution of potentials as a Markov chain on the $\cT$-connected evolving graph 
$\mathcal{G} = \{\mathcal{G}_j\}_{j=0}^{\infty}$,  where 
$\mathcal{G}_j = \{G_{j\cT+t}\}_{t=1}^{\cT}$, 
as defined in Section~\ref{sec:convtime}.
We adjust the potentials distribution after each $\cT$ rounds by the error produced by the truncation of potentials. 
In different nodes such error may delay the convergence to the stationary distribution. To upper bound the total time, we upper bound the delay (that is, the additional error with respect to the stationary due to truncation). As a worst case we assume that the convergence is delayed at all nodes.

For the purpose of the analysis, we conceptually divide each phase in blocks of $\cT$ rounds. (The number of rounds in each phase will be a multiple of $\cT$.) We denote the vector of potentials at the beginning of round $i$ of block $j$ of phase $h$ as $\vec{\Phi}_{i,j,h}$. For clarity, we will sometimes refer to round $i$ of block $j$ of phase $h$ as round $\langle i,j,h\rangle$, and for inductive arguments we may refer to the round that follows (resp. preceeds) as $\langle i+1,j,h\rangle$ (resp. $\langle i-1,j,h\rangle$), omitting the fact that such round may correspond to a different block.
Also, we refer to the potential right before the beginning of phase $h>1$, denoted as $\vec{\Phi}_{0,0,h}$, meaning the potential after the last round of phase $h-1$, before the potential is reset at the beginning of phase $h$.


First, we prove the following two claims 
about
properties of the potential during the execution of \algname, 
for later use.

\begin{claim}
\label{manyconservation}

Given an \ADCS of $n$ nodes running \algname with parameter $d$, for any round $\langle i,j,h\rangle$ of phase $h$, if $d$ was larger than the number of neighbors of each node $x$ for every round $\langle i',j',h\rangle$ before round $\langle i,j,h\rangle$, then $  ||\vec{\Phi}_{i,j,h}||_1 = ||\vec{\Phi}_{1,1,h}||_1$. 
\end{claim}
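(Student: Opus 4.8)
\textbf{Proof plan for Claim~\ref{manyconservation}.}
The statement is a conservation law for the $\ell_1$-mass of the potential vector, and the natural approach is a straightforward induction on rounds within the fixed phase $h$. The base case is trivial: at round $\langle 1,1,h\rangle$ the equality $||\vec{\Phi}_{1,1,h}||_1 = ||\vec{\Phi}_{1,1,h}||_1$ holds by definition. For the inductive step, I would fix a round $\langle i',j',h\rangle$ (with all its predecessors in phase $h$ satisfying the degree hypothesis) and examine the update rule applied at that round, namely
$$
\Phi_{p,r+1}[u] = \Phi_{p,r}[u] + \sum_{v\in N[u]}\frac{\lfloor d^{c-1}\Phi[v]\rfloor}{d^c} - |N[u]|\frac{\lfloor d^{c-1}\Phi[u]\rfloor}{d^c}\ .
$$

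The key observation is that summing this identity over all $u\in V$ makes the two correction terms cancel: the term $\lfloor d^{c-1}\Phi[u]\rfloor/d^c$ is \emph{added} once for each neighbor of $u$ (i.e. $|N[u]|$ times, once per edge incident to $u$, contributed by the first sum at each neighbor) and \emph{subtracted} exactly $|N[u]|$ times by the second term at $u$ itself. Formally, $\sum_{u}\sum_{v\in N[u]} \lfloor d^{c-1}\Phi[v]\rfloor/d^c = \sum_{v} |N[v]|\,\lfloor d^{c-1}\Phi[v]\rfloor/d^c$ by swapping the order of summation (using that the constituent graph is undirected, so $v\in N[u] \iff u\in N[v]$), which is precisely $\sum_{u} |N[u]|\,\lfloor d^{c-1}\Phi[u]\rfloor/d^c$. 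Hence $\sum_u \Phi_{r+1}[u] = \sum_u \Phi_r[u]$, i.e. $||\vec{\Phi}_{\langle i'+1,j',h\rangle}||_1 = ||\vec{\Phi}_{\langle i',j',h\rangle}||_1$. Chaining these equalities from $\langle 1,1,h\rangle$ up to $\langle i,j,h\rangle$ gives the claim.

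One point I would be careful about is \emph{which} update rule is in force at each round: the pseudocode applies the sharing step in line~\ref{leaderAlg}.\ref{potupdate} (resp.~\ref{otherAlg}.\ref{newpot}) only when $status=probing$, $|N|<d/2$, and all neighbors are probing; otherwise the node jumps to the \emph{low} branch and resets $\Phi\gets\ell$, which does \emph{not} conserve mass. This is exactly why the hypothesis ``$d$ was larger than the number of neighbors of each node for every prior round of phase $h$'' is needed — it guarantees (together with the fact that we are reasoning within a run where no alarm has fired) that the mass-conserving branch was taken at every round up to $\langle i,j,h\rangle$. I would state this dependency explicitly at the start of the induction. I expect this bookkeeping — arguing that the conserving branch is genuinely the one executed under the stated hypothesis — to be the only real subtlety; the algebraic cancellation itself is routine and follows from the symmetry of the neighbor relation and the linearity of the $\ell_1$ norm on nonnegative vectors (the potentials stay nonnegative, so no sign issues arise). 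Note the truncation $\lfloor\cdot\rfloor$ is irrelevant to conservation: whatever truncated quantity leaves $u$ toward a neighbor is exactly the quantity that arrives, so the bookkeeping is unaffected by rounding.
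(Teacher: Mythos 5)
Your argument is correct and follows essentially the same route as the paper's proof: induct over rounds within phase $h$, observe that summing the update rule over all nodes makes the two truncated terms cancel by symmetry of the neighbor relation, and chain the equalities. You phrase the cancellation as an interchange of summation order, whereas the paper pairs up the two oppositely signed occurrences of each edge $\{x',y'\}$ explicitly, but these are the same counting argument. One small thing you got \emph{more} careful than the paper: you flag that the hypothesis (degrees below the threshold, which is $d/2$ in the pseudocode even though the claim statement says $d$) must also preclude the \texttt{status}-based branch from being taken, not just the degree check — the paper's proof silently assumes the probing branch is executed, so your bookkeeping remark is a genuine improvement in precision rather than a gap.
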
 

\begin{proof}
For any given round $\langle i,j,h\rangle$ and any given node $u$, if 
$d/2$ is larger than the number of neighbors of $u$, the potential is updated as (refer to Lines~\ref{leaderAlg}.\ref{potupdate} and~\ref{otherAlg}.\ref{newpot})
\begin{align*}
\Phi_{i+1,j,h}[u] 
&= \Phi_{i,j,h}[u] + \sum_{v\in N_{i,j,h}[u]}\frac{\lfloor d^{c-1}\Phi_{i,j,h}[v]\rfloor}{d^c} - |N_{i,j,h}[u]|
\frac{\lfloor d^{c-1}\Phi_{i,j,h}[u]\rfloor}{d^c}
\ ,
\end{align*} 
where $N_{i,j,h}[u]$ is the set of neighbors of $u$ in round $\langle i,j,h\rangle$. 
Inductively, assume that the claimed overall potential holds for some round $\langle i,j,h\rangle$, 
we want to show that consequently it holds for $\langle i+1,j,h\rangle$.
The potential for round $\langle i+1,j,h\rangle$ is
\begin{align}
||\vec{\Phi}_{i+1,j,h}||_1 &= ||\vec{\Phi}_{i,j,h}||_1 + \frac{1}{d^c}\sum_{u\in V} \left( \sum_{v\in N_{i,j,h}[u]}\lfloor d^{c-1}\Phi_{i,j,h}[v]\rfloor - |N_{i,j,h}[u]|\lfloor d^{c-1}\Phi_{i,j,h}[u]\rfloor \right) \ .\label{manypotvecupdate}
\end{align}

In the \ADCS model, communication is symmetric. That is, for every pair of nodes $x,y\in V$, and round $\langle i,j,h \rangle$ it is $x\in N_{i,j,h}[y] \iff y\in N_{i,j,h}[x]$. 
Fix a pair of nodes $x',y' \in V$ such that in round $\langle i,j,h\rangle$ it is $y'\in N_{i,j,h}[x']$ and hence $x'\in N_{i,j,h}[y']$. 
Consider the summations in Equation~\ref{manypotvecupdate}.
Due to symmetric communication, we have that the potential $\Phi_{i,j,h}[y']$ appears with positive sign when the indices of the summations are $x=x'$ and $y=y'$, and with negative sign when the indices are $x=y'$ and $y=x'$. (Notice that it is truncated the same way at both nodes.) This observation applies to all pairs of nodes that communicate in any round. 
Therefore, we can re-write Equation~\ref{manypotvecupdate} as
\begin{align*}
||\vec{\Phi}_{i+1,j,h}||_1 &= ||\vec{\Phi}_{i,j,h}||_1 + \frac{1}{d^c}
\!\!
\sum_{\substack{u,v\in V:\\v\in N_{i,j,h}[u]\\}} \bigg(
\lfloor d^{c-1}\Phi_{i,j,h}[u]\rfloor - \lfloor d^{c-1}\Phi_{i,j,h}[v]\rfloor
+
\lfloor d^{c-1}\Phi_{i,j,h}[v]\rfloor - \lfloor d^{c-1}\Phi_{i,j,h}[u]\rfloor
\bigg) 
 = 
||\vec{\Phi}_{i,j,h}||_1 \ .
\end{align*}
Thus, the claim follows.
\end{proof}


\begin{claim}
\label{manypotbounds}
Given an \ADCS with $\ell>0$ \ldrs and $n-\ell>0$ \notldr nodes running \algname with parameter $c\geq 2$, for any round $\langle i,j,h\rangle$ and any node $u\in V$, it is $\Phi_{i,j,h}[u]\geq 0$.  
\end{claim}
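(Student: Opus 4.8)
The plan is to prove this by induction on rounds, reusing the round‑indexing scheme $\langle i,j,h\rangle$ introduced for Claim~\ref{manyconservation}, and carrying the induction through phase boundaries and epoch boundaries as well. First I would settle the base cases: at the start of every epoch a \ldr node sets $\Phi\gets 0$ and a \notldr node sets $\Phi\gets\ell$, and since $\ell>0$ both values are nonnegative; likewise, whenever the algorithm performs a constant reassignment $\Phi\gets\ell$ (the alarms in Lines~\ref{leaderAlg}.\ref{alarminsecondleader}, \ref{otherAlg}.\ref{alarminsecondother} and the first‑phase threshold in Lines~\ref{leaderAlg}.\ref{leaderthreshold}, \ref{otherAlg}.\ref{otherthreshold}) or the \ldr phase‑end reset $\Phi\gets 0$ (Lines~\ref{leaderAlg}.\ref{rhoupdate}--\ref{leaderAlg}.\ref{phireset}), the resulting potential is trivially nonnegative, and a \notldr node simply carries its potential over a phase boundary unchanged. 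So the only nontrivial case is the gossip update.

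For the inductive step, assume $\Phi_{i,j,h}[u]\ge 0$ for every node $u$ at round $\langle i,j,h\rangle$, and consider the gossip update of Lines~\ref{leaderAlg}.\ref{potupdate} and~\ref{otherAlg}.\ref{newpot}, which is executed only when $status=probing$ and $|N_{i,j,h}[u]|<d/2$:
$$\Phi_{i+1,j,h}[u]=\Phi_{i,j,h}[u]+\sum_{v\in N_{i,j,h}[u]}\frac{\lfloor d^{c-1}\Phi_{i,j,h}[v]\rfloor}{d^c}-|N_{i,j,h}[u]|\,\frac{\lfloor d^{c-1}\Phi_{i,j,h}[u]\rfloor}{d^c}.$$
By the inductive hypothesis every $\Phi_{i,j,h}[v]\ge 0$, hence $d^{c-1}\Phi_{i,j,h}[v]\ge 0$ and so $\lfloor d^{c-1}\Phi_{i,j,h}[v]\rfloor\ge 0$; thus the sum over neighbors is nonnegative and can be dropped when lower‑bounding. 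For the subtracted term, using $\lfloor x\rfloor\le x$ together with the guard $|N_{i,j,h}[u]|<d/2$,
$$|N_{i,j,h}[u]|\,\frac{\lfloor d^{c-1}\Phi_{i,j,h}[u]\rfloor}{d^c}\le |N_{i,j,h}[u]|\,\frac{d^{c-1}\Phi_{i,j,h}[u]}{d^c}=\frac{|N_{i,j,h}[u]|}{d}\,\Phi_{i,j,h}[u]<\frac12\,\Phi_{i,j,h}[u].$$
Combining, $\Phi_{i+1,j,h}[u]\ge \Phi_{i,j,h}[u]-\tfrac12\Phi_{i,j,h}[u]=\tfrac12\Phi_{i,j,h}[u]\ge 0$, which closes the induction. (The hypothesis $c\ge 2$ is used only to ensure $d^{c-1}\ge 1$ so the truncations behave as intended; the nonnegativity argument itself needs nothing beyond the guard $|N|<d/2$ and $\lfloor x\rfloor\le x$.)

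I do not expect any genuine difficulty here: the analytic content is the single two‑line inequality above. The only thing that requires care — and what I would treat as the ``main obstacle'' — is the exhaustive bookkeeping, i.e.\ verifying that the initialization, the two alarm resets, the first‑phase threshold reset, the \ldr phase‑end reset, and the gossip update are the \emph{only} lines of Algorithms~\ref{leaderAlg} and~\ref{otherAlg} that write to $\Phi$, so that no assignment is overlooked. Once that enumeration is in place, each case is immediate and the induction goes through for all rounds $\langle i,j,h\rangle$ and all nodes $u$, giving $\Phi_{i,j,h}[u]\ge 0$ as claimed.
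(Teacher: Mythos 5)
Your proof is correct and follows essentially the same inductive strategy as the paper: handle the assignment cases ($\Phi\gets 0$, $\Phi\gets\ell$) directly and, for the gossip update, use the guard $|N|<d/2$ together with $\lfloor x\rfloor\le x$ to bound the subtracted term by $\tfrac12\Phi$, while the neighbor sum is nonnegative by the inductive hypothesis. If anything, your write-up is cleaner than the paper's: the paper derives an upper bound on $\Phi_{i'+1,j',h'}[u]$ via $\lfloor x\rfloor\le x$ before invoking $|N|\Phi/d<\Phi$, which is not what is needed for a nonnegativity lower bound, whereas you drop the nonnegative sum and bound the subtracted term directly; you are also right that $c\ge 2$ plays no real role here.
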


\begin{proof}
At the beginning of the first round the potential of the \ldr nodes is $0$ and the potential of any \notldr node $x$ is $\ell$. Thus, the claim follows. 

Inductively, for any round $\langle i,j,h\rangle$ after the first round, we consider two cases according to node status. 
If a node $x$ is in alarm status ``low'' at the beginning of the round, then it is $\Phi_{i,j,h}[x]=\ell$ because, whenever the status of a node is updated to ``low'', its potential is set to $\ell$ and will not change until the next epoch (refer to Algorithms~\ref{leaderAlg} and~\ref{otherAlg})

In the second case, if a node $u$ is in ``probing'' status at the beginning of round $\langle i,j,h\rangle$, it means that it had its potential updated in all previous rounds $\langle i',j',h'\rangle$ as (refer to Lines~\ref{leaderAlg}.\ref{potupdate} or~\ref{otherAlg}.\ref{newpot})
\begin{align}
\Phi_{i'+1,j',h'}[u] 
&= \Phi_{i',j',h'}[u] + \sum_{v\in N_{i',j',h'}[u]}\frac{\lfloor d^{c-1}\Phi_{i',j',h'}[v]\rfloor}{d^c} - |N_{i',j',h'}[u]|
\frac{\lfloor d^{c-1}\Phi_{i',j',h'}[u]\rfloor}{d^c}\nonumber\\
&\leq \Phi_{i',j',h'}[u] + \sum_{v\in N_{i',j',h'}[u]}\frac{\Phi_{i',j',h'}[v]}{d} - |N_{i',j',h'}[u]|
\frac{\Phi_{i',j',h'}[u]}{d} 
+ |N_{i',j',h'}[u]|\frac{1}{d^c}\ .\label{updateeq}
\end{align}

Where the last term is the maximum extra potential kept by node $u$ due to truncation to $c \log d$ bits by nodes in $N_{i,j,h}[u]$.
For all rounds $\langle i',j',h'\rangle$, node $u$ exchanged potential with less than 
$d/2$ neighbors, because otherwise it would have been changed to alarm status (refer to Lines~\ref{leaderAlg}.\ref{alarminsecondleader} and~\ref{otherAlg}.\ref{alarminsecondother}).
Therefore it is $|N_{i',j',h'}[u]|\Phi_{ i',j',h'}[u]/d < \Phi_{i',j',h'}[u]$ which implies $\Phi_{i,j,h}[u]\geq 0$. 
\end{proof}


The general structure of the rest of the proof is the following. We divide the analysis in 4 cases according to the relation between the estimate $k$ and the network size $n$.
\vspace{0.1in}
\begin{center}
\begin{tabular}{|c|c|c|}
\hline
if & then & proved in \\
\hline
$k^{1+\epsilon} < n$ & All nodes receive alarm & Lemma~\ref{manyalarmsoon} \& auxiliary~\ref{manyunalarmed} and~\ref{nolowalarm}.\\
\hline
$k < n \leq k^{1+\epsilon}$ & $\rho > (k-\ell) \left(1 + \frac{1}{k^\gamma}\right)$ & Lemma~\ref{manyksquare}.\\
\hline
$k = n$ & $(k-\ell) \left(1 - \frac{1}{k^\gamma}\right) \leq \rho \leq (k-\ell) \left(1 + \frac{1}{k^\gamma}\right)$ & Lemma~\ref{manycorrect}.\\
\hline
$k>n$ & $\rho <  (k-\ell)\left(1-\frac{1}{k^\gamma}\right)$ & Lemma~\ref{kaboven}.\\
\hline
\end{tabular}
\end{center}
\vspace{0.1in}

We begin the analysis considering the case $k=n$, as follows.

\begin{lemma}
\label{manycorrect}
Consider an \ADCS 
with a $\cT$-connected evolving graph topology
with $\ell>0$ \ldrs and $n-\ell>0$ \notldr nodes 
running the \algname protocol 
with parameters
$d\geq 2k$,
$p \geq (2\gamma\ln k)/\left(\ell\left(\frac{1}{k}+\frac{1}{k^{\alpha}}\right)\right)$, 
$r\geq 4\cT\alpha \ln k / \phi_{\min}^2$,
and $c\geq 5\alpha + 2\cT + 4$,
where
$\gamma>0$ 
and
$\alpha \geq \max\{2,1+\gamma+\log_k 3,\log_k (3\cT)\}$.
Then, if 
$k = n$, 
the potential $\rho$ consumed by each of the $\ell$ \ldr nodes is such that 
$$(k-\ell) \left(1 - \frac{1}{k^\gamma}\right) \leq \rho \leq (k-\ell) \left(1 + \frac{1}{k^\gamma}\right).$$
\end{lemma}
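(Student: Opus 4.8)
The plan is to track the vector of potentials $\vec\Phi$ through the phases of an epoch with correct estimate $k=n$, showing that (i) within each phase the normalized potential converges close to the uniform distribution $\vec I/n$, with the truncation error contributing a controlled additive term, and (ii) across phases the supervisors consume a geometrically shrinking share of the remaining potential, so that after $p$ phases almost all of the initial potential $\ell(n-\ell)$ has been moved to the accumulators $\rho$. First I would fix attention on a single phase $h$ and apply Theorem~\ref{thm:mihail} block-by-block exactly as in the proof of Theorem~\ref{thm:mult}: starting from $\vec\Phi_{0,0,h}$ (the potential surviving from phase $h-1$, before reset), the $\ell$ supervisors reset to $0$ and the remaining potential redistributes. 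Since $d\ge 2k$ forces strong aperiodicity and Claim~\ref{manyconservation} guarantees the $\ell_1$-mass is conserved during the gossip rounds (no node ever sees $\ge d/2$ neighbors when $k=n$, which must be argued — see below), I get for the end-of-phase normalized potential a bound of the form
\begin{align*}
\left\|\frac{\vec\Phi_{r+1,\,\cdot\,,h}}{\|\vec\Phi_{1,1,h}\|_1}-\frac{\vec I}{n}\right\|_2^2
\;\le\; \exp(-b\phi_{\min}^2) + \frac{3\cT}{d^{c-2\cT-4}\phi_{\min}^2},
\end{align*}
where $b=r/\cT\ge 4\alpha\ln k/\phi_{\min}^2$, which as in Theorem~\ref{thm:mult} is at most $k^{-2\alpha}$ under the stated constraints on $c,\alpha$. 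Hence each node's potential at the end of phase $h$ lies in $\|\vec\Phi_{1,1,h}\|_1\,(1/n \pm 1/n^{\alpha})$, and in particular each supervisor consumes $\rho$-increment in $\|\vec\Phi_{1,1,h}\|_1\,(1/n\pm 1/n^\alpha)$ into its accumulator at Line~\ref{leaderAlg}.\ref{rhoupdate}, while Claim~\ref{manypotbounds} keeps everything nonnegative.

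Next I would set up the cross-phase recursion. Let $\Psi_h=\|\vec\Phi_{1,1,h}\|_1$ be the total potential circulating at the start of phase $h$; then $\Psi_1=\ell(n-\ell)$ (the $n-\ell$ supervised nodes start with $\ell$ each, supervisors with $0$), and after the $\ell$ supervisors each remove a $(1/n\pm 1/n^\alpha)$ fraction we get $\Psi_{h+1}\in\Psi_h\bigl(1-\ell(1/n\pm 1/n^\alpha)\bigr)$. Iterating, $\Psi_{p+1}$ is at most $\Psi_1(1-\ell/n+\ell/n^\alpha)^p$, and the choice $p\ge (2\gamma\ln k)/(\ell(1/k+1/k^\alpha))$ is exactly calibrated so that this residual is below roughly $(n-\ell)k^{-\gamma}$, i.e.\ negligible compared to the target $(k-\ell)(1\pm k^{-\gamma})$. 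The total consumed potential over all $\ell$ supervisors is $\Psi_1-\Psi_{p+1}$, so each single supervisor's $\rho$ equals $(\Psi_1-\Psi_{p+1})/\ell$ up to the accumulated per-phase deviations; summing the per-phase errors $\sum_h \Psi_h/n^\alpha \le \Psi_1 p / n^\alpha$ and bounding $p$ polynomially in $k$, the condition $\alpha\ge 1+\gamma+\log_k 3$ makes this error term smaller than $(k-\ell)k^{-\gamma}/3$ (say), and likewise for the geometric-sum leftover. Collecting the three contributions — the main term $\Psi_1/\ell=n-\ell=k-\ell$, the residual $\Psi_{p+1}/\ell$, and the accumulated convergence error — all within a $(k-\ell)k^{-\gamma}$ band gives the claimed two-sided bound on $\rho$.

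The main obstacle I anticipate is twofold. First, the bookkeeping of how the per-phase multiplicative errors $(1\pm n^{-\alpha})$ compound over $p$ phases: one must be careful that $p$ is only polynomial in $k$ (which it is, since $\ell\ge 1$ and $1/k+1/k^\alpha\ge 1/k$, giving $p=O(\gamma k\ln k)$), so that $p\cdot n^{-\alpha}$ is still dominated by $k^{-\gamma}$ thanks to $\alpha\ge 1+\gamma+\log_k 3$ — getting the constants to line up with the exact threshold $(k-\ell)(1\pm k^{-\gamma})$ used by the supervisor at Line~\ref{leaderAlg}.\ref{range} is where the specific parameter inequalities earn their keep. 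Second, and more subtly, one must justify that when $k=n$ the gossip branch (rather than the alarm branch) is actually taken in every round: no node has $\ge d/2 = k^{1+\epsilon}\ge n$ neighbors in a graph on $n$ nodes, so $|N|<d/2$ always holds, and the phase-1 threshold test $\Phi>\tau$ must be shown not to trigger when $k=n$ — this presumably follows from an upper bound on potentials early in the execution and is the part most likely to require a separate sub-argument (or a forward reference to the auxiliary lemmas \ref{manyalarmsoon}, \ref{manyunalarmed}, \ref{nolowalarm}). Once the "no false alarm" invariant is in hand, the rest is the Markov-chain convergence estimate of Theorem~\ref{thm:mult} run $p$ times in series.
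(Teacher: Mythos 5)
Your plan is essentially the paper's: per phase, Theorem~\ref{thm:mihail} block-by-block with the truncation error added (reusing the algebra of Theorem~\ref{thm:mult} via Eq.~\ref{eqfrommult}) to show each node ends a phase holding a $(1/k \pm 1/k^{\alpha})$ fraction of the circulating potential, then a cross-phase recursion on the total. However, your error bookkeeping diverges from the paper's in a way that does not quite close. You split $\rho$ as $(\Psi_1-\Psi_{p+1})/\ell$ plus an accumulated deviation bounded crudely by $\sum_h \Psi_h/n^{\alpha}\le \Psi_1 p/n^{\alpha}$. With $\Psi_1=\ell(n-\ell)$ and $p=\Theta(\gamma k\ln k/\ell)$, that expression is $\Theta\bigl(\gamma\ln k\cdot(k-\ell)/k^{\alpha-1}\bigr)$; to make it smaller than $(k-\ell)/k^{\gamma}$ you would need $k^{\alpha-1-\gamma}\gtrsim\gamma\ln k$, which is strictly more than what $\alpha\ge 1+\gamma+\log_k 3$ provides. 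The paper never bounds a raw error sum of this form. Instead it writes $\rho$ directly as the geometric series
\begin{align*}
\rho \;\ge\; \ell(k-\ell)\Bigl(\tfrac{1}{k}-\tfrac{1}{k^{\alpha}}\Bigr)\sum_{i=0}^{p-1}\Bigl(1-\ell\bigl(\tfrac{1}{k}+\tfrac{1}{k^{\alpha}}\bigr)\Bigr)^i,
\end{align*}
evaluates it in closed form, and the ratio $\tfrac{1/k-1/k^{\alpha}}{1/k+1/k^{\alpha}}=\tfrac{k^{\alpha}-k}{k^{\alpha}+k}$ absorbs the entire deviation multiplicatively; the $\alpha\ge 1+\gamma+\log_k 3$ threshold then falls out of a single algebraic inequality $k^{\alpha-\gamma}\ge 3k$, with no $p$ or $\ln k$ factor entering. (The upper bound on $\rho$ is handled symmetrically, ending at the very same inequality.) So if you follow your stated plan literally you will need a slightly stronger hypothesis on $\alpha$ than the lemma gives; to match the lemma you should fold the $\pm 1/k^{\alpha}$ terms into the series ratio rather than keeping them as an additive residual.

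On your second concern: you are right that the lemma implicitly assumes the gossip branch (not the alarm branch) runs every round, but the paper does not argue this inside Lemma~\ref{manycorrect} at all. The neighbor-count alarm cannot fire because the hypothesis $d\ge 2k=2n$ means $d/2\ge n>|N|$ always, and the phase-1 threshold alarm not firing when $k\ge n$ is exactly the content of the separate Lemma~\ref{nolowalarm}; these are handled as independent facts in the correctness argument of Theorem~\ref{thm:many}, not as a sub-argument of Lemma~\ref{manycorrect}. So this is a correct instinct on your part but not something the lemma itself must discharge.
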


\begin{proof}

Given that $d\geq 2n$ and $c > \log_d \cT + 1$, we have from Eq.~\ref{eqfrommult} in the proof of Theorem~\ref{thm:mult} (changing the notation appropriately) that right before the beginning of any phase $h>1$ it is 

\begin{align*}
\left|\left|\frac{\vec{\Phi}_{0,0,2}}{||\vec{\Phi}_{1,1,1}||_1} - \frac{\vec{I}}{n}\right|\right|_2^2 
&\leq \exp\left(-b\phi_{\min}^2\right)
+ \frac{3\cT}{4d^{c-2\cT-4}}
\ .
\end{align*}

The above applies to any phase as long as $d\geq n$. Thus, we have that right before the beginning of any phase $h>1$:
\begin{align}
\left|\left|\frac{\vec{\Phi}_{0,0,h}}{||\vec{\Phi}_{1,1,h-1}||_1} - \frac{\vec{I}}{n}\right|\right|_2^2 
&\leq \exp\left(-b\phi_{\min}^2\right)
+ \frac{3\cT}{4d^{c-2\cT-4}}
\ .
\label{disteq}
\end{align}

Given that $r\geq 4\cT \alpha \ln k / \phi_{\min}^2$ and each block has $\cT$ rounds, it is $b\geq 4\alpha \ln k / \phi_{\min}^2$. Thus, 
\begin{align*}
\left|\left|\frac{\vec{\Phi}_{0,0,h}}{||\vec{\Phi}_{1,1,h-1}||_1} - \frac{\vec{I}}{n}\right|\right|_2^2 
&\leq \frac{1}{k^{4\alpha}}
+ \frac{3\cT}{4d^{c-2\cT-4}}
\ .
\end{align*}

Replacing $\alpha \geq \log_k (3\cT)$ and $d\geq k$, it is 
\begin{align*}
\left|\left|\frac{\vec{\Phi}_{0,0,h}}{||\vec{\Phi}_{1,1,h-1}||_1} - \frac{\vec{I}}{n}\right|\right|_2^2 
&\leq \frac{1}{k^{4\alpha}}
+ \frac{1}{4d^{c-2\cT-4-\alpha}}
\ .
\end{align*}

And for $c\geq 5\alpha + 2\cT+4$ we have that
\begin{align*}
\left|\left|\frac{\vec{\Phi}_{0,0,h}}{||\vec{\Phi}_{1,1,h-1}||_1} - \frac{\vec{I}}{n}\right|\right|_2^2 
&\leq \frac{5}{4k^{4\alpha}}
\leq \frac{1}{k^{2\alpha}}\ , \textrm{ for $\alpha\geq1/2$ and $k \geq 2$}
\ .
\end{align*}

Given that $({\Phi}_{0,0,h}[v]/||\vec{\Phi}_{1,1,h-1}||_1 - 1/n)^2 \leq \left|\left| \vec{\Phi}_{0,0,h}/||\vec{\Phi}_{1,1,h-1}||_1 - \frac{\vec{I}}{n}\right|\right|_2^2$ for any node $v$ and phase $h>1$, we have that $
({\Phi}_{0,0,,h}[v]/||\vec{\Phi}_{1,1,h-1}||_1-1/n)^2 
=({\Phi}_{0,0,,h}[v]/||\vec{\Phi}_{1,1,h-1}||_1-1/k)^2 
\leq 1/k^{2\alpha}$ and hence ${\Phi}_{0,0,h}[v] \geq (1/k - 1/k^{\alpha}) ||\vec{\Phi}_{1,1,h-1}||_1$.
Notice that the latter is true for any initial distribution.
Therefore, after each phase a \ldr node consumes 
between $1/k-1/k^{\alpha}$ and $1/k+1/k^{\alpha}$ fraction of the total
potential in the system, and the total potential in the system drops by at least
$\ell(1/k-1/k^{\alpha})$  and by at most $\ell(1/k+1/k^{\alpha})$ fraction. 
Recall that the initial overall potential in the system is $\ell(n-\ell)=\ell(k-\ell)$, and that by Claim~\ref{manyconservation}, if $d>n$, the overall potential in the system is the same throughout each phase.

Using the latter observations, we first find conditions on the number of phases $p$ to obtain the desired bounds on $\rho$, as follows.
After $p$ phases a \ldr node consumes {\em at least}
\begin{align}
\rho &\geq \ell(k-\ell) \left(\frac{1}{k} - \frac{1}{k^{\alpha}}\right) \sum_{i=0}^{p-1} \left(1- \ell\left(\frac{1}{k}+\frac{1}{k^{\alpha}}\right)\right)^i \ ,\label{rholb}
\end{align}
and {\em at most}
\begin{align}
\rho &\leq \ell(k-\ell) \left(\frac{1}{k} + \frac{1}{k^{\alpha}}\right) \sum_{i=0}^{p-1} \left(1- \ell\left(\frac{1}{k}-\frac{1}{k^{\alpha}}\right)\right)^i
\ .
\label{rhoub}
\end{align}
Given that $0<\ell\left(\frac{1}{k}+\frac{1}{k^{\alpha}}\right)<1$ for $\alpha \geq 2$ and $k>\ell$, Equation~\ref{rholb} is
\begin{align*}
\rho &\geq \ell(k-\ell) \left(\frac{1}{k} - \frac{1}{k^{\alpha}}\right) \frac{1-\left(1- \ell\left(\frac{1}{k}+\frac{1}{k^{\alpha}}\right)\right)^p}{1-\left(1- \ell\left(\frac{1}{k}+\frac{1}{k^{\alpha}}\right)\right)} 
\ = \ 
(k-\ell) 
\frac{ k^{\alpha}-k}{k^{\alpha}+k}
\left(1-\left(1- \ell\left(\frac{1}{k}+\frac{1}{k^{\alpha}}\right)\right)^p\right)
\ .
\end{align*}
Given that $0<\ell\left(\frac{1}{k}+\frac{1}{k^{\alpha}}\right)<1$ 
for $\ell<k$ and $\alpha \geq 2$, it is 
\begin{align*}
\rho &\geq (k-\ell) 
\frac{ k^{\alpha}-k}{k^{\alpha}+k}
\left(1-\exp\left(- p\ell\left(\frac{1}{k}+\frac{1}{k^{\alpha}}\right)\right)\right)
\ .
\end{align*}
Thus, to prove the lower bound on $\rho$, it is enough to find values of $p$ and $\alpha$ such that
\begin{align*}
\frac{ k^{\alpha}-k}{k^{\alpha}+k}
\left(1-\exp\left(- p\ell\left(\frac{1}{k}+\frac{1}{k^{\alpha}}\right)\right)\right)
&\geq 1-\frac{1}{k^\gamma} \ .
\end{align*}
We note first that for $$p\geq \frac{2\gamma\ln k}{\ell\left(\frac{1}{k}+\frac{1}{k^{\alpha}}\right)} \ ,$$ 
it is $$1-\exp\left(- p\ell\left(\frac{1}{k}+\frac{1}{k^{\alpha}}\right)\right) \geq 1-\frac{1}{k^{2\gamma}} \ .$$
Replacing, it is enough to prove
\begin{align}
\frac{ k^{\alpha}-k}{k^{\alpha}+k} \left(1+\frac{1}{k^\gamma}\right) &\geq 1 \label{eqceps}\\
k^{\alpha-\gamma} &\geq 2k + k^{1-\gamma} \ . \nonumber
\end{align}
Thus, for $\gamma>0$ it is enough to prove
$k^{\alpha-\gamma} \geq 3k$,
which is true for $\alpha \geq 1+\gamma+\log_k 3$. 
We show now the upper bound on $\rho$ starting from Equation~\ref{rhoub}:
\begin{align*}
\rho &\leq \ell(k-\ell) \left(\frac{1}{k} + \frac{1}{k^{\alpha}}\right) \sum_{i=0}^{p-1} \left(1- \ell\left(\frac{1}{k}-\frac{1}{k^{\alpha}}\right)\right)^i.
\end{align*}
Given that $1- \ell\left(\frac{1}{k}-\frac{1}{k^{\alpha}}\right)<1$ for $\alpha \geq 2>1$, it is
\begin{align*}
\rho &\leq \ell(k-\ell) \left(\frac{1}{k} + \frac{1}{k^{\alpha}}\right) \frac{1-\left(1- \ell\left(\frac{1}{k}-\frac{1}{k^{\alpha}}\right)\right)^p}{1-\left(1- \ell\left(\frac{1}{k}-\frac{1}{k^{\alpha}}\right)\right)} 
\ = \
(k-\ell) 
\frac{ k^{\alpha}+k}{k^{\alpha}-k}
\left(1-\left(1- \ell\left(\frac{1}{k}-\frac{1}{k^{\alpha}}\right)\right)^p\right).
\end{align*}
Given that $0<\ell\left(\frac{1}{k}-\frac{1}{k^{\alpha}}\right)<1$ and $p>0$, it is $\left(1-\left(1- \ell\left(\frac{1}{k}-\frac{1}{k^{\alpha}}\right)\right)^p\right)<1$. Then, replacing, we get
\begin{align*}
\rho &\leq (k-\ell) 
\frac{ k^{\alpha}+k}{k^{\alpha}-k} \ .
\end{align*}
Thus, to prove the upper bound on $\rho$, it is enough to show that
\begin{align*}
\frac{ k^{\alpha}+k}{k^{\alpha}-k}
&\leq 1+\frac{1}{k^\gamma} \ .
\end{align*}
This is the same as Equation~\ref{eqceps} and hence the claim follows.
\end{proof}


The previous lemma shows that, after running \algname enough time, if 
for some \ldr node it is $ \rho > (k-\ell) \left(1 + \frac{1}{k^\gamma}\right)$
or
$\rho < (k-\ell) \left(1 - \frac{1}{k^\gamma}\right)$,
for some $\gamma>0$,
we know that the estimate $k$ is wrong. However, the complementary case, that is, 
$(k-\ell) \left(1 - \frac{1}{k^\gamma}\right) \leq \rho \leq (k-\ell) \left(1 + \frac{1}{k^\gamma}\right)$,
may occur even if the estimate is $k\neq n$ and hence the error has to be detected by other means. 
To prove correctness in that case we further separate the range of $k$ in three cases. 
The first one, when $k<n\leq k^{1+\epsilon}$, for some $\epsilon>0$, in the following lemma, 
which is based on upper bounding the potential left in the system after running \algname long enough.

\begin{lemma}
\label{manyksquare}
Consider an \ADCS 
with a $\cT$-connected evolving graph topology
with $\ell>0$ \ldrs and $n-\ell>0$ \notldr nodes 
running the \algname protocol 
with parameters
$d\geq k^{1+\epsilon}$,
$p\geq 2\delta(\ln k)/(\ell\left(1/n+1/k^{\beta}\right))$,
$r\geq 4\cT \beta \ln k / \phi_{\min}^2$,
and $c\geq 5\beta+2\cT+4$, 
where
$\beta \geq \log_k \max\{(n(2k^{\delta} + 1)),3\cT\}$,
$\beta >2$,
$\delta > \log_k (nk^\gamma/(nk^\gamma-(n-1)(k^\gamma+1)))$, 
$\gamma >\log_k (n-1)$,
and $\epsilon>0$.
Then, if
$k < n \leq k^{1+\epsilon}$, 
the potential $\rho$ consumed by any \ldr node is $\rho >  (k-\ell)\left(1+1/k^\gamma\right)$.
\end{lemma}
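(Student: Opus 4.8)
The plan is to mirror the structure of Lemma~\ref{manycorrect}, but now exploit that the estimate $k$ is strictly smaller than $n$ to show that \emph{too little} potential per phase is consumed by the supervisors, so that after $p$ phases a strictly larger residual remains and hence $\rho$ overshoots $(k-\ell)(1+1/k^\gamma)$. The key point is that in this regime the potential distribution process still converges (since $d\geq k^{1+\epsilon}$ keeps the chain strongly aperiodic and $d$ is still polynomial in $n$, so congestion is respected), but it converges toward a uniform distribution over the \emph{actual} $n$ nodes, not over the $k$ nodes the algorithm "thinks" exist. So each supervisor, after a phase, holds roughly a $1/n$ fraction of the system potential rather than a $1/k$ fraction. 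Since $1/n < 1/k$, the supervisors drain the system more slowly than in the $k=n$ case, and the leftover potential stays high.

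First I would invoke Eq.~\ref{eqfrommult} (with $d\geq k^{1+\epsilon}\geq k$, $c>\log_d\cT+1$) to get, right before the beginning of any phase $h>1$,
\begin{align*}
\left\lVert \frac{\vec{\Phi}_{0,0,h}}{\lVert\vec{\Phi}_{1,1,h-1}\rVert_1} - \frac{\vec{I}}{n}\right\rVert_2^2
\leq \exp(-b\phi_{\min}^2) + \frac{3\cT}{4d^{c-2\cT-4}} \ ,
\end{align*}
and then, exactly as in Lemma~\ref{manycorrect} but with the substitutions $\alpha\rightsquigarrow\beta$, $r\geq 4\cT\beta\ln k/\phi_{\min}^2$, $c\geq 5\beta+2\cT+4$, conclude that this is at most $1/k^{2\beta}$. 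Component-wise this gives, for every node $v$,
$\lvert \Phi_{0,0,h}[v]/\lVert\vec{\Phi}_{1,1,h-1}\rVert_1 - 1/n\rvert \leq 1/k^{\beta}$,
hence each supervisor consumes \emph{at most} a $(1/n + 1/k^{\beta})$ fraction and \emph{at least} a $(1/n - 1/k^{\beta})$ fraction of the system potential per phase. Next I would set up the recursion for the residual potential: since each of the $\ell$ supervisors drains at most $\ell(1/n+1/k^\beta)$ of the total per phase, after $p$ phases the amount consumed by one supervisor is at least
\begin{align*}
\rho \geq \ell(k-\ell)\left(\frac{1}{n}-\frac{1}{k^\beta}\right)\sum_{i=0}^{p-1}\left(1-\ell\left(\frac{1}{n}+\frac{1}{k^\beta}\right)\right)^i
= (k-\ell)\frac{k^\beta n^{-1} - 1}{k^\beta n^{-1}+1}\left(1-\left(1-\ell\left(\tfrac1n+\tfrac1{k^\beta}\right)\right)^p\right),
\end{align*}
where I used the starting potential $\ell(n-\ell)$ and Claim~\ref{manyconservation} for conservation within each phase (valid because $d\geq k^{1+\epsilon}$ exceeds every node's degree as long as nodes stay in probing status; if some node ever sees $\geq d/2$ neighbors the estimate is flagged low and we are done trivially). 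Bounding the geometric-sum tail below by $1-1/k^{2\delta}$ via the choice $p\geq 2\delta\ln k/(\ell(1/n+1/k^\beta))$, it remains to verify the purely algebraic inequality
\begin{align*}
\frac{k^\beta n^{-1}-1}{k^\beta n^{-1}+1}\left(1-\frac{1}{k^{2\delta}}\right) \geq 1+\frac{1}{k^\gamma} \ ,
\end{align*}
which the stated bounds $\beta\geq\log_k(n(2k^\delta+1))$, $\delta>\log_k(nk^\gamma/(nk^\gamma-(n-1)(k^\gamma+1)))$ and $\gamma>\log_k(n-1)$ are designed to make true — these are exactly the conditions ensuring $k^\beta/n$ is large enough and the $(1-k^{-2\delta})$ and $k^{-\gamma}$ corrections are small enough. (Note $k^\beta/n - 1 > 0$ follows from $\beta > \log_k n$, and more importantly $k^\gamma > n-1$ gives $(k-\ell)\cdot 1 > (k-\ell)(1+1/k^\gamma)$ room relative to the target, so the residual genuinely overshoots.)

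The main obstacle I anticipate is \textbf{not} the convergence argument — that is a near-verbatim replay of Theorem~\ref{thm:mult}/Lemma~\ref{manycorrect} — but rather pinning down the chain of elementary inequalities so that all four parameter constraints ($\beta$, $\delta$, $\gamma$, and the interplay with $p$) simultaneously yield $\rho > (k-\ell)(1+1/k^\gamma)$ with the correct direction of every rounding, keeping in mind that here we want a \emph{lower} bound on $\rho$ (the opposite direction from the upper-bound half of Lemma~\ref{manycorrect}), so the roles of the "$+1/k^\beta$" and "$-1/k^\beta$" error terms in the numerator/denominator and in the geometric ratio must be tracked carefully. A secondary subtlety is justifying that the analysis is allowed to proceed at all in this case: one must argue that if the node-degree bound $d/2$ or the first-phase threshold $\tau$ alarm fires, the lemma's conclusion (estimate detected as low, which is the eventual use of this bound) is already obtained, so we may assume throughout that all nodes remain in probing status and the clean potential recursion applies.
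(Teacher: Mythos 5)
Your high-level plan matches the paper's: same Markov-chain convergence bound (via Eq.~\ref{disteq}/\ref{eqfrommult} with $\alpha\rightsquigarrow\beta$), same geometric-series accounting of consumed potential, and the same stopping point that the per-phase fractions lie in $[1/n - 1/k^\beta,\, 1/n + 1/k^\beta]$. You are also right that the interesting work is the elementary inequality chain — but that is exactly where your argument breaks.

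The critical error is in the coefficient of the geometric sum. The initial system potential is $\ell(n-\ell)$ (there are $n-\ell$ supervised nodes, each starting at $\ell$), and carrying that through gives
\begin{equation*}
\rho \;\geq\; \ell(n-\ell)\left(\tfrac{1}{n}-\tfrac{1}{k^\beta}\right)\sum_{i=0}^{p-1}\left(1-\ell\left(\tfrac{1}{n}+\tfrac{1}{k^\beta}\right)\right)^i
= (n-\ell)\,\frac{k^\beta-n}{k^\beta+n}\left(1-\left(1-\ell\left(\tfrac1n+\tfrac1{k^\beta}\right)\right)^p\right),
\end{equation*}
with prefactor $(n-\ell)$, not $(k-\ell)$ as in your display. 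This is not a cosmetic slip: after you strip the prefactor, the "purely algebraic inequality" you propose to verify,
\begin{equation*}
\frac{k^\beta/n-1}{k^\beta/n+1}\left(1-\frac{1}{k^{2\delta}}\right) \;\geq\; 1+\frac{1}{k^\gamma},
\end{equation*}
is \emph{false for all parameter values}: the left side is a product of two quantities each strictly less than $1$, while the right side is strictly greater than $1$. The whole engine of the lemma is the ratio $(n-\ell)/(k-\ell) > 1$, which is available precisely because $k<n$; once the correct prefactor $(n-\ell)$ is kept, the paper's chain is $(n-\ell)\frac{k^\beta-n}{k^\beta+n}(1+1/k^\delta)(1-1/k^\delta)\geq (n-\ell)(1-1/k^\delta) > (k-\ell)(1+1/k^\gamma)$, with the second inequality using $k<n$ together with the $\delta$- and $\gamma$-constraints. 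Your proposal never invokes $(n-\ell)>(k-\ell)$; the parenthetical remark that ``$k^\gamma>n-1$ gives $(k-\ell)\cdot 1 > (k-\ell)(1+1/k^\gamma)$ room'' is backwards (that inequality is also impossible) and reflects that the actual source of slack was not identified. A secondary point: you invoke Claim~\ref{manyconservation} assuming nodes remain in probing status; the paper simply uses it as stated, so this is fine but you should say clearly (as you started to) that if an alarm fires the conclusion holds trivially because $\rho$ is never compared — the supervisor's status is already set to ``low.''
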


\begin{proof}
Given that $d\geq n$ and $c > \log_d \cT + 1$, we have from Eq.~\ref{disteq} that right before the beginning of any phase $h>1$ 
\begin{align*}
\left|\left|\frac{\vec{\Phi}_{0,0,h}}{||\vec{\Phi}_{1,1,h-1}||_1} - \frac{\vec{I}}{n}\right|\right|_2^2 
&\leq \exp\left(-b\phi_{\min}^2\right)
+ \frac{3\cT}{4d^{c-2\cT-4}}
\ .
\end{align*}

Given that $r\geq 4\cT \beta \ln k / \phi_{\min}^2$ and each block has $\cT$ rounds, it is $b\geq 4\beta \ln k / \phi_{\min}^2$. Thus, 
\begin{align*}
\left|\left|\frac{\vec{\Phi}_{0,0,h}}{||\vec{\Phi}_{1,1,h-1}||_1} - \frac{\vec{I}}{n}\right|\right|_2^2 
&\leq \exp\left(-4\beta  \ln k\right)
+ \frac{3\cT}{4d^{c-2\cT-4}}
\ = \ 
\frac{1}{k^{4\beta}}
+ \frac{3\cT}{4d^{c-2\cT-4}}
\ .
\end{align*}

Given that $c\geq 5\beta+2\cT+4$ and $d>k$ we have that
\begin{align*}
\left|\left|\frac{\vec{\Phi}_{0,0,h}}{||\vec{\Phi}_{1,1,h-1}||_1} - \frac{\vec{I}}{n}\right|\right|_2^2 
&\leq \frac{1}{k^{4\beta}}
+ \frac{3\cT}{4d^{5\beta}}
\ .
\end{align*}

And for $d\geq k$ and $\beta\geq \log_k(3\cT)$, it is
\begin{align*}
\left|\left|\frac{\vec{\Phi}_{0,0,h}}{||\vec{\Phi}_{1,1,h-1}||_1} - \frac{\vec{I}}{n}\right|\right|_2^2 
&\leq \frac{5}{4k^{4\beta}}
\leq \frac{1}{k^{2\beta}}\ , \textrm{ for $\beta\geq1/2$ and $k \geq 2$} \ .
\end{align*}

Given that $(\Phi_{0,0,h}[v]/||\vec{\Phi}_{1,1,h-1}||_1 - 1/n)^2 \leq \left|\left|\frac{\vec{\Phi}_{0,0,h}}{||\vec{\Phi}_{1,1,h-1}||_1} - \frac{\vec{I}}{n}\right|\right|_2^2$ for any node $v$, we have that $(\Phi_{0,0,h}[v]/||\vec{\Phi}_{1,1,h-1}||_1-1/n)^2 \leq 1/k^{2\beta}$ and hence $\Phi_{0,0,h}[v] \geq (1/n - 1/k^{\beta}) ||\vec{\Phi}_{1,1,h-1}||_1$.
The latter is true for any initial distribution and any phase $h$.
Therefore, after each phase a \ldr node consumes 
at least $1/n-1/k^{\beta}$ fraction of the total
potential in the system, and the total potential in the system drops by at most $\ell(1/n+1/k^{\beta})$ fraction. 
Recall that the initial overall potential in the system is $\ell(n-\ell)$, and that by Claim~\ref{manyconservation}, if $d>n$, the overall potential in the system does not change during each phase.

Using the latter observations, after $p$ phases, any given \ldr node consumes {\em at least}
\begin{align*}
\rho &\geq \ell(n-\ell) \left(\frac{1}{n} - \frac{1}{k^{\beta}}\right) \sum_{i=0}^{p-1} \left(1- \ell\left(\frac{1}{n}+\frac{1}{k^{\beta}}\right)\right)^i.
\end{align*}
Given that $0<\ell\left(\frac{1}{n}+\frac{1}{k^{\beta}}\right)<1$ for $\beta\geq 2$ and $k>\ell$, we have that
\begin{align*}
\rho &\geq \ell(n-\ell) \left(\frac{1}{n} - \frac{1}{k^{\beta}}\right) \frac{1-\left(1- \ell\left(\frac{1}{n}+\frac{1}{k^{\beta}}\right)\right)^p}{1-\left(1- \ell\left(\frac{1}{n}+\frac{1}{k^{\beta}}\right)\right)} 
\ = \ 
(n-\ell)  \frac{ k^{\beta}-n}{k^{\beta}+n}
\left(1-\left(1- \ell\left(\frac{1}{n}+\frac{1}{k^{\beta}}\right)\right)^p\right)
\ .
\end{align*}
Again using that $0<\ell\left(\frac{1}{n}+\frac{1}{k^{\beta}}\right)<1$ for $\beta\geq 2$ and $k>\ell$, 
and given that $1-x\leq e^{-x}$ for any $0<x<1$~\cite{book:mitrinovic},
we have 
\begin{align*}
\rho &\geq (n-\ell) \frac{ k^{\beta}-n}{k^{\beta}+n}
\left(1-\exp\left(- p\ell\left(\frac{1}{n}+\frac{1}{k^{\beta}}\right)\right)\right) \ ,
\textrm{ replacing } p\geq \frac{2\delta\ln k}{\ell\left(\frac{1}{n}+\frac{1}{k^{\beta}}\right)} \ ,\\ 
&\geq (n-\ell) \frac{ k^{\beta}-n}{k^{\beta}+n} \left(1- \frac{1}{k^{2\delta}}\right)
\ \ge \ 
(n-\ell) \frac{ k^{\beta}-n}{k^{\beta}+n} \left(1+ \frac{1}{k^{\delta}}\right) \left(1- \frac{1}{k^{\delta}}\right)
\ \ge \ 
(n-\ell) \left(1- \frac{1}{k^{\delta}}\right) \ .
\end{align*}
The latter inequality holds for $\beta \geq \log_k (n(2k^{\delta} + 1))$.
Then, to complete the proof, it is enough to show that
\begin{align*}
(n-\ell) \left(1- \frac{1}{k^{\delta}}\right) &> (k-\ell)\left(1+\frac{1}{k^\gamma}\right)
\ .
\end{align*}
Which is true for $k<n$, $\delta> \log_k (nk^\gamma/(nk^\gamma-(n-1)(k^\gamma+1)))$ and $\gamma>\log_k (n-1)$.
\end{proof}


We now consider the case $k^{1+\epsilon}<n$.
We focus on the first phase. 
We define a threshold $\tau$ and a number of rounds $r$ such that, after the phase is completed, all nodes that have potential above $\tau$ can send an alarm to the leader, as such potential indicates that the estimate is low. 

In order to do that, we first establish an upper bound of at most $k^{1+\epsilon}$ nodes with potential at most $\tau$ at the end of the first phase (Lemma~\ref{manyunalarmed}).
Given that $k^{1+\epsilon}<n$, using this lemma we know that there is at least one node with potential above $\tau$ at the end of the first phase.
Second, we show that if the estimate is not low, that is $k\geq n$, then all nodes have potential at most $\tau$ at the end of the first phase (Lemma~\ref{nolowalarm}).  
That is, a potential above $\tau$ can only happen when indeed the estimate is low.
Finally, we show that if $k^{1+\epsilon}<n$ an alarm ``low'' initiated by nodes with potential above $\tau$ must be received after $k^{1+\epsilon}$ further rounds of communication (Lemma~\ref{manyalarmsoon}).

\begin{lemma}
\label{manyunalarmed}
Consider an \ADCS 
with a $\cT$-connected evolving graph topology
with $\ell>0$ \ldrs and $n-\ell>0$ \notldr nodes 
running the \algname protocol.
For $\epsilon>0$,
after running the first phase, 
there are at most $k^{1+\epsilon}$ nodes that 
have potential at most $\tau=\ell(1-\ell/k^{1+\epsilon})$.
\end{lemma}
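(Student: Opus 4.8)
The plan is to track the overall potential during the first phase and use a counting/averaging argument to bound how many nodes can end up below the threshold $\tau$. At the start of the first phase, every \notldr node holds potential $\ell$ and every \ldr node holds $0$, so the total potential is exactly $\ell(n-\ell)$. I would first invoke Claim~\ref{manyconservation}: as long as no node has seen more than $d/2$ neighbors (i.e. no node has raised the ``low'' alarm via Lines~\ref{leaderAlg}.\ref{alarminsecondleader} or~\ref{otherAlg}.\ref{alarminsecondother}), the total potential is conserved throughout the phase, so at the end of the first phase $\sum_{v\in V}\Phi[v] = \ell(n-\ell)$ still. If on the other hand some node did see $\geq d/2$ neighbors, then that node sets its potential to $\ell > \tau$ (since $\tau = \ell(1-\ell/k^{1+\epsilon}) < \ell$), so it is counted among the ``alarmed'' high-potential nodes, not among the $k^{1+\epsilon}$ low ones; more carefully, I would argue that the conservation bound degrades only in the favorable direction, or simply handle the two cases separately.

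Next comes the key step: I would suppose, for contradiction, that strictly more than $k^{1+\epsilon}$ nodes have potential $\le \tau$ at the end of the first phase, and derive a contradiction with the total potential being $\ell(n-\ell)$. The subtlety is that the remaining nodes could have large potential, so a naive ``everyone below $\tau$'' bound is not enough. Instead I expect one needs an upper bound on the potential any single node can hold. By Claim~\ref{manypotbounds} potentials stay nonnegative, and one can show an individual node's potential never exceeds its initial value plus a truncation slack — intuitively, a node that only ever receives from neighbors with potential $\le \ell$ cannot exceed roughly $\ell$, since the update is a convex-combination-like step (each node keeps at least a $1-|N|/d > 1/2$ fraction and receives weighted shares bounded by $\ell/d$ each, plus an $O(|N|/d^c)$ truncation error). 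So every node has potential in $[0,\ell + \text{negligible}]$. Then if $m$ is the number of low-potential ($\le\tau$) nodes, the total potential is at most $m\tau + (n-m)\ell \cdot(1+o(1))$. Setting this $\ge \ell(n-\ell)$ and using $\tau = \ell - \ell^2/k^{1+\epsilon}$ gives, after cancelling $\ell$, roughly $n - m\,(\ell/k^{1+\epsilon}) \ge n-\ell$, i.e. $m \le k^{1+\epsilon}$, which is exactly the claim. I would absorb the truncation slack into the inequality by noting it is polynomially small in $k$ under the parameter settings ($c$ large), so it does not affect the integer count.

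The main obstacle I anticipate is making the per-node upper bound $\Phi[v] \le \ell + o(1)$ fully rigorous in the presence of truncation and of the $d/2$-neighbor cap — one has to check that the truncated update $\Phi[u] + \sum_{v\in N}\lfloor d^{c-1}\Phi_v\rfloor/d^c - |N|\lfloor d^{c-1}\Phi_u\rfloor/d^c$ really does stay bounded by the max initial potential plus a small error, rather than drifting upward over $r$ rounds. The clean way is an invariant argument: show by induction on rounds that $\max_v \Phi[v] \le \ell + (\text{round index})\cdot(\text{per-round slack})$, where the per-round slack is $O(|N|/d^c) = O(1/d^{c-1})$ because truncation is always downward (the floor can only lose mass, so a node never gains more than the slack from its own truncated term being subtracted); over $r \le \mathrm{poly}(k)$ rounds this accumulates to at most $\mathrm{poly}(k)/d^{c-1}$, negligible for $d\ge k^{1+\epsilon}$ and $c$ as chosen. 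A secondary bookkeeping point is handling nodes that raised the alarm mid-phase: their potential is reset to exactly $\ell > \tau$, so they land squarely in the ``high'' group and only help the inequality. With those two points dispatched, the counting argument in the previous paragraph closes the proof.
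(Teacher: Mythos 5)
Your approach is essentially the same as the paper's: conserve the total potential via Claim~\ref{manyconservation}, bound each node's potential above by roughly $\ell$, bound the low nodes' potential above by $\tau$, and solve the resulting inequality to get $|L|\leq\ell^2/(\ell-\tau)=k^{1+\epsilon}$. The paper expresses this same computation through the \emph{slack} $s[x]=\ell-\Phi[x]$ (which sums to $\ell^2$ under conservation, and is at least $\ell-\tau$ on each low node), but that is only a bookkeeping difference from your direct potential count.

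One place where you are vaguer than the paper, and where some care is actually needed, is what you call ``case 2'': when a node sees $\geq d/2$ neighbors (or inherits a ``low'' status) and resets its potential to $\ell$. You assert that the conservation bound ``degrades only in the favorable direction,'' but this is not immediate: when such a node $x$ resets, the shares that $x$'s still-probing neighbors subtracted for the link to $x$ are never absorbed by $x$, so total potential can \emph{drop}, not just rise, which would loosen the slack budget you rely on. The paper addresses this with a round-by-round comparison between the alarmed execution and the hypothetical no-alarm one, arguing the slack held by $L$ can only shrink (since the alarmed node lands in $H$ with slack $0$, and its neighbors only lose slack). Your proposal would need a similar explicit coupling argument here; simply observing that the alarmed node itself leaves $L$ does not account for the potential its neighbors lose into the void.
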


\begin{proof}
We define the \emph{slack} of node $x$ at the beginning of round $\langle i,j,h\rangle$ as $s_{i,j,h}[x]=\ell-\Phi_{i,j,h}[x]$ and the vector of slacks at the beginning of round $\langle i,j,h\rangle$ as $\vec{s}_{i,j,h}$. In words, the slack of a node is the ``room'' for additional potential up to $\ell$. 
Recall that the overall potential at the beginning of round $\langle 1,1,1 \rangle$ is $||\vec{\Phi}_{1,1,1}||_1=(n-\ell)\ell$. 
Also notice that for any round and any node $x$ the potential of $x$ is non-negative as shown in Claim~\ref{manypotbounds}.
Therefore, the overall slack 
at the beginning of round $\langle1,1,1\rangle$ 
with respect to 
$\ell$ 
is $||\vec{s}_{1,1,1}||_1=\ell^2$.

Consider a partition of the set of nodes $\{L,H\}$, where $L$ is the set of nodes with potential at most $\tau$ at the end of the first phase, before the \ldr nodes consume their own potential in Line~\ref{leaderAlg}.\ref{consume}. That is, $\Phi_{0,0,2}[x] \leq \tau$ for all $x\in L$ (and $\Phi_{0,0,2}[y] > \tau$ for all $y\in H$).
Assume that the slack held by nodes in $L$ at the end of the first phase is at most the overall slack at the beginning of the phase. That is, $\sum_{x\in L}s_{0,0,2}[x] \leq \ell^2$. 
By definition of $L$, we have that for each node $x\in L$ it is $s_{0,0,2}[x] = \ell-\Phi_{0,0,2}[x] \geq \ell-\tau$.
Therefore,
$|L|(\ell-\tau) \leq \sum_{x\in L} s_{0,0,2}[x] \leq \ell^2$.
Thus, $|L| \leq \ell^2/(\ell-\tau) = k^{1+\epsilon}$ (because $\tau=\ell(1-\ell/k^{1+\epsilon})$) and the claim follows.

Then, to complete the proof, it remains to show that $\sum_{x\in L}s_{0,0,2}[x]\leq \ell^2$.
Let the scenario where $d/2$ is larger than the number of neighbors that each node has in each round of the first phase be called ``case 1'', and ``case 2'' otherwise.
Claim~\ref{manyconservation} shows that in case 1 at the end of the first phase it is $||\vec{\Phi}_{0,0,2}||_1=||\vec{\Phi}_{1,1,1}||_1=\ell(n-\ell)$. Therefore, the slack held by all nodes is $||\vec{s}_{0,0,2}||_1=\ell^2$ and thus the slack held by nodes in $L\subseteq V$ is $\sum_{x\in L}s_{0,0,2}[x]\leq \ell^2$, proving the claim for case 1. We show now that, in fact, case 1 is a worst-case scenario. That is, in the complementary case 2 where some nodes have $d/2$ neighbors or more in one or more rounds, the slack is even smaller. To compare both scenarios we denote the slack for each round $\langle i,j,h\rangle$, each node $x$, and each case $i$ as $s^{(i)}_{i,j,h}[x]$.

Assume that some node $x\in L$ is the first one to have 
$d'\geq d/2$ neighbors. Let 
round $\langle i,j,1\rangle$
be the first one in the first phase when this event happened.
We claim that 
$\sum_{x\in L}s^{(2)}_{i+1,j,1}[x] \leq 
\sum_{x\in L}s^{(1)}_{i+1,j,1}[x]$.
The reason is the following.
Given that at least $d/2$ potentials are received, node $x$ changes its potential to $\ell$ for the rest of the epoch (refer to Lines~\ref{leaderAlg}.\ref{leadertoomany} and~\ref{otherAlg}.\ref{othertoomany}). 
And given that $x\in L$, its potential before changing was at most $\tau<\ell$.
That is, the slack of $x$ is $s_{i+1,j,1}^{(2)}[x]\leq s_{i,j,1}^{(2)}[x]=s_{i,j,1}^{(1)}[x]$. 
Additionally, given that $d'\geq d/2$ the potential shared by $x$ with \emph{all} neighbors during round $\langle i,j,1\rangle$ in case 2 is not less than the potential that $x$ would have shared in case 1 when $|N|<d/2$ (refer to Lines~\ref{leaderAlg}.\ref{potupdate} and~\ref{otherAlg}.\ref{newpot}).
Then, combining both effects (the relative increase in potential of $x$ and its neighbors') 
it is $s_{i+1,j,1}^{(2)}[x]\leq s_{i,j,1}^{(2)}[x]=s_{i,j,1}^{(1)}[x]$.
The same argument applies to all other nodes in $L$ with $d/2$ or more neighbors in round $i,j,1$. Thus, it is
$\sum_{x\in L}s^{(2)}_{i+1,j,1}[x] \leq 
\sum_{x\in L}s^{(1)}_{i+1,j,1}[x]$.

Additionally, for any round $\langle i'j',1\rangle$ after round $\langle i,j,1\rangle$, we have to also consider the case of a node $y\in L$ that, although it does not receive at least $d/2$ potentials, it moves to alarm status ``low'' because it has received such status in round $\langle i'j',1\rangle$. 
Then, notice that the potential of $y$ is $\Phi_{i'+1,j',1}[y] = \ell > \tau \geq \Phi_{i',j',1}[y]$ because $y\in L$, and it will stay in $\ell$ for the rest of the epoch (refer to Lines~\ref{leaderAlg}.\ref{alarminsecondleader} and~\ref{otherAlg}.\ref{alarminsecondother}). Therefore, the slack of $y$ is $s_{i'+1,j',1}^{(2)}[y]\leq s_{i'+1,j',1}^{(1)}[y]$, and
$\sum_{y\in L}s^{(2)}_{i'+1,j',1}[y] \leq 
\sum_{y\in L}s^{(1)}_{i'+1,j',1}[y]$.

Combining all the effects studied over all rounds, 
we get the sought
$\sum_{x\in L}s_{0,0,2}[x] \leq \sum_{x\in L}s_{1,1,1}[x] \leq \ell^2$.

\end{proof}


\begin{lemma}
\label{nolowalarm}
Consider an \ADCS 
with a $\cT$-connected evolving graph topology
with $\ell>0$ \ldrs and $n-\ell>0$ \notldr nodes 
running the \algname protocol 
with parameters
$d>k$,
$p>0$,
$r\geq \cT b$, 
$b\geq \left(5+2\epsilon-2\log_k(k^\epsilon-1) \right) \ln k / \phi_{\min}^2$,
and 
$c\geq 5+2\epsilon-2\log_k(k^\epsilon-1)+2\cT+4+\alpha$,
where
$\epsilon>0$,
and
$\alpha \geq \log_k(3\cT)$.
Then, if
$k \geq n$, at the end of the first phase no individual node has potential larger than $\tau=\ell(1-\ell/k^{1+\epsilon})$. \end{lemma}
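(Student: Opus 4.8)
The plan is to reduce Lemma~\ref{nolowalarm} to the same mixing-plus-truncation estimate already used for \mult and \algname in the case $k=n$, and then verify that the leftover deviation is too small to push any potential above $\tau$. First I would observe that, under the hypothesis $k\ge n$, the first phase is ``clean'': since $d=2k^{1+\epsilon}$ in \algname (Section~\ref{algorithm}), each node has at most $n-1<k\le k^{1+\epsilon}=d/2$ neighbours, so the condition $|N|<d/2$ holds in every round and the ``too many neighbours'' alarm of Lines~\ref{leaderAlg}.\ref{leadertoomany} and~\ref{otherAlg}.\ref{othertoomany} never fires; the only other in-phase trigger, the threshold test $\Phi>\tau$ of Lines~\ref{leaderAlg}.\ref{leaderthreshold} and~\ref{otherAlg}.\ref{otherthreshold}, is evaluated only after all $r$ rounds of phase~$1$. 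An easy induction then shows that no node ever leaves status \emph{probing} during phase~$1$ (a node could only be switched by one of those two triggers, or by receiving a non-probing status, which cannot happen if nobody ever switches). Hence every round of phase~$1$ updates $\Phi$ by the truncated sharing rule, and by Claim~\ref{manyconservation} the total potential stays $\|\vec{\Phi}_{i,j,1}\|_1=\|\vec{\Phi}_{1,1,1}\|_1=\ell(n-\ell)$ throughout the phase.

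Next I would apply the Markov-chain argument behind Eq.~\ref{disteq} (which rests on Theorem~\ref{thm:mihail} and the truncation accounting of Theorem~\ref{thm:mult}), noting that it needs only $\cT$-connectivity, strong aperiodicity — guaranteed here because $|N|<d/2$ so every node keeps more than half of its potential — and an arbitrary initial probability distribution, here the normalized vector placing $1/(n-\ell)$ on each \notldr node and $0$ on each \ldr. The hypotheses $d>k$ and $c$ large suffice for the preconditions of Eq.~\ref{disteq}, so with $b=r/\cT$ blocks it gives
$$\Big\|\tfrac{\vec{\Phi}_{0,0,2}}{\|\vec{\Phi}_{1,1,1}\|_1}-\tfrac{\vec{I}}{n}\Big\|_2^2 \;\le\; \exp\!\big(-b\,\phi_{\min}^2\big)+\frac{3\cT}{4\,d^{c-2\cT-4}}.$$
Then I would substitute the parameter bounds: $b\ge\big(5+2\epsilon-2\log_k(k^\epsilon-1)\big)\ln k/\phi_{\min}^2$ makes the first term at most $k^{-5}(1-k^{-\epsilon})^2$, while $c\ge 5+2\epsilon-2\log_k(k^\epsilon-1)+2\cT+4+\alpha$ with $\alpha\ge\log_k(3\cT)$ and $d>k$ makes the second term at most $(1-k^{-\epsilon})^2/(4k^5)$; summing yields a squared $\ell_2$-distance at most $5(1-k^{-\epsilon})^2/(4k^5)$. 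Since each coordinate deviation is bounded by the $\ell_2$-norm, multiplying by $\|\vec{\Phi}_{1,1,1}\|_1=\ell(n-\ell)$ shows that every node $v$ satisfies $\Phi_{0,0,2}[v]\le \ell\big(1-\tfrac{\ell}{n}\big)+\ell(n-\ell)\,\tfrac{\sqrt5\,(1-k^{-\epsilon})}{2k^{5/2}}$.

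Finally I would compare this with $\tau=\ell\big(1-\tfrac{\ell}{k^{1+\epsilon}}\big)$. Because $k\ge n$ and $\epsilon>0$ force $k^{1+\epsilon}>k\ge n$, the ``ideal'' value $\ell(1-\ell/n)$ already lies strictly below $\tau$, with slack $\ell^2\big(1/n-1/k^{1+\epsilon}\big)$, so the claim reduces to showing the error term is at most this slack. Bounding the error term using $n-\ell<k$ by $\sqrt5\,(1-k^{-\epsilon})/(2k^{3/2})$, and the slack from below using $\ell\ge1$ and $1/n-1/k^{1+\epsilon}\ge(1-k^{-\epsilon})/k$, the inequality collapses to $\sqrt5/(2\sqrt k)\le 1$, which holds since $k\ge\ell+1\ge2$. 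I expect the only genuinely delicate point to be this last balancing: the convergence/truncation residual shrinks only like $k^{-\Theta(1)}$, whereas in the tight regime $k=n$ the slack below $\tau$ is only about $\ell^2(1-k^{-\epsilon})/k$, so one must carry the awkward $-2\log_k(k^\epsilon-1)$ correction in the exponents of $b$ and $c$ precisely to dominate the residual uniformly — including the corner case where $k^{1+\epsilon}$ only barely exceeds $n$. All remaining steps are routine and parallel Theorem~\ref{thm:mult} and Lemma~\ref{manycorrect}.
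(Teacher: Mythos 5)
Your proposal is correct and follows essentially the same route as the paper: both apply the mixing-plus-truncation bound of Eq.~\ref{disteq} to the normalized potential vector at the end of phase~1, substitute the hypotheses on $b$, $c$, $\alpha$ to collapse both error terms into a $k^{-\Theta(1)}(1-k^{-\epsilon})^2$ residual, convert the $\ell_2$-bound to a per-coordinate bound, scale by $\|\vec{\Phi}_{1,1,1}\|_1=\ell(n-\ell)$, and compare to $\tau$ using $k\geq n$ and $\ell\geq 1$; the only difference is cosmetic — you carry the $(1-k^{-\epsilon})$ factor explicitly rather than the exponent correction $-2\log_k(k^\epsilon-1)$, and you discard the constant via $\sqrt5/(2\sqrt k)\le1$ rather than $5/(4k)\le1$ followed by $k-\ell\le\ell k$, which are equivalent. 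Your preliminary paragraph verifying that no node leaves \emph{probing} during phase~1 when $k\geq n$ (so Claim~\ref{manyconservation} and the Markov-chain analysis apply uninterrupted) is implicit in the paper's proof and is a sound thing to make explicit.
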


\begin{proof}
Given that $d\geq n$ and $c > \log_d \cT + 1$, we have from Eq.~\ref{disteq} that right before the beginning of phase $2$ it is
\begin{align*}
\left|\left|\frac{\vec{\Phi}_{0,0,2}}{||\vec{\Phi}_{1,1,1}||_1} - \frac{\vec{I}}{n}\right|\right|_2^2 
&\leq \exp\left(-b\phi_{\min}^2\right)
+ \frac{3\cT}{4d^{c-2\cT-4}}
\ .
\end{align*}

Given that $b\geq \left(5+2\epsilon-2\log_k(k^\epsilon-1) \right) \ln k / \phi_{\min}^2$, it is 
\begin{align*}
\left|\left|\frac{\vec{\Phi}_{0,0,2}}{||\vec{\Phi}_{1,1,1}||_1} - \frac{\vec{I}}{n}\right|\right|_2^2 
&\leq 
\frac{1}{k^{5+2\epsilon-2\log_k(k^\epsilon-1)}}
+ \frac{3\cT}{4d^{c-2\cT-4}}
\ .
\end{align*}

Replacing $\alpha \geq \log_k(3\cT)$ and $d\geq k$ we have that
\begin{align*}
\left|\left|\frac{\vec{\Phi}_{0,0,2}}{||\vec{\Phi}_{1,1,1}||_1} - \frac{\vec{I}}{n}\right|\right|_2^2 
&\leq 
\frac{1}{k^{5+2\epsilon-2\log_k(k^\epsilon-1)}}
+ \frac{1}{4k^{c-2\cT-4-\alpha}}
\ .
\end{align*}

For $c\geq 5+2\epsilon-2\log_k(k^\epsilon-1)+2\cT+4+\alpha$ we have that
\begin{align*}
\left|\left|\frac{\vec{\Phi}_{0,0,2}}{||\vec{\Phi}_{1,1,1}||_1} - \frac{\vec{I}}{n}\right|\right|_2^2 
&\leq 
\frac{5}{4k^{5+2\epsilon-2\log_k(k^\epsilon-1)}}
\leq 
\frac{1}{k^{4+2\epsilon-2\log_k(k^\epsilon-1)}}
\ .
\end{align*}
The latter is true because $k\geq 2$.

Given that for any node $j$, it is $(\Phi_{0,0,2}[j]/||\vec{\Phi}_{1,1,1}||_1 - 1/n)^2 \leq\left|\left|\frac{\vec{\Phi}_{0,0,2}}{||\vec{\Phi}_{1,1,1}||_1} - \frac{\vec{I}}{n}\right|\right|_2^2$, we have that $(\Phi_{0,0,2}[j]/||\vec{\Phi}_{1,1,1}||_1-1/n)^2 \leq 1/k^{4+2\epsilon- 2\ln(k^\epsilon-1)/\ln k}$. Hence, it is  $\Phi_{0,0,2}[j] \leq (1/n + 1/k^{2+\epsilon- \log_k(k^\epsilon-1)})||\vec{\Phi}_{1,1,1}||_1$ for any node $j$.
Moreover, given that $d> k\geq n$ the total potential in the network is $\ell(n-\ell)$ (Claim~\ref{manyconservation}). Thus, no individual node should have potential larger than $\ell(n-\ell)(1/n + 1/k^{2+\epsilon- \log_k(k^\epsilon-1)})$. We show that the latter is at most $\tau=\ell(1-\ell/k^{1+\epsilon})$ as follows. We want to prove
\begin{align*}
\ell(n-\ell)\left(\frac{1}{n} + \frac{1}{k^{2+\epsilon- \log_k(k^\epsilon-1)}}\right) &\leq \ell\left(1-\frac{\ell}{k^{1+\epsilon}}\right)\\
\frac{n-\ell}{k^{2+\epsilon- \log_k(k^\epsilon-1)}} &\leq \frac{\ell}{n}-\frac{\ell}{k^{1+\epsilon}}.
\end{align*}

Given that $k\geq n$, it is enough to show that
\begin{align*}
\frac{k-\ell}{k^{2+\epsilon- \log_k(k^\epsilon-1)}} &\leq \frac{\ell}{k}-\frac{\ell}{k^{1+\epsilon}}\\
k-\ell &\leq \ell (k^\epsilon-1)k^{1- \log_k(k^\epsilon-1)}\\
k-\ell &\leq \ell k \ .
\end{align*}

And the latter is true because $\ell \geq1$.
\end{proof}


The previous lemma shows that, if the estimate is ``not-low'' ($k\geq n$), at the end of the first phase all nodes must have ``low'' potential ($\Phi_{0,0,2}\leq\tau$). 
(Notice the inverse relation between estimate and potential.)
So, to complete the proof of the case $k^{1+\epsilon}<n$ (i.e. low estimate) we show in
the following lemma that if $k^{1+\epsilon}<n$ (i.e. low estimate) there are some nodes with $\Phi_{0,0,2}>\tau$ (i.e. high potential), and that all the other nodes will know this within the following phase.

\begin{lemma}
\label{manyalarmsoon}
Consider an \ADCS 
with a $\cT$-connected evolving graph topology
with $\ell>0$ \ldrs and $n-\ell>0$ \notldr nodes 
running the \algname protocol 
with parameters
$d>k$,
$p>1$
and $r\geq \cT b$,
where
$b \geq \left(5+2\epsilon-2\log_k(k^\epsilon-1) \right) \ln k /
\min\left\{  {\phi_{\min}}^2,\ln(1+i_{\min})\right\}$, and
$\epsilon>0$.
Then, if $k^{1+\epsilon}<n$, 
within the second phase
all nodes receive an alarm status ``low''.
\end{lemma}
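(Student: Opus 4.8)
The plan is to reduce the statement to a broadcast-type argument, but with a twist: the status ``low'' does not start at a single node, it already sits at almost all nodes after phase $1$, and that is exactly what lets a quantity depending only on the current estimate $k$ (not on the unknown $n$) suffice.

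\emph{Step 1 — the ``low'' seed set is co-small.} First I would invoke Lemma~\ref{manyunalarmed}: at the end of the first phase at most $k^{1+\epsilon}$ nodes have potential at most $\tau=\ell(1-\ell/k^{1+\epsilon})$. Since $k^{1+\epsilon}<n$, there are at least $n-k^{1+\epsilon}\ge1$ nodes with potential strictly above $\tau$, and by Lines~\ref{leaderAlg}.\ref{leaderthreshold} and~\ref{otherAlg}.\ref{otherthreshold} every such node sets $status\gets low$ (and $\Phi\gets\ell$) before phase $2$ begins. Let $S$ be the set of nodes holding status ``low'' at the start of phase $2$; then $|S|\ge n-k^{1+\epsilon}$, hence $|V\setminus S|\le k^{1+\epsilon}$. (During phases $1,\dots,p$ a node's status is only ``probing'' or ``low'', so $V\setminus S$ is exactly the set of nodes still ``probing'' at the start of phase $2$.)

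\emph{Step 2 — ``low'' is absorbing and spreads.} I would then record the dynamics of the status bit inside an epoch: once a node is ``low'' it stays ``low'' and in every round it sends $\langle\ell,low\rangle$; any node that is still ``probing'' and hears at least one neighbour whose status is not ``probing'' executes the \textbf{else} branch (Lines~\ref{leaderAlg}.\ref{leadertoomany} and~\ref{otherAlg}.\ref{othertoomany}) and becomes ``low''. Consequently, throughout phase $2$ the set of ``low'' nodes only grows, and ``low'' crosses any communication link that in some round has an already-``low'' endpoint.

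\emph{Step 3 — backward expansion (the main step).} Conceptually split phase $2$ into $b$ blocks of $\cT$ rounds, and for block $j$ let $G_{\cup_j}$ be the union of its $\cT$ constituent graphs, which is connected by $\cT$-connectivity (Observation~\ref{obs:sc}), so $i(G_{\cup_j})\ge i_{\min}$. Fix a node $x$ and suppose $x$ is still ``probing'' at the end of block $t$. Since ``low'' is absorbing, $x$ was ``probing'' throughout block $t$, which, applied round by round via Step 2, forces every $G_{\cup_t}$-neighbour of $x$ to have been ``probing'' at the start of block $t$. Applying the same observation to each of those nodes and block $t-1$, and iterating down to block $1$, produces a nested family $\{x\}=C_t\subseteq C_{t-1}\subseteq\cdots\subseteq C_0$ in which $C_{j-1}$ contains the closed $G_{\cup_j}$-neighbourhood of $C_j$, and $C_0$ consists of nodes still ``probing'' at the start of phase $2$, i.e. $C_0\subseteq V\setminus S$. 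By the isoperimetric expansion already used in the proof of Lemma~\ref{lemma:broadcast} (adding a one-hop neighbourhood multiplies the size by at least $1+i_{\min}$, until all of $V$ is covered), $|C_0|\ge(1+i_{\min})^t$. Combining this with $|C_0|\le|V\setminus S|\le k^{1+\epsilon}$ gives $(1+i_{\min})^t\le k^{1+\epsilon}$, i.e.
\begin{align*}
t\ \le\ \frac{(1+\epsilon)\ln k}{\ln(1+i_{\min})}\ .
\end{align*}

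\emph{Step 4 — conclude.} Any node still ``probing'' after block $t$ therefore satisfies the displayed inequality, so it suffices to check that $b$ exceeds that bound. Since $\min\{\phi_{\min}^2,\ln(1+i_{\min})\}\le\ln(1+i_{\min})$ and, for $k\ge2$ and the constant $\epsilon>0$, $\log_k(k^\epsilon-1)<\log_k(k^\epsilon)=\epsilon$ so that $5+2\epsilon-2\log_k(k^\epsilon-1)\ge 1+\epsilon$, we get
\begin{align*}
b\ \ge\ \frac{\bigl(5+2\epsilon-2\log_k(k^\epsilon-1)\bigr)\ln k}{\min\{\phi_{\min}^2,\ln(1+i_{\min})\}}\ \ge\ \frac{(1+\epsilon)\ln k}{\ln(1+i_{\min})}\ .
\end{align*}
Hence no node can still be ``probing'' at the end of block $b$, i.e. by the end of phase $2$ every node has received status ``low'', as claimed.

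The hard part is Step 3. One cannot simply quote Lemma~\ref{lemma:broadcast}, because that bound is $\cT\ln n/\ln(1+i_{\min})$ rounds and hence needs knowledge of $n$, which the algorithm lacks during counting. The whole argument rests on the fact that the ``low'' status is absent from \emph{at most} $k^{1+\epsilon}$ nodes, so the backward-influence set of any lagging node only has to outgrow $k^{1+\epsilon}$ rather than $n$; tracking that set cleanly through the blocks, and making the round-by-round monotonicity of the status bit rigorous against an adversarial link schedule (in particular that ``probing until the end of block $t$'' really pulls the whole $G_{\cup}$-neighbourhood back into ``probing'' at the block's start), is where the care is needed.
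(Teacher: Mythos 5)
Your proposal is correct and follows the same skeleton as the paper's proof: invoke Lemma~\ref{manyunalarmed} to bound $|L|\le k^{1+\epsilon}$, note that all $H$-nodes set status ``low'' at the phase boundary, and then argue that dissemination needs only about $(1+\epsilon)\ln k/\ln(1+i_{\min})$ blocks rather than $\ln n/\ln(1+i_{\min})$. Where the paper writes only ``using the same argument as Lemma~\ref{lemma:broadcast}, $\ln d/\ln(1+i_{\min})$ blocks are enough,'' your Step~3 supplies the missing justification: running the expansion \emph{backwards} from a still-probing node traps the nested sets $C_j$ inside $L$, so the expansion only has to outgrow $k^{1+\epsilon}$ rather than $n$. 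That is exactly the reason the bound can be phrased in terms of the known $k$ instead of the unknown $n$, and it is the one nontrivial point the paper leaves implicit; your write-up is a faithful completion of it rather than a different route.
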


\begin{proof}
Consider a partition $\{L,H\}$ of the set of nodes, 
where $L$ is the set of nodes with potential at most $\tau$ at the end of the first phase. 
As shown in Lemma~\ref{manyunalarmed} the size of $L$ is at most $k^{1+\epsilon}$, and because $k^{1+\epsilon}<n$ the size of $H$ is at least $1$.

Based on their ``high'' potential (above $\tau$),
and the property proved in Lemma~\ref{nolowalarm} that in case of not-low estimate, $k\geq n$ there would not be any node like them (notice that $b$ fulfills the condition of such lemma),
all nodes in $H$ 
move to alarm status ``low'' at the end of phase $1$ (refer to Lines~\ref{leaderAlg}.\ref{leaderthreshold} or~\ref{otherAlg}.\ref{otherthreshold}).
(Notice the inverse relation between potential and status, which in turn indicates whether the estimate is low or not.)
We want to compute the number of blocks until every node $L$ has received the low alarm. 

Using the same argument used in Lemma~\ref{lemma:broadcast}, it can be proved that $\ln d / \ln(1+i_{\min})$ blocks are enough to disseminate the alarm throughout the network.
Therefore, within the following $(1+\epsilon)\ln k/\ln(1+i_{\min})$ blocks after the beginning of the second phase any $x\in L$ receives the alarm. Given that $b$ is larger, the claim follows.
\end{proof}


Finally, to complete the proof of correctness, we show in the following lemma that if $k>n$, \ldr nodes detect that the potential consumed is too low for the estimate $k$ to be correct. 

\begin{lemma}
\label{kaboven}
Consider an \ADCS 
with a $\cT$-connected evolving graph topology
with $\ell>0$ \ldrs and $n-\ell>0$ \notldr nodes 
running the \algname protocol 
with parameters $d>k$,
$$p\leq 2\delta\ln k \frac{1-\ell\left(\frac{1}{n}-\frac{1}{k^{\beta}}\right)}{\ell\left(\frac{1}{n}-\frac{1}{k^{\beta}}\right)} \ ,$$
$r\geq \cT b, \ b\geq 4\beta \ln k / \phi_{\min}^2$,
and $c\geq 5\beta+2\cT+4$,
where
$\beta \geq \log_k (n(2k^{\delta} - 1))$, 
$\gamma>\log_k (n-\ell+1)$, and
$$\delta > \log_k \frac{k^{\gamma}(n-\ell)}{k^\gamma-(n-\ell)-1} \ .$$ 
Then, if
$k>n$, 
the potential $\rho$ consumed by any \ldr node is 
$\rho < (k-\ell)\left(1-1/k^\gamma\right)$.
\end{lemma}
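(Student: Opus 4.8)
The plan is to reuse the structure of the proof of Lemma~\ref{manyksquare}, but now \emph{upper}-bounding the potential $\rho$ consumed by a \ldr and exploiting the \emph{upper} bound on the number of phases $p$ (this is the one place in the four-case analysis where $p$ is bounded from above rather than below). First I would invoke the per-phase mixing estimate already derived as Eq.~\ref{disteq}: since $d>k\geq n$, $b\geq 4\beta\ln k/\phi_{\min}^2$, $c\geq 5\beta+2\cT+4$ and (as in Lemma~\ref{manyksquare}, $\beta\geq\log_k(3\cT)$), the same chain of manipulations gives $||\vec\Phi_{0,0,h}/||\vec\Phi_{1,1,h-1}||_1 - \vec I/n||_2^2 \leq 1/k^{2\beta}$, and hence, for every phase $h$ and every node $v$, the potential consumed at the end of phase $h$ satisfies $(1/n-1/k^\beta)||\vec\Phi_{1,1,h}||_1 \leq \Phi_{0,0,h+1}[v]\leq (1/n+1/k^\beta)||\vec\Phi_{1,1,h}||_1$ (this holds for any starting distribution of that phase, exactly as remarked in Lemma~\ref{manyksquare}). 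Since $d>n$, Claim~\ref{manyconservation} says potential is conserved within each phase apart from what the $\ell$ \ldrs remove at its end, so, writing $\Psi_h=||\vec\Phi_{1,1,h}||_1$ and $x=\ell(1/n-1/k^\beta)$, we get $\Psi_1=\ell(n-\ell)$ and $\Psi_{h+1}\leq \Psi_h(1-x)$.

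Summing the per-phase consumption of a fixed \ldr over the $p$ phases and using $\Psi_h\leq \ell(n-\ell)(1-x)^{h-1}$ together with the geometric-series closed form yields
\[
\rho \ \leq\ (n-\ell)\,\frac{k^\beta+n}{k^\beta-n}\,\bigl(1-(1-x)^p\bigr) \ .
\]
The key step is to lower-bound $(1-x)^p$: from $-\ln(1-x)\leq x/(1-x)$ for $0<x<1$, the hypothesis $p\leq 2\delta\ln k\,(1-x)/x$ gives $p\cdot(-\ln(1-x))\leq 2\delta\ln k$, so $(1-x)^p\geq k^{-2\delta}$ and $1-(1-x)^p\leq 1-k^{-2\delta}$. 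Meanwhile $\beta\geq\log_k\bigl(n(2k^\delta-1)\bigr)$ gives $k^\beta-n\geq 2n(k^\delta-1)$ and therefore $\tfrac{k^\beta+n}{k^\beta-n}\leq \tfrac{k^\delta}{k^\delta-1}$. Since $\tfrac{k^\delta}{k^\delta-1}\bigl(1-k^{-2\delta}\bigr)=1+k^{-\delta}$, we arrive at the clean bound $\rho\leq (n-\ell)\bigl(1+k^{-\delta}\bigr)$.

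Finally I would close the elementary inequality $(n-\ell)(1+k^{-\delta})<(k-\ell)(1-k^{-\gamma})$. Using $k>n$, i.e.\ $k-\ell\geq n-\ell+1$, it suffices to prove $(n-\ell)k^{-\delta}+(n-\ell+1)k^{-\gamma}<1$; the hypothesis $\gamma>\log_k(n-\ell+1)$ makes the second summand less than $1$, and $\delta>\log_k\bigl(k^\gamma(n-\ell)/(k^\gamma-(n-\ell)-1)\bigr)$ is exactly the headroom needed for the first, finishing the proof. I expect the main obstacle to be purely bookkeeping: making sure the per-phase potential estimate is legitimately applied from the (non-uniform, possibly alarm-perturbed) distribution each phase starts from — as already observed in Lemma~\ref{manyksquare} — and checking that all side conditions ($0<x<1$, $\beta\geq 2$, $k>\ell$, $k\geq 2$, $k^\gamma>n-\ell+1$) hold simultaneously under the stated parameter constraints; the inequality chain itself is routine once these are in place.
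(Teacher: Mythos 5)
Your proposal is correct and follows essentially the same route as the paper's proof: invoke Eq.~\ref{disteq} to get the per-phase bound $\left|\left|\vec{\Phi}_{0,0,h}/||\vec{\Phi}_{1,1,h-1}||_1 - \vec{I}/n\right|\right|_2^2 \leq 1/k^{2\beta}$, deduce the two-sided per-phase consumption estimate, use Claim~\ref{manyconservation} to drive the geometric series, bound $(1-x)^p$ from below via $-\ln(1-x)\leq x/(1-x)$ and the hypothesis on $p$, obtain $\rho\leq(n-\ell)(1+k^{-\delta})$, and close with the elementary inequality against $(k-\ell)(1-k^{-\gamma})$. The only cosmetic difference is that you route through the clean intermediate bound $\frac{k^\beta+n}{k^\beta-n}\leq\frac{k^\delta}{k^\delta-1}$ and the identity $\frac{k^\delta}{k^\delta-1}(1-k^{-2\delta})=1+k^{-\delta}$, whereas the paper factors $1-k^{-2\delta}=(1+k^{-\delta})(1-k^{-\delta})$ and absorbs $\frac{k^\beta+n}{k^\beta-n}(1-k^{-\delta})\leq 1$; both use the same condition on $\beta$ and give the same result.
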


\begin{proof}
Given that $d> k > n$ and $c \geq 5\beta+4\cT+2 > \log_d 2\cT +1$, we have from Eq.~\ref{disteq} that right before the beginning of any phase $h>1$ it is
\begin{align*}
\left|\left|\frac{\vec{\Phi}_{0,0,h}}{||\vec{\Phi}_{1,1,h-1}||_1} - \frac{\vec{I}}{n}\right|\right|_2^2 
&\leq \exp\left(-b\phi_{\min}^2\right)
+ \frac{3\cT}{4d^{c-2\cT-4}}.
\end{align*}

For a number of blocks $b\geq 4\beta \ln k / \phi_{\min}^2$, it is
\begin{align*}
\left|\left|\frac{\vec{\Phi}_{0,0,h}}{||\vec{\Phi}_{1,1,h-1}||_1} - \frac{\vec{I}}{n}\right|\right|_2^2 
&\leq \exp\left(-4\beta\ln k\right)
+ \frac{3\cT}{4d^{c-2\cT-4}}
\ = \
\frac{1}{k^{4\beta}}
+ \frac{3\cT}{4d^{c-2\cT-4}}
\ .
\end{align*}

Given that $c\geq 5\beta+2\cT+4$ and $d>k$ we have that
\begin{align*}
\left|\left|\frac{\vec{\Phi}_{0,0,h}}{||\vec{\Phi}_{1,1,h-1}||_1} - \frac{\vec{I}}{n}\right|\right|_2^2 
&\leq \frac{1}{k^{4\beta}}
+ \frac{3\cT}{4d^{5\beta}}
\ .
\end{align*}

And for $d\geq k$ and $\beta\geq \log_k(3\cT)$, it is
\begin{align*}
\left|\left|\frac{\vec{\Phi}_{0,0,h}}{||\vec{\Phi}_{1,1,h-1}||_1} - \frac{\vec{I}}{n}\right|\right|_2^2 
&\leq \frac{5}{4k^{4\beta}}
\leq \frac{1}{k^{2\beta}}\ , \textrm{ for $\beta\geq1/2$ and $k \geq 2$}
\ .
\end{align*}

For any node $j$, given that $(\Phi_{0,0,h}[j]/||\vec{\Phi}_{1,1,h-1}||_1 - 1/n)^2 \leq \left|\left|\frac{\vec{\Phi}_{0,0,h}}{||\vec{\Phi}_{1,1,h-1}||_1} - \frac{\vec{I}}{n}\right|\right|_2^2$ we have that $(\Phi_{0,0,h}[j]/||\vec{\Phi}_{1,1,h-1}||_1 - 1/n)^2 \leq 1/k^{2\beta}$ and hence $\Phi_{0,0,h}[j] \geq (1/n - 1/k^{\beta})||\vec{\Phi}_{1,1,h-1}||_1$.
The latter is true for any initial distribution.
Therefore, after each phase a \ldr node consumes 
at most $1/n+1/k^{\beta}$ fraction of the total
potential in the system, and the total potential in the system drops by at least $\ell(1/n-1/k^{\beta})$ fraction. 
Recall that the initial overall potential in the system is $\ell(n-\ell)$, and that by Claim~\ref{manyconservation}, if $d>n$, the overall potential in the system does not change during each phase.

Using the latter observations, after $p$ phases, any given \ldr node consumes {\em at most}
\begin{align*}
\rho &\leq \ell(n-\ell) \left(\frac{1}{n} + \frac{1}{k^{\beta}}\right) \sum_{i=0}^{p-1} \left(1- \ell\left(\frac{1}{n}-\frac{1}{k^{\beta}}\right)\right)^i.
\end{align*}
Given that $0<\ell\left(\frac{1}{n}-\frac{1}{k^{\beta}}\right)<1$ for $\beta\geq 1$ and $k>n>\ell$, we have that
\begin{align*}
\rho &\leq \ell(n-\ell) \left(\frac{1}{n} + \frac{1}{k^{\beta}}\right) \frac{1-\left(1- \ell\left(\frac{1}{n}-\frac{1}{k^{\beta}}\right)\right)^p}{1-\left(1- \ell\left(\frac{1}{n}-\frac{1}{k^{\beta}}\right)\right)} 
(n-\ell)  \frac{ k^{\beta}+n}{k^{\beta}-n}
\left(1-\left(1- \ell\left(\frac{1}{n}-\frac{1}{k^{\beta}}\right)\right)^p\right)
\ .
\end{align*}

Again using that $0<\ell\left(\frac{1}{n}-\frac{1}{k^{\beta}}\right)<1$ for $\beta\geq 1$ and $k>n>\ell$, we have that
\begin{align*}
\rho &\leq (n-\ell) \frac{ k^{\beta}+n}{k^{\beta}-n}
\left(1-\exp\left(- p\frac{\ell\left(\frac{1}{n}-\frac{1}{k^{\beta}}\right)}{1-\ell\left(\frac{1}{n}-\frac{1}{k^{\beta}}\right)}\right)\right) \ ,
\textrm{ replacing } p\leq 2\delta\ln k \frac{1-\ell\left(\frac{1}{n}-\frac{1}{k^{\beta}}\right)}{\ell\left(\frac{1}{n}-\frac{1}{k^{\beta}}\right)}\ ,\\ 
&\leq (n-\ell) \frac{ k^{\beta}+n}{k^{\beta}-n} \left(1- \frac{1}{k^{2\delta}}\right)
\ = \ 
(n-\ell) \frac{ k^{\beta}+n}{k^{\beta}-n} \left(1+ \frac{1}{k^{\delta}}\right) \left(1- \frac{1}{k^{\delta}}\right)
\ \le \ 
(n-\ell) \left(1+ \frac{1}{k^{\delta}}\right) \ .
\end{align*}
The latter inequality holds for $\beta \geq \log_k (n(2k^{\delta} - 1))$ and $\delta \geq \log_k (3/2)$, 
The second inequality is true because $\log_k \frac{k^{\gamma}(n-\ell)}{k^\gamma-(n-\ell)-1} > \log_k (3/2)$ for $k>n>\ell>0$. 
Then, to complete the proof, it is enough to show that
\begin{align*}
(n-\ell) \left(1+ \frac{1}{k^{\delta}}\right) &< (k-\ell) \left(1- \frac{1}{k^{\gamma}}\right)
\ ,
\end{align*}
which is true for $k>n$, $\delta > \log_k \frac{k^{\gamma}(n-\ell)}{k^\gamma-(n-\ell)-1}$ and $\gamma>\log_k (n-\ell+1)$. 
Hence, the claim follows.
\end{proof}
%
%
%

We establish the correctness and running time of \algname in the following theorem. 

\begin{theorem}
\label{thm:many}
Consider an \ADCS 
with a $\cT$-connected evolving graph topology
with $\ell>0$ \ldrs and $n-\ell>0$ \notldr nodes 
running the \algname protocol 
with parameters: 
\begin{align*}
d &= 2k^{1+\epsilon} \ ,\\ 
p &= \left\lceil \frac{2\ln k}{\ell} \max\left\{ 
\frac{\gamma}{1/k + 1/k^{\alpha}},
\frac{\delta}{1/d+1/k^\beta}
\right\}\right\rceil \ ,\\ 
r &= \lceil\cT b\rceil \ ,\\
\tau &= \ell(1-\ell/k^{1+\epsilon}) \ ,\\
c &\geq 2\cT + 4 + \max\{5\beta , 5 + \alpha + 2\epsilon - 2\log_k(k^\epsilon-1)\} \ ,
\end{align*}
where the number of blocks $b$ is the following. 

\emph{(i)} 
If the \iso $i_{\min}$ is known:
\begin{align}
b =  
\max\left\{\alpha,\beta, 5+2\epsilon-2\log_k (k^\epsilon-1)\right\}
2^{2\cT(2+\epsilon)}
\frac{n^{2\cT(1+\epsilon)}}{i_{\min}^2} 
\ln k.
\label{isopbound}
\end{align}

\emph{(ii)} Otherwise:
\begin{align*}
b =  
\max\left\{\alpha,\beta, 5+2\epsilon-2\log_k (k^\epsilon-1)\right\}
2^{2\cT(2+\epsilon)-2}
n^{2+2\cT(1+\epsilon)}
\ln k.
\end{align*}

Then, under the following conditions:
\begin{align*}
\alpha &\geq \max\{1+\gamma+\log_k 3,\log_k (3\cT)\} \ ,\\ 
\beta &\geq \log_k \max\{d(2k^{\delta} + 1),3\cT\} \ ,\\
\gamma &> \log_k (d-1) \ ,\\ 
\delta &> \log_k \frac{dk^\gamma}{k^{\gamma}+1-d} \ ,\\
\epsilon &> 0 \ ,\\
\cT &\in O(1),
\end{align*}
all nodes stop after at most $\sum_{k\in E\cup B} (pr+d)$ rounds of communication and output $n$,
for $E=\{2^i(\ell+1):i=0,1,\dots,\log\lceil n/(\ell+1)\rceil\}$,
and $B=\{(2^{\log\lceil n/(\ell+1)\rceil}-2^i)(\ell+1):i=0,1,\dots,\log\lceil n/(\ell+1)\rceil-2\}$.
\end{theorem}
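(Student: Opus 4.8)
The plan is to combine the four case lemmas --- Lemma~\ref{manycorrect} ($k=n$), Lemma~\ref{manyksquare} ($k<n\le k^{1+\epsilon}$), Lemma~\ref{manyalarmsoon} together with its auxiliaries Lemmas~\ref{manyunalarmed} and~\ref{nolowalarm} ($k^{1+\epsilon}<n$), and Lemma~\ref{kaboven} ($k>n$) --- with the exponential-then-binary search that drives the estimate $k$, and then add up the per-epoch cost. The first, essentially mechanical, step is to check that the single choice of the block count $b$ in the theorem (both the $i_{\min}$-aware form in Eq.~\ref{isopbound} and its $i_{\min}$-free counterpart) dominates every lower bound on $b$ demanded by the case lemmas. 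All of those requirements have the shape $b\ge \Theta(\ln k)/\min\{\phi_{\min}^2,\ln(1+i_{\min})\}$ with the $\Theta(\cdot)$ constant a function of $\alpha,\beta,\epsilon$ --- which is exactly the factor $\max\{\alpha,\beta,5+2\epsilon-2\log_k(k^\epsilon-1)\}$ appearing in the theorem. Since \algname only ever runs with $k<2n$ (the exponential search overshoots $n$ by at most a factor of $2$, and the subsequent binary search stays below that value), we have $d=2k^{1+\epsilon}<2^{2+\epsilon}n^{1+\epsilon}$, so Eq.~\ref{condvsisop} gives $\phi_{\min}\ge i_{\min}/d^{\cT}>i_{\min}/(2^{\cT(2+\epsilon)}n^{\cT(1+\epsilon)})$; substituting this (and, when $i_{\min}$ is unknown, the worst-case bounds $i_{\min}\ge 2/n$, $\ln(1+i_{\min})\ge i_{\min}/2$ and $i_{\min}\le n$) shows the stated $b$ is large enough, while at the same time $c\log d=O(\log n)$ keeps the algorithm within the model's message and memory limits. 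One also verifies that $p$, $r=\lceil\cT b\rceil$, $\tau$ and $c$ meet the remaining side conditions on $\alpha,\beta,\gamma,\delta$ listed in the theorem.

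Second, I would pin down the behaviour of a single epoch with estimate $k$. If $k=n$, no alarm can fire: the ``too-many-neighbours'' test never triggers because every node has at most $n-1<n^{1+\epsilon}=d/2$ neighbours, and the ``first-phase'' threshold test never triggers because Lemma~\ref{nolowalarm} (applicable since $k\ge n$) guarantees every node has potential at most $\tau$ after phase~$1$; hence every node stays in status ``probing'', and Lemma~\ref{manycorrect} places the accumulated $\rho$ of each \ldr inside $[(k-\ell)(1-k^{-\gamma}),(k-\ell)(1+k^{-\gamma})]$, so the \ldrs set the status to ``done'' in Line~\ref{leaderAlg}.\ref{range}. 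If $k<n$, then either $k^{1+\epsilon}<n$, in which case Lemma~\ref{manyalarmsoon} (using $p>1$) delivers alarm ``low'' to all nodes within the second phase, or $k<n\le k^{1+\epsilon}$, in which case Lemma~\ref{manyksquare} gives $\rho>(k-\ell)(1+k^{-\gamma})$ and hence the status is set to ``low'' in Line~\ref{leaderAlg}.\ref{toolow}; in both sub-cases the epoch ends with status ``low'' (a spurious neighbour-count alarm, if it happens, is also ``low'', so it is harmless). If $k>n$, again no false alarm fires (same argument as for $k=n$, since $k\ge n$), and Lemma~\ref{kaboven} gives $\rho<(k-\ell)(1-k^{-\gamma})$, so the status is set to ``high'' in Line~\ref{leaderAlg}.\ref{toobig}. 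Finally, the $d$-round broadcast closing each epoch (Lines~\ref{leaderAlg}.\ref{leadernotification} and~\ref{otherAlg}.\ref{othernotification}) is long enough to carry the epoch's verdict to every node, by the argument behind Lemma~\ref{lemma:broadcast}: in the parameter regime $d=2k^{1+\epsilon}\ge\cT\ln k/\ln(1+i_{\min})$, so all nodes --- \ldrs and \notldrs alike --- leave the epoch with the same status and make the same update to $k$, keeping the system synchronized on the estimate.

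Third, I would track the estimate. It starts at $k=\ell+1<n$ and, as long as the verdict is ``low'', is doubled; after at most $\lceil\log(n/(\ell+1))\rceil$ epochs it therefore reaches $k^\star=2^{\lceil\log(n/(\ell+1))\rceil}(\ell+1)\ge n$, and the estimates visited up to this point form the set $E$. If $k^\star=n$, that epoch returns ``done'' and all nodes output $n$. Otherwise $k^\star>n$, the verdict is ``high'', the search range is narrowed to $[\,k^\star/2+1,\,k^\star-1\,]$, which contains $n$, and a standard binary search on a nonempty interval converges to $k=n$ (verdict ``done'') within $O(\log(n/\ell))$ further epochs; a short case check shows the estimates it visits are contained in $B$. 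Summing the cost $pr+d$ of each epoch over $E\cup B$ yields the claimed bound $\sum_{k\in E\cup B}(pr+d)$; since $|E\cup B|=O(\log(n/\ell))$ and $k<2n$ throughout, the companion Corollary~\ref{cor:time} then simplifies this into the closed form used in Theorem~\ref{thm:a2a}.

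The main obstacle I expect is the conjunction of the first step with the ``no false alarm'' part of the second: one uniform setting of $b$, expressed in terms of $i_{\min}$ (or of $n$), must simultaneously clear the six distinct, $\phi_{\min}$-flavoured thresholds of the case lemmas, which forces a careful chaining of Eq.~\ref{condvsisop} with the bound $k<2n$ and a separate treatment of the $\ln(1+i_{\min})$ term that appears only in Lemma~\ref{manyalarmsoon}; and in the two cases $k\ge n$ one must be scrupulous that neither the neighbour-count alarm nor the first-phase-potential alarm can ever fire, since a single spurious ``low'' or ``high'' verdict there would break the monotone search. The search-tree bookkeeping that identifies the visited estimates with subsets of $E$ and $B$, and the final summation, are then routine.
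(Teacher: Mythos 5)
Your proposal is correct and follows the same decomposition the paper intends: combine the four case lemmas (Lemmas~\ref{manycorrect}, \ref{manyksquare}, \ref{manyalarmsoon} with~\ref{manyunalarmed} and~\ref{nolowalarm}, and~\ref{kaboven}) with the exponential-then-binary search on the estimate $k$, verify the uniform choice of $b$ dominates the per-lemma thresholds via Eq.~\ref{condvsisop} and $k<2n$, check no spurious alarm can fire when $k\ge n$, and sum $pr+d$ over the visited estimates $E\cup B$. The paper's own proof is in fact much terser than yours: it delegates the search-tree and case-combination arguments to the prior MMC analysis in~\cite{KowalskiMicalp19journal} (adapted via Eq.~\ref{condvsisop} for Part (i) and $i_{\min}\ge 2/n$ for Part (ii)) and only spells out the $O(\log n)$-bit constraint check, so your write-up is a fuller, self-contained rendition of the same route rather than a different one. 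Two points you assert that deserve a line of justification if you were to write this out formally: (a) the claim that the closing $d$-round broadcast suffices should be phrased against $\cT\ln n/\ln(1+i_{\min})$, not $\cT\ln k/\ln(1+i_{\min})$, since the diameter involves $n$; and (b) the containment of the binary-search trajectory in the set $B$ is a small but nontrivial bookkeeping fact, not entirely ``routine,'' since the intermediate midpoints need to be matched against the specific form $(2^{\log\lceil n/(\ell+1)\rceil}-2^i)(\ell+1)$.
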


\begin{proof}
The proof of correctness of the computation and running time for Part \emph{(i)} is similar to~\cite{KowalskiMicalp19journal}, adapted to our parameters and applying the bound on conductance in Eq.~\ref{condvsisop}, 
and Part \emph{(ii)} is obtained applying the lower bound $i_{\min}\geq 2/n$ to Equation~\ref{isopbound}.

About the \ADCS limitations, messages sent by nodes are only their status and potential. The status requires only 4 bits. Potentials are truncated to $c \log d \leq c(1+(1+\epsilon)\log (2n))$ bits (refer to Lines~\ref{leaderAlg}.\ref{potupdate} and~\ref{otherAlg}.\ref{newpot})
fulfilling the restrictions on message size and memory access at the same time as long as there exists a $c\in O(1)$, which can be seen replacing tight bounds on the conditions above. 
\end{proof}

In the following corollary we relate the knowledge of network characteristics to the asymptotic running time of~\algname. 

\begin{corollary}
\label{cor:time}
The time complexity of \algname on an \ADCS with $\ell>0$ \ldr nodes and $n-\ell>0$ \notldr nodes is the following.

\emph{(i)} 
If the \iso $i_{\min}$ is known:
\begin{align*}
\widetilde{O}\left( 
 \frac{n^{1+2\cT(1+\epsilon)}}{\ell i_{\min}^2}
\right).
\end{align*}

\emph{(ii)} Otherwise:
\begin{align*}
\widetilde{O}\left( 
\frac{n^{3+2\cT(1+\epsilon)}}
{\ell} 
\right).
\end{align*}

\end{corollary}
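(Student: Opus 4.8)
By Theorem~\ref{thm:many}, \algname halts within $\sum_{k\in E\cup B}(pr+d)$ rounds under the stated parameter settings, so the whole corollary is an asymptotic simplification of that bound. My plan is to substitute the settings of Theorem~\ref{thm:many} and track only the factors that are polynomial in $n$, letting $\widetilde O$ absorb every polylogarithmic factor and every constant depending only on $\epsilon$ and $\cT$. The first thing I would do is check that the auxiliary exponents $\alpha,\beta,\gamma,\delta$ can be chosen as constants (depending only on $\epsilon$ and $\cT$): from the side conditions, take $\gamma$ just above $\log_k(d-1)=\log_k(2k^{1+\epsilon}-1)=O(1)$, then $\delta$ just above $\log_k\frac{dk^\gamma}{k^\gamma+1-d}=O(1)$, then $\beta$ just above $\log_k\max\{d(2k^\delta+1),3\cT\}=O(1)$, and $\alpha$ just above $\max\{1+\gamma+\log_k3,\log_k(3\cT)\}=O(1)$; moreover $5+2\epsilon-2\log_k(k^\epsilon-1)=O(1)$ since $\log_k(k^\epsilon-1)\to\epsilon$ (for the boundedly many small values of $k$ these quantities are trivially $O(1)$). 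Since throughout the execution $\ell+1\le k<2n$, each of these is henceforth a constant and drops into $\widetilde O$.

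Next I would bound the three factors of a single epoch. We have $d=2k^{1+\epsilon}=O(n^{1+\epsilon})$. For $p$: inside the maximum $1/k^{\alpha}\le1/k$ and $1/k^{\beta}=o(1/d)$, so $\frac{\gamma}{1/k+1/k^{\alpha}}=\Theta(k)$ while $\frac{\delta}{1/d+1/k^{\beta}}=\Theta(d)=\Theta(k^{1+\epsilon})$, the latter dominating; hence $p=\widetilde O\!\big(k^{1+\epsilon}/\ell\big)$ (the ceiling adds at most $1$). For $r=\lceil\cT b\rceil$ in case \emph{(i)}: since $\cT=O(1)$, the prefactors $2^{2\cT(2+\epsilon)}$ and $\max\{\alpha,\beta,5+2\epsilon-2\log_k(k^\epsilon-1)\}$ are $O(1)$, so Equation~\ref{isopbound} gives $r=\widetilde O\!\big(n^{2\cT(1+\epsilon)}/i_{\min}^2\big)$ (and the $\min\{\phi_{\min}^2,\ln(1+i_{\min})\}$ needed by Lemma~\ref{manyalarmsoon} is likewise subsumed, since $1/\phi_{\min}^2$ dominates $1/\ln(1+i_{\min})$). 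Thus per epoch $pr+d=\widetilde O\!\big(\frac{k^{1+\epsilon}n^{2\cT(1+\epsilon)}}{\ell\, i_{\min}^2}\big)$, the additive $d=O(n^{1+\epsilon})$ being of lower order (it contributes only $\widetilde O(n^{1+\epsilon})$ in total).

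Finally I would sum over the epochs. $|E\cup B|=O(\log(n/\ell))$ and every $k\in E\cup B$ satisfies $k<2n$; the $k$'s in $E$ form a geometric progression, so $\sum_{k\in E}k^{1+\epsilon}=O(n^{1+\epsilon})$, and the $k$'s in $B$ are all $\Theta(n)$, contributing $\widetilde O(n^{1+\epsilon})$. Hence the total is $\widetilde O\!\big(\frac{n^{1+\epsilon}\,n^{2\cT(1+\epsilon)}}{\ell\, i_{\min}^2}\big)=\widetilde O\!\big(\frac{n^{(1+2\cT)(1+\epsilon)}}{\ell\, i_{\min}^2}\big)$; since $\epsilon>0$ is an arbitrary constant, rescaling it (use $\tfrac{2\cT}{1+2\cT}\epsilon$ in the algorithm) yields $\widetilde O\!\big(\frac{n^{1+2\cT(1+\epsilon)}}{\ell\, i_{\min}^2}\big)$, which is part \emph{(i)}. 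For part \emph{(ii)} I would rerun the same computation with the $b$ of Theorem~\ref{thm:many}\emph{(ii)} — equivalently, substitute the universal bound $i_{\min}\ge2/n$, i.e.\ $1/i_{\min}^2\le n^2/4$ — giving an extra factor $n^2$ and the claimed $\widetilde O\!\big(\frac{n^{3+2\cT(1+\epsilon)}}{\ell}\big)$.

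\textbf{Main obstacle.} There is no new idea; the whole difficulty is bookkeeping. The one delicate point is confirming that the coupled side conditions of Theorem~\ref{thm:many} on $\alpha,\beta,\gamma,\delta$ (and $c$) admit a simultaneous choice in which each exponent is $O(1)$, so that everything outside the $n$-polynomial part collapses into $\widetilde O$; correctly identifying which branch of the $\max$ governs $p$, and the cosmetic $\epsilon$-rescaling that matches the displayed exponent, are the only other things to get right.
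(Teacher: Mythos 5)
Your proof is correct and follows essentially the same route as the paper's: pick the free exponents $\gamma,\delta,\beta,\alpha$ to be $O(1)$ constants, bound $p$, $r$, and $d$ separately, and then sum over the $O(\log(n/\ell))$ epochs, finally passing from $\phi_{\min}$ to $i_{\min}$ via Eq.~\ref{condvsisop} (or plugging in $i_{\min}\geq 2/n$ for part \emph{(ii)}). The paper's version fixes $\gamma=\log_k d$ and $\delta=2(2+\epsilon)$ explicitly and then bounds $p<\lceil 2d\ln d/\ell\rceil$, followed by $r\in O(1/\min\{\phi_{\min}^2,\ln(1+i_{\min})\})$ and a term-count argument — so your decomposition and theirs are the same. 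One small point where you are actually more careful: since $d=\Theta(k^{1+\epsilon})$ and the $\delta/(1/d+1/k^{\beta})$ branch of the $\max$ dominates, $p$ is $\widetilde O(n^{1+\epsilon}/\ell)$ rather than $\widetilde O(n/\ell)$ as the paper's intermediate step states; your observation that this extra $n^{\epsilon}$ can be absorbed by replacing $\epsilon$ with $\tfrac{2\cT}{1+2\cT}\epsilon$ in the algorithm parameters is exactly what is needed to recover the exponent $1+2\cT(1+\epsilon)$ as displayed, and it is the cleanest way to reconcile the bookkeeping with the corollary's stated form.
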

\begin{proof}
Fixing $\gamma = \log_k d$ it would be $\log_k \frac{dk^\gamma}{k^{\gamma}+1-d} = \log_k d^2 = 2(\log_k 2 + 1+\epsilon)$. That is, it is enough to set $\gamma = \log_k d$ and $\delta=2(2+\epsilon)$ to meet the conditions on those parameters in Theorem~\ref{thm:many}. Replacing these and $d=2k^{1+\epsilon}<2(2n)^{1+\epsilon}$,
\begin{align*}
p &< \left\lceil \frac{2d\ln d}{\ell} \right\rceil
\in O\left(\frac{n\ln n}{\ell}\right)
\ .
\end{align*}
Replacing $\gamma$ and $\delta$ in $\alpha$ and $\beta$, and $\cT\in O(1)$ it is
\begin{align*}
r 
&\in O\left( 
\frac{1}
{\min\left\{  {\phi_{\min}}^2,\ln(1+i_{\min})\right\}}\right)
\ .
\end{align*}
Then, it is
\begin{align*}
pr + d &\in O\left( 
\frac{n\ln n }
{\ell \min\left\{  {\phi_{\min}}^2,\ln(1+i_{\min})\right\}}\right)
\ .
\end{align*}
The total number of terms in the summation of the running time in
Theorem~\ref{thm:many} is $O(\log \frac{n}{\ell})$, 
hence
the claim follows.
\end{proof}



\section{Discussion and Open Problems}

Although the presented algorithm is the first that guarantees a polynomial emulation of \allallfull on the top of anonymous congested highly-dynamic (and not necessarily always connected) systems, the main challenges are to further shrink the polynomials and/or provide lower bounds better than long-time known $\Omega(D+n\log n)$, where $D$ stands for temporal diameter.
We hypothesize that shrinking this complexity gap may depend on computational power, for instance, being able to process history trees may allow provably faster \allallfull.


\bibliographystyle{plain}
\bibliography{Comprehensive_2010}

\end{document}